\newtheorem{theorem}{Theorem}
\numberwithin{theorem}{section}
\newtheorem{hypo}[theorem]{Hypothesis}
\newtheorem{lemma}[theorem]{Lemma}
\newtheorem{prop}[theorem]{Proposition}
\newtheorem{cor}[theorem]{Corollary}
\newtheorem{announcement}[theorem]{Announcement}
\theoremstyle{definition}
\newtheorem{definition}[theorem]{Definition}
\newtheorem{bem}[theorem]{Remark}
\numberwithin{equation}{section}
\newcommand{\oline}{\overline}   
\newcommand{\uline}{\underline}   
\newcommand{\Om}{\Omega}                %%%%%%%%%%%%%%%%%%%%%%%%%%%%%%%%%%%
\newcommand{\la}{\langle}
\newcommand{\ra}{\rangle}
\newcommand{\boE}{\mathbf{E}}         
\newcommand{\boI}{\mathbf{I}}         %%%%%%%%%%%%%%%%%%%%%%
\newcommand{\bfone}{\mathbf{1}}
\newcommand{\cA}{\mathcal{A}}
\newcommand{\cB}{\mathcal{B}}
\newcommand{\cF}{\mathcal{F}}
\newcommand{\cI}{\mathcal{I}}
\newcommand{\cR}{\mathcal{R}}
\newcommand{\cS}{\mathcal{S}}
\newcommand{\cT}{\mathcal{T}}
\newcommand{\cU}{\mathcal{U}}
\newcommand{\cW}{\mathcal{W}}
\newcommand{\field}[1]{\mathds{#1}}
\newcommand{\bbone}{\field{1}} 
\newcommand{\HH}{\field{H}}
\newcommand{\RR}{\field{R}}     %%%%%%%%%%%%%%%%%%%%%%%%%%% 
\newcommand{\NN}{\field{N}}     % Blackboard Bold Letters %
\newcommand{\CC}{\field{C}}     %                         %
\newcommand{\bbW}{\field{W}}     
\newcommand{\fD}{\mathfrak{D}}
\newcommand{\fE}{\mathfrak{E}}
\newcommand{\fF}{\mathfrak{F}}
\newcommand{\fH}{\mathfrak{H}}  %%%%%%%%%%%%%%%%%%%%%%%%%%%
\newcommand{\fd}{\mathfrak{d}}  %%%%%%%%%%%%%%%%%%%%%%%%%%%
\newcommand{\fh}{\mathfrak{h}}
\newcommand{\bD}{{\overline D}}  %                         %
\newcommand{\bP}{{\overline P}}  %                         %
\newcommand{\bX}{{\overline X}}
\newcommand{\bchi}{{\overline{\chi}}}
\newcommand{\btheta}{\bar{\theta}}
\newcommand{\ur}{{\underline r}}
\newcommand{\uv}{{\underline v}}
\newcommand{\uw}{{\underline w}}
\newcommand{\hfD}{\widehat{\fD}}
\newcommand{\hH}{\widehat{H}}   
\newcommand{\hfH}{\widehat{\fH}} %%%%%%%%%%%%%%%%%%%%%%%%%%%
\newcommand{\hP}{\widehat{P}}    % Letters with Hat on Top %
\newcommand{\hcR}{\widehat{\mathcal{R}}} %                  %
\newcommand{\hT}{\widehat{T}}    %%%%%%%%%%%%%%%%%%%%%%%%%%%
\newcommand{\hW}{\widehat{W}}
\newcommand{\hpsi}{\hat{\psi}}
\newcommand{\tk}{{\widetilde{k}}} % Letters with Tilde on Top %      
\newcommand{\tuw}{\widetilde{\uw}}
\newcommand{\va}{{\vec{a}}}
\newcommand{\vb}{{\vec{b}}}
\newcommand{\fin}{\mathrm{fin}}
\newcommand{\red}{\mathrm{red}}
\newcommand{\rRe}{\mathrm{Re}}               %%%%%%%%%%%%%%%%%%%%%%%%%%%%%%
\newcommand{\Ran}{\mathrm{Ran}}              % Miscelleanous Math Symbols %
\newcommand{\cirS}{\mathop{\bigcirc\kern -.73em {\scriptstyle{\rm S}}}}
\newcommand{\dom}{\mathrm{dom}}
\newcommand{\cern}{\mathrm{Ker}} 
\newcommand{\op}{\mathrm{op}}
\newcommand{\ph}{{\mathrm{ph}}}
\newcommand{\rmd}{\mathrm{d}}
\newcommand{\free}{{0,0}}
\newcommand{\gI}{{(I)}}
\newcommand{\qHH}{{\HH}}
\newcommand{\qWW}{{\bbW}}
\begin{document}

\title{The Spectral Renormalization Flow Based on the Smooth
  Feshbach--Schur Map: The Introduction of the Semi-Group Property.}

\author{
Volker Bach \\
\small{IAA, TU Braunschweig, Germany (v.bach@tu-bs.de)}
\and
Miguel Ballesteros \\
\small{IIMAS, UNAM, Mexico (miguel.ballesteros@iimas.unam.mx)} 
\and 
Jakob Geisler \\
\small{IAA, TU Braunschweig, Germany (jakob.geisler@tu-bs.de)}
}
\date{20-Jul-2025}

\maketitle

\textbf{Abstract:} The spectral renormalization method is a powerful
mathematical tool that is prominently used in spectral theory in the
context of low-energy quantum field theory and its original
introduction in \cite{BachFroehlichSigal1998a,
  BachFroehlichSigal1998b} constituted a milestone in the
field. Inspired by physics, this method is usually called
renormalization group, even though it is not a group nor a
semigroup (or, more properly, a flow). It was only in 2015
\cite{BachBallesterosFroehlich2015} when a flow (or semigroup)
structure was first introduced using an innovative definition of the
renormalization of spectral parameters. The spectral renormalization
flow in \cite{BachBallesterosFroehlich2015}, however, is not
compatible with the smooth Feshbach--Schur map (this is stated
as an open problem in \cite{BachBallesterosFroehlich2015}), which is a
lamentable weakness because its smoothness is a key feature that
significantly simplifies the proofs and makes it the preferred tool in
most of the literature. In this paper we solve this open problem
introducing a spectral renormalization flow based on the smooth
Feshbach--Schur map.

%\newpage
%%%%%%%%%%%%%%%%%%%%%%%%%%%%%%%%%%%%%%%%%%%%%%%%%%%%%%%%%%%%%%%%%%%
%%%%%%%%%%%%%%%%%%%%%%%%%%%%%%%%%%%%%%%%%%%%%%%%%%%%%%%%%%%%%%%%%%%
%%%%%%%%%%%%%%%%%%%%%%%%%%%%%%%%%%%%%%%%%%%%%%%%%%%%%%%%%%%%%%%%%%%
\section{Introduction} \label{intro}
%%%%%%%%%%%%%%%%%%%%%%%%%%%%%%%%%%%%%%%%%%%%%%%%%%%%%%%%%%%%%%%%%%%
%%%%%%%%%%%%%%%%%%%%%%%%%%%%%%%%%%%%%%%%%%%%%%%%%%%%%%%%%%%%%%%%%%%
%%%%%%%%%%%%%%%%%%%%%%%%%%%%%%%%%%%%%%%%%%%%%%%%%%%%%%%%%%%%%%%%%%%
%  
In this paper, we exhibit new aspects of the smooth Feshbach--Schur
map and draw some conclusions to the operator-theoretic spectral
renormalization that has been constructed to analyze spectra of
Hamiltonians on Fock spaces.

The \textit{smooth Feshbach--Schur map} originates from the
Feshbach projection method, which was introduced in
\cite{Feshbach1958} as a tool in analytic pertubation theory. It is
closely related to the Schur complement \cite{Schur1917} and from a
more general point of view to the Grushin problem
\cite{Grushin1971}. It has been further developed to the Feshbach map
in \cite{BachFroehlichSigal1998b} in order to handle problems in the
spectral analysis of non-isolated eigenvalues and
resonances. Honouring Schur's contribution, we call it the
\textit{Feshbach--Schur map} in this paper. For its definition one
chooses a projection $P$ on some Hilbert space $\fH$. Then an operator
$H$ is mapped to a new operator $\cF_P(H)$\, and the map $H \mapsto
\cF_P(H)$ is called Feshbach--Schur map. It possesses an isospectral
property which, however, in this context does not mean that $\cF_P(H)$
and $H$ have the same spectrum, but rather that $0$ is a spectral
point of $H$ if and only if $0$ is a spectral point of $\cF_P(H)$, as
an operator on $\Ran(P)$, and that the spectral type at $0$ is the
same for both $H$ and $\cF_P(H)$. In particular, $0$ is an eigenvalue
of $H$ if and only if $0$ is an eigenvalue of $\cF_P(H)$, and in this
case the multiplicities agree.

The Feshbach--Schur map has been generalized in
\cite{BachChenFroehlichSigal2003} (see also
\cite{GriesemerHasler2008}) replacing the projections $P$ and $\bP =
\mathbf{1} - P$ by positive operators $0 \leq \chi \leq \mathbf{1}$
and $\bchi = \sqrt{\mathbf{1}-\chi^2}$, respectively. These operators
are usually defined in terms of smooth functions by functional
calculus, with $\chi$ and $\bchi$ being smoothed versions of
characteristic functions. In contrast, (sharp) characteristic
functions give rise to projections and correspond to the original
Feshbach--Schur map. To stress the difference between the two, we call
the latter \textit{sharp} or \textit{projection-based} Feshbach--Schur
map.

The definition of the smooth Feshbach--Schur map additionally requires
the choice of an operator $T$ that commutes with both $\chi$ and
$\bchi$. In the present context, the operator $H$ is the sum of a
\textit{free part, whose spectral properties are assumed to be
  perfectly known and explicitly available,} plus an interaction. It
seems natural to choose the operator $T$ to be this free part, which
is then physically interpreted as the unperturbed energy of the system
under consideration. This common assumption is, however, misleading
because the actual role of $T$ is to solve the mismatch problem
arising from the fact that $\chi$ is not a characteristic function of
the free energy operator, which is the case for the sharp
Feshbach--Schur map. Here we adopt a notation which is consistent with
the fact that operator $T$ is not a fundamental part of the model, but
only a convenient parameter; as opposed to the previous literature, in
this paper we do not regard the operators $H$ and $T$ as a Feshbach
pair but include $T$ as part to the defining parameters of the smooth
Feshbach--Schur map. We stress this new point of view by writing
$F_{\chi,T}(H)$ for the smooth Feshbach--Schur map applied to the
operator $H$. Our formalism introduces innovative features to
construct a new approach to the smooth Feshbach--Schur map which
satisfies a flow (semigroup) property for the first time. This has
been an open problem since 2015, as in
\cite{BachBallesterosFroehlich2015} the first introduction of a
Feshbach--Schur map with a semigroup property was presented, but the
approach crucially depended on using the sharp Feshbach--Schur map. In
the present paper, we characterize the freedom for the choice of $T$
and compare the sharp Feshbach--Schur map to its smooth
counterpart. Concretely, $F_{\chi,T}(H)$ only depends on $T$ on the
range of $\chi \bchi$ (note that $\chi$ and $\bchi$ commute). Hence,
if $\chi$ and $\bchi$ are thought of as smoothed-out versions of
orthogonal spectral projections $P$ and $P^\perp$, this range might
become arbitrarily small. This weak dependence on $T$ allows a far
broader application of the smooth Feshbach--Schur map. We exhibit the
additional freedom in the choice of $T$ by introducing an operator $S$
which is not identical to $T$, but fulfills the same hypothesis and
agrees with $T$ in the overlap region of $\chi$ and $\bchi$. We show
that the smooth Feshbach--Schur maps $F_{\chi,T}(H)$ and
$F_{\chi,S}(H)$ are the same and how $F_{\chi,T}(H)$ can be expressed
in terms of $W_S = H-S$.

As mentioned above, the smooth Feshbach--Schur map is a
generalization of the sharp Feshbach-Schur map in the sense
that, if $\chi$ and $\bchi = \chi^\perp$ are orthogonal projections,
then the smooth Feshbach--Schur map reduces to the
Feshbach--Schur map. A new result of the present paper is that,
conversely, the isospectrality of the smooth Feshbach--Schur map
can be derived from the isospectrality of the sharp
Feshbach--Schur map, provided the latter is set up on a slightly
bigger Hilbert space than the original one.

In \cite{BachChenFroehlichSigal2003}, the smooth Feshbach--Schur is
used to introduce a simpler version of the BFS spectral
renormalization method, and it is further developed in
\cite{BachBallesterosFroehlich2015}. Although the sharp
Feshbach--Schur map and the smooth Feshbach--Schur map might be
equally powerful mathematical tools, the differentiability properties
of the smooth Feshbach--Schur map considerably simplify the
analysis. For this reason, the latter has been the preferred method in
most of the literature. It has been utilized in a variety of problems:
in \cite{BachChenFroehlichSigal2007} and \cite{Chen2008}, it was used
for the renormalization of the electron mass whereas in
\cite{Faupin2008} and \cite{BallesterosFaupinFroehlichSchubnel2015}
the existence of atomic resonances, including the Lamb shift, was
addressed. While life-times of resonances were estimated in
\cite{HaslerHerbstHuber2008} still using the sharp Feshbach--Schur
map, in \cite{GriesemerHasler2008}, the algebraic and analytic
properties of the smooth Feshbach--Schur map were clarified, and the
method was generalized to non-selfadjoint choices for $\chi$ and
$\bchi$. In \cite{GriesemerHasler2009}, analyticity properties of the
ground state in the Standard Model of Non-Relativistic QED were
established using the smooth Feshbach--Schur map. In \cite{Sigal2009},
the existence of resonances in the Standard Model of Non-Relativistic
QED was shown and this for the first time without requiring an
infrared regularization. In \cite{FroehlichGriesemerSigal2011}, the
limiting absorption principle was studied, and in
\cite{HaslerHerbst2011} the existence of ground states in the
Spin-Boson model without an infrared regularization was proved.

Both the smooth and sharp versions of the Feshbach--Schur map are
spectral localization methods satisfying the isospectral property
described above. For the smooth Feshbach--Schur map this
specifically implies that $H-z$ is invertible if and only if
$\cF_{\chi,T-z}(H-z)$, restricted to the range $\Ran(\chi)$ of $\chi$,
is invertible. Hence the spectrum of $H$ corresponds to the points
where (the restriction to $\mathrm{Ran}(\chi)$ of)
$\cF_{\chi,T-z}(H-z)\big|_{\Ran(\chi)}$ is not invertible, whenever
$\cF_{\chi,T-z}(H-z)$ exists. The advantage of the (smooth or sharp)
Feshbach--Schur map is that the domain on which $\cF_{\chi,T-z}(H-z)$
acts excludes the orthogonal complement of $\Ran(\chi)$. This is
frequently rephrased by saying that the \textit{degrees of freedom
  corresponding to the complement of the range of $\chi$ are
  eliminated}. Thus the smaller the range of $\chi$ the better. The
price to pay for the elimination of degrees of freedom is that the set
of spectral points $z$ where the Feschbach-Schur map is defined is the
more reduced, the smaller the range of $\chi$ is, and in general it is
very difficult, if not impossible, to localize these spectral points
if the range of $\chi$ is ``too small''. One of the main applications
of the method is the construction of eigenvalues and resonances. The
first step is to choose an appropriate operator $\chi$ and a first
region $\cU_0 \subset \CC$ where the eigenvalue sought for is
localized and $z \in \cU_0$ is chosen from. A first application of the
smooth Feshbach--Schur map gives more accurate information on
the location of the eigenvalue and allows to choose a new, much
smaller region $\cU_1 \subset \cU_0$ where, for $z \in \cU_1$, a
second application of the smooth Feshbach--Schur map can be
performed. This second application provides a further elimination of
degrees of freedom and a, yet, smaller region $\cU_2 \subset \cU_1$
where the eigenvalue is localized. Proceeding in this vein, more and
more applications of the smooth Feshbach--Schur map generate
ever smaller regions where the eigenvalue can be found. As the number
of applications of the smooth Feshbach--Schur map tends to
infinity, the eigenvalue is reconstructed.

In the case of quantum field theory that we present in this paper, the
operator $\chi$ is a function of the free boson energy operator
$H_\ph$ [see \eqref{free-1}-\eqref{free-2}] and another parameter
$\alpha > 0$. Then $\chi$ is substituted by $\chi_\alpha(H_\ph) \equiv
\chi_\alpha$ [see Definitions~\ref{dfn-smoothfamily} and
  \ref{def:rescaledfsmap}]. In our setting, the range of $\chi_\alpha$
decreases, as $\alpha$ increases, and the set of points $z$ for which
the smooth Feshbach--Schur map is defined gets ever smaller. The
shrinking properties of the ranges of $\chi_\alpha$ and of the set of
spectral parameters is problematic. For this reason, we compose the
smooth Feshbach--Schur map with suitable unitary scaling operators to
compensate for the shrinking. The resulting map is called the
renormalization map and denoted by $\cR_\alpha(H)(z)$. The above
description is valid for every spectral renormalization scheme. In
spite of the fact, however, that these schemes are frequently called
\textit{renormalization group} in the literature, they are not a group
nor even a semigroup or a flow. Only in
\cite{BachBallesterosFroehlich2015} the flow property was established
- but for the sharp Feshbach--Schur map and not for the smooth
Feshbach--Schur map. For this reason, we prefer to call this method
\textit{spectral renormalization}. In the present paper we introduce
the first spectral renormalization scheme which is a flow (or a
semigroup) for the smooth Feshbach--Schur map: In
Theorem~\ref{thm:fullsemigroup}, we prove the flow property for the
spectral renormalization scheme introduced here and show that the
renormalization map obeys
\begin{align}
\cR_{\alpha + \beta}(H) = \cR_\alpha\big( \cR_{\beta}(H) \big),
\end{align}
where we omit the spectral parameter $z$.    

One of the conceptual advantages of the flow property is that in all
previous works it was necessary to construct a sequence of operators
obtained from iterated applications of the renormalization map. The
flow property allows to consider only one application of the
renormalization map for $\alpha$ and take $\alpha $ to infinity. The
eigenvalue whose construction is sketched in the previous paragraphs
belongs to the range of a complex-valued function $E_\alpha$ (the
renormalization of the spectral parameter) that shrinks exponentially,
as $\alpha$ tends to infinity. This provides an approximation of the
eigenvalue with an error decreasing exponentially to zero, as $\alpha$
tends to infinity. Moreover, the renormalization flow that we define
can be constructed for every positive $\alpha$. This allows to take
the derivative with respect to $\alpha$ and obtain a differential
equation that simplifies the formulae. Although a flow was already
derived in \cite{BachBallesterosFroehlich2015}, it was only obtained
for the sharp Feshbach--Schur map. The corresponding result for the
smooth counterpart was regarded an open problem in
\cite{BachBallesterosFroehlich2015}. Here, we solve this problem. The
solution is important because the smooth version is significantly
simpler than the sharp version on a technical level. Moreover, the
smooth version gives rise to a simpler differential equation because
its smoothness allows to take derivatives which are not distributions
(in contrast to the sharp case).

\subsection{Organization of the Paper}

The paper is organized as follows: Section~\ref{short} is a short
version of the paper, in which we describe the mathematical framework
and the main results in a self-contained fashion, omitting proofs and
technicalities. It is itself divided in two parts, namely,
Section~\ref{SFSM} and Section~\ref{RF}. In the former we present our
main results with regard to the theoretical (abstract) study of the
smooth Feshbach--Schur map, the main theorems of this section
being Theorem~\ref{IND} and Theorem~\ref{Tmaintsyeah}. In
Section~\ref{RF}, we present the main results of the paper, i.e., the
flow property of the renormalization map that we define. They are
presented in Theorems~\ref{renspect} and \ref{thm:fullsemigroup}. In
Section~\ref{sec:SFM} we provide the proofs of Section~\ref{SFSM} and
in Subsection~\ref{subsec:sharpsmooth} we derive the isospectrality of
the smooth Feshbach--Schur map from the isospectrality of the
Feshbach--Schur map. In Section~\ref{sec:RG} we derive the proofs of
Section~\ref{RF}. In Section~\ref{Itera}, we announce our forthcoming
results with regard to iterative applications of the renormalization
map.

%%%%%%%%%%%%%%%%%%%%%%%%%%%%%%%%%%%%%%%%%%%%%%%%%%%%%%%%%%%%%%%%%%%
%%%%%%%%%%%%%%%%%%%%%%%%%%%%%%%%%%%%%%%%%%%%%%%%%%%%%%%%%%%%%%%%%%%
%%%%%%%%%%%%%%%%%%%%%%%%%%%%%%%%%%%%%%%%%%%%%%%%%%%%%%%%%%%%%%%%%%%
\section{Mathematical Framework and Main Results} 
\label{short}
%%%%%%%%%%%%%%%%%%%%%%%%%%%%%%%%%%%%%%%%%%%%%%%%%%%%%%%%%%%%%%%%%%%
%%%%%%%%%%%%%%%%%%%%%%%%%%%%%%%%%%%%%%%%%%%%%%%%%%%%%%%%%%%%%%%%%%%
%%%%%%%%%%%%%%%%%%%%%%%%%%%%%%%%%%%%%%%%%%%%%%%%%%%%%%%%%%%%%%%%%%%
%
%%%%%%%%%%%%%%%%%%%%%%%%%%%%%%%%%%%%%%%%%%%%%%%%%%%%%%%%%%%%%%%%%%%
\subsection{A New Approach to the Smooth Feshbach--Schur Map} 
\label{SFSM}
%%%%%%%%%%%%%%%%%%%%%%%%%%%%%%%%%%%%%%%%%%%%%%%%%%%%%%%%%%%%%%%%%%%
%
As described above, the smooth Feshbach--Schur map is a powerful
mathematical tool for spectral analysis. We denote by $\fH$ the
underlying Hilbert space. The smooth Feshbach--Schur map
requires three auxiliary operators $\chi$, $\bchi$, and $T$ on $\fH$
for its definition.
%
%%%%%%%%%%%%%%%%%%%%%%%%%%%%%%%%%%%%%%%%%%%%%%%%%%%%%%%%%%%%%%%%%%%
\begin{hypo} \label{hypo-chi} 
The operators $\chi$ and $\bchi$ are self-adjoint, positive,
bounded, mutually commuting, and additionally obey 
$\chi^2 + \bchi^2 = 1$. 
\end{hypo} 
%%%%%%%%%%%%%%%%%%%%%%%%%%%%%%%%%%%%%%%%%%%%%%%%%%%%%%%%%%%%%%%%%%%
%
We define
\begin{align} \label{hilbert-1}
\fH_\chi \ := \ \oline{\Ran(\chi)} 
\, , \quad
\fH_\bchi \ := \ \oline{\Ran(\bchi)} 
\, , \quad
\fH_{\chi\bchi} \ := \ \fH_\chi \cap \fH_\bchi \, ,
\end{align}
where $\Ran(A)$ denotes the range of an operator 
$A$ (and, moreover, $\cern(A)$ its kernel). Let $P_\chi$, $P_\bchi$, and
$P_{\chi\bchi}$, respectively, be the orthogonal projections onto these 
closed subspaces, so that
\begin{align} \label{hilbert-2}
\fH_\chi \ := \ P_\chi \fH \, , \quad 
\fH_\bchi \ := \ P_\bchi \fH \, , \quad 
\fH_{\chi\bchi} \ := \ P_{\chi\bchi} \fH \, .
\end{align}   
We note that 
\begin{align} \label{hilbert-2,1}
\chi \ = \ \chi \, P_\chi \ = P_\chi \, \chi
\quad \text{and} \quad 
\bchi \ = \ \bchi \, P_\bchi \ = P_\bchi \, \bchi \, ,
\end{align}   
and we henceforth do not
distinguish $\chi: \fH \to \fH$ and $\bchi: \fH \to \fH$ from 
their restriction to $\fH_\chi$ and to  $\fH_\bchi$, respectively.

%%%%%%%%%%%%%%%%%%%%%%%%%%%%%%%%%%%%%%%%%%%%%%%%%%%%%%%%%%%%%%%%%%%
\begin{hypo} \label{hypo-T} 
Assume Hypothesis~\ref{hypo-chi}. The operator $T$ is densely defined 
on a subspace $\fD \equiv \fD(T) \subseteq \fH$ and closed. 
It commutes with $\chi$ and $\bchi$ in the sense that 
$P_\chi, P_\bchi, \chi, \bchi: \fD \to \fD$ and
\begin{align} \label{lele-a}
\chi \, T \ \subset \ T \, \chi \, , \quad 
\bchi \, T \ \subset \ T \, \bchi \, , 
\end{align}
hold true. The restricted operator
\begin{align} \label{lele-b}
\text{$T_\bchi$ is bounded invertible on $\fH_\bchi$,}
\end{align}
where $T_\bchi := T\big|_{\fH_\bchi}: \fD \cap \fH_\bchi \to \fH_\bchi$ 
denotes the restriction of $T$ to $\fH_{\bchi}$.
\end{hypo} 
%%%%%%%%%%%%%%%%%%%%%%%%%%%%%%%%%%%%%%%%%%%%%%%%%%%%%%%%%%%%%%%%%%%
%
Recall that a closed linear operator $A: \fd \to \fh$ 
defined on a dense subspace $\fd \subseteq \fh$ of a Hilbert space $\fh$
is called \textit{bounded invertible}, 
if $A: \fd \to \Ran(A)$ is injective, $\Ran(A) \subseteq \fh$ is dense,
and $\inf_{\psi \in \fd, \|\psi\|=1} \|A \psi \| > 0$. In this case
$\Ran(A) = \fh$,  $A: \fd \to \Ran(A)$ is a bijection, 
and the inverse map $A^{-1}: \fh \to \fd$ defines a bounded 
linear operator on $\fh$. 

Similarly to \eqref{hilbert-2} we define the subspaces
\begin{align} \label{hilbert-3}
\fD_\chi \ := \ P_\chi \fD \ = \ \fH_\chi \cap \fD   
\quad \text{and} \quad
\fD_\bchi \ := \ P_\bchi \fD \ = \ \fH_\bchi \cap \fD \, . 
\end{align}   
Then the restriction of $T$ to $\fH_\bchi$ reads
$T_\bchi: \fD_\bchi \to \fH_\bchi$, and we similarly 
denote by $T_\chi := T\big|_{\fH_\chi}: \fD_\chi \to \fH_\chi$ 
the restriction of $T$ to $\fH_\bchi$. We observe that, since $T$ is
closed, so are $T_\chi$ and $T_\bchi$.

Assuming Hypotheses~\ref{hypo-chi} and \ref{hypo-T}, we now define the
smooth Feshbach--Schur map with parameters $\chi$ and $T$, which
we denote by $\cF_{\chi,T}$. Its domain $\dom[\cF_{\chi,T}]$ consists
of closed operators $H$ on $\fH$ of the form
\begin{align}\label{HTW}
H \ = \ T + W_T \, , 
\end{align}
where $W_T$ is regarded a small perturbation of $T$ in the sense that
$H$ and $T$ have the same domain $\fD$ and
the corresponding graph norms are equivalent, i.e., there exist
a constant $c > 0$ such that
\begin{align}\label{HTgraphnorm}
\forall \, \phi \in \fD: 
c \big( \|T \phi\| + \|\phi| \big) 
\ \leq \ 
\|H \phi\| + \|\phi| 
\ \leq \ 
c^{-1} \big( \|T \phi\| + \|\phi| \big) \, . 
\end{align}
Furthermore, we assume 
the operator $H_{\bchi, T}: \fD_\bchi \to \fH_\bchi$ defined by
\begin{align}\label{Jpto1}
H_{\bchi, T} \ := \ T_\bchi \, + \, W_{\bchi,T} \, , 
\quad \text{with} \quad 
W_{\bchi,T} \ := \ \bchi \, W_T \, \bchi \, ,
\end{align}
is bounded invertible and that
\begin{align}
\bchi \, (H_{\bchi, T})^{-1} \, \bchi \, W_T \, \chi
\end{align}  
defines a bounded operator $\fH_\chi \to \fH_\bchi$.
 
Given a closed operator $H$ on $\fH$ possessing these
properties, the smooth Feshbach--Schur map assigns to 
$H$ the operator $F_{\chi,T}(H): \fD_\chi \to \fH_\chi$, 
\begin{align}\label{FchiT}
F_{\chi,T}(H) \ := \ 
H_{\chi,T} \, - \, 
\chi \, W_T \, \bchi \, (H_{\bchi, T})^{-1} \, \bchi \, W_T \, \chi \, ,
\end{align}
where $H_{\chi,T} : \fD_\bchi \to \fH_\bchi$ is defined by 
\begin{align}\label{Hchi}
H_{\chi,T} \ := \ T_\chi \, + \, W_{\chi,T}
\, , \quad \text{with} \quad
W_{\chi,T} \ := \ \chi \, W_T \, \chi \, . 
\end{align}
The smooth Feshbach--Schur map $F_{\chi,T}$ assigns to every
operator $H$ on $\fH$ in its domain the operator $F_{\chi,T}(H)$ on
$\fH_\chi$. Its key property is its isospectrality: $H$ is bounded
invertible (on $\fH$) if, and only if, $F_{\chi,T}(H)$ is bounded
invertible on $\fH_\chi$, their kernels have the same dimension, and
the spectral types of both operators at $0$ are the same.

Note that the choice of the auxiliary operator $T$ in $F_{\chi,T}$ is
not unique. The key observation of the present paper, however, is
that different choices of $T$ actually yield the same smooth Feshbach--Schur
map, provided they share suitable properties, which we make precise in
the following Theorem~\ref{IND}, which asserts that the smooth
Feshbach--Schur map only depends on $T P_{\chi\bchi}$ (see
Theorem~\ref{thm-ambiguity} below in Section~\ref{subsec:SFM.1} for
the proof):
%
%%%%%%%%%%%%%%%%%%%%%%%%%%%%%%%%%%%%%%%%%%%%%%%%%%%%%
\begin{theorem} \label{IND}
Assume Hypothesis~\ref{hypo-chi}, and let
$S$ and $T$ be two operators that fulfill
Hypothesis~\ref{hypo-T} with $\fD := \fD(S) = \fD(T)$
and such that $T \big|_{\fH_{\chi \bchi}} = S \big|_{\fH_{\chi \bchi}}$.
An operator $H: \fD \to \fH$ belongs to the domain of 
$F_{\chi,S}$ if, and only if, it belongs to the domain of 
$F_{\chi,T}$, i.e., $\dom[F_{\chi,S}] = \dom[F_{\chi,T}]$. 
In either case
\begin{align} \label{eq-domfeshST-01}
F_{\chi,S}(H) \ = \ F_{\chi,T}(H)
\end{align}   
on $\fD_\chi \subseteq \fH_\chi$.
\end{theorem}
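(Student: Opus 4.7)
The plan is to show that the whole operator $F_{\chi,T}(H)$---and in fact the conditions defining $\dom[F_{\chi,T}]$---depend on the auxiliary operator $T$ only through the composition $T\, P_{\chi\bchi}$. Once this is established, the hypothesis $T|_{\fH_{\chi\bchi}} = S|_{\fH_{\chi\bchi}}$, which translates into $T\, P_{\chi\bchi} = S\, P_{\chi\bchi}$, yields both the equality of domains and the identity $F_{\chi,T}(H) = F_{\chi,S}(H)$ simultaneously.

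The key step is to rewrite each ingredient of the defining formula~\eqref{FchiT}. Using $W_T = H - T$ together with the commutation relations $\chi T \subset T \chi$ and $\bchi T \subset T \bchi$ from Hypothesis~\ref{hypo-T}, one has on $\fD_\chi$ and $\fD_\bchi$ respectively
\begin{align}
H_{\chi,T} \ = \ \chi H \chi + \bchi^{2}\, T, \qquad
H_{\bchi,T} \ = \ \bchi H \bchi + \chi^{2}\, T,
\end{align}
while the off-diagonal block reads $\chi W_T \bchi = \chi H \bchi - T\,\chi\bchi$. Since $\chi$ and $\bchi$ commute and are positive with $\chi^{2} + \bchi^{2} = \mathbf{1}$, joint spectral calculus yields $P_\chi P_\bchi = P_{\chi\bchi}$, as well as $\bchi^{2} = \bchi^{2} P_\bchi$, $\chi^{2} = \chi^{2} P_\chi$, and $\chi\bchi = P_{\chi\bchi}\, \chi\bchi$. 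Combining these with $T P_\chi = P_\chi T$ and $T P_\bchi = P_\bchi T$ on $\fD$, one obtains
\begin{align}
\bchi^{2}\, T\big|_{\fH_\chi} \ = \ \bchi^{2}\, (T\, P_{\chi\bchi}), \quad
\chi^{2}\, T\big|_{\fH_\bchi} \ = \ \chi^{2}\, (T\, P_{\chi\bchi}), \quad
T \,\chi\bchi \ = \ (T\, P_{\chi\bchi})\, \chi\bchi,
\end{align}
so that $H_{\chi,T}$, $H_{\bchi,T}$, and the off-diagonal terms $\chi W_T \bchi$ and $\bchi W_T \chi$ all depend on $T$ exclusively through $T\, P_{\chi\bchi}$. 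In particular $H_{\bchi,T} = H_{\bchi,S}$ as operators on $\fD_\bchi$, so bounded invertibility transfers verbatim, and the auxiliary boundedness of $\bchi (H_{\bchi,T})^{-1} \bchi W_T \chi$ is literally the same condition for $T$ and $S$.

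The remaining ingredient of $\dom[F_{\chi,\cdot}]$ is the graph-norm equivalence \eqref{HTgraphnorm}. It transfers from $T$ to $S$ by a closed-graph argument: both $T$ and $S$ are closed on the common domain $\fD$, so the identity map from $(\fD,\|\cdot\|_T)$ to $(\fD,\|\cdot\|_S)$ is closed between Banach spaces and thus bounded, and symmetrically in the other direction; composing with the equivalence of $\|\cdot\|_T$ and $\|\cdot\|_H$ gives the same equivalence for $S$. The main subtlety I expect is not algebraic but bookkeeping: one must track the domain of the unbounded operator $T$ carefully when pushing it past the projections $P_\chi, P_\bchi$ and the bounded operators $\chi, \bchi$. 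All such moves are legitimized by the invariance conditions $P_\chi, P_\bchi, \chi, \bchi : \fD \to \fD$ built into Hypothesis~\ref{hypo-T}, but one should be explicit about them at each rearrangement to avoid ordering errors.
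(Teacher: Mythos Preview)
Your proposal is correct and follows essentially the same strategy as the paper's proof. The paper computes the differences directly---showing $\chi W_T \bchi = \chi W_S \bchi$ from $(S-T)\chi\bchi = 0$, and $H_{\bchi,S} = H_{\bchi,T}$ by a short telescoping argument that uses $S_\bchi \chi^2 = T_\bchi \chi^2$---whereas you first rewrite each block of $F_{\chi,T}(H)$ as an $H$-dependent piece plus a piece depending only on $T P_{\chi\bchi}$. These are two packagings of the same algebra.

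One caveat: you invoke ``$T P_\chi = P_\chi T$ and $T P_\bchi = P_\bchi T$ on $\fD$'', but Hypothesis~\ref{hypo-T} only gives $\chi T \subset T\chi$, $\bchi T \subset T\bchi$, and domain invariance; full commutation of a closed (not necessarily normal) $T$ with the range projections is not immediate. What your displayed identities actually require, and what does follow directly, is the weaker fact that $T$ maps $\ker(\chi)\cap\fD$ into $\ker(\chi)$ and $\ker(\bchi)\cap\fD$ into $\ker(\bchi)$: for instance, $\bchi^2 T(\mathbf{1}-P_\bchi)\phi = 0$ because $(\mathbf{1}-P_\bchi)\phi \in \ker(\bchi)\cap\fD$ and $\chi T \subset T\chi$ forces $T$ to preserve $\ker(\bchi) = \{\psi : \chi\psi = \psi\}$. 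The paper's version sidesteps this entirely by never commuting $P_\chi, P_\bchi$ past $T$; it only uses that $\chi\bchi$ and $\chi^2\big|_{\fH_\bchi}$ have range contained in $\fH_{\chi\bchi}$. Your closed-graph argument for the graph-norm equivalence~\eqref{HTgraphnorm} is a welcome addition that the paper leaves implicit.
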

%%%%%%%%%%%%%%%%%%%%%%%%%%%%%%%%%%%%%%%%%%%%%%%%%%%%%
%
Note that if $T$ fulfills Hypothesis~\ref{hypo-T} and 
if $T_\chi$ is additionally bounded, then $S := T_\bchi$ is an admissible 
choice in Theorem~\ref{IND}, and we obtain the following corollary.
%
%%%%%%%%%%%%%%%%%%%%%%%%%%%%%%%%%%%%%%%%%%%%%%%%%%%%%
\begin{cor} \label{cor-IND}
Assume Hypothesis~\ref{hypo-chi} for $\chi$ 
and Hypothesis~\ref{hypo-T} for $T$, and suppose that
the restriction $T_\chi$ of $T$ to $\fH_\chi$ is bounded.
Then
\begin{align} \label{eq-domfeshST-01a}
F_{\chi,T} \ = \ F_{\chi, T P_\bchi} \, .
\end{align}   
\end{cor}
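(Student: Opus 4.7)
The plan is to invoke Theorem~\ref{IND} with the choice $S := T P_\bchi$ defined on the domain $\fD(S) := \fD$. The corollary will then follow once we verify (i) that $S$ satisfies Hypothesis~\ref{hypo-T} and (ii) that $S$ and $T$ agree on $\fH_{\chi\bchi}$. Point (ii) is immediate, since $P_\bchi$ acts as the identity on $\fH_{\chi\bchi} \subseteq \fH_\bchi$, so that $S\psi = T P_\bchi \psi = T\psi$ for $\psi \in \fH_{\chi\bchi} \cap \fD$.

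For (i), dense definiteness of $S$ is inherited directly from $T$. The commutation relations $\chi P_\bchi = P_\bchi \chi$ and $\bchi P_\bchi = P_\bchi \bchi$ hold because $P_\bchi$ is the spectral projection of $\bchi$ onto $\overline{\Ran(\bchi)}$, so that $\chi, \bchi, P_\bchi$ all lie in the abelian von Neumann algebra generated by $\chi$ and $\bchi$. Combined with \eqref{lele-a} and the invariance $P_\bchi \fD \subseteq \fD$, this yields $\chi S \subseteq S\chi$ and $\bchi S \subseteq S \bchi$. Since $P_\bchi$ restricts to the identity on $\fH_\bchi$, we have $S_\bchi = T_\bchi$, whose bounded invertibility is exactly \eqref{lele-b} for $T$.

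The delicate step is closedness of $S$. Take $(\psi_n) \subset \fD$ with $\psi_n \to \psi$ in $\fH$ and $S\psi_n = T P_\bchi \psi_n \to \phi$. Boundedness of $P_\bchi$ and the invariance $P_\bchi \fD \subseteq \fD$ give $P_\bchi \psi_n \in \fD$ and $P_\bchi \psi_n \to P_\bchi \psi$; closedness of $T$ then yields $P_\bchi \psi \in \fD$ and $T P_\bchi \psi = \phi$. The nontrivial remaining task is to establish $\psi \in \fD$, and this is where the boundedness of $T_\chi$ enters decisively: the subspace $\fD_\chi = P_\chi \fD$ is dense in $\fH_\chi$ (by density of $\fD$), and a closed operator bounded on its domain has closed domain, so $\fD_\chi = \fH_\chi$, i.e., $\fH_\chi \subseteq \fD$. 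Using $\chi^2 + \bchi^2 = \mathbf{1}$ together with the identity $\bchi = \bchi P_\bchi$ (which holds because $\Ran(\bchi) \subseteq \fH_\bchi$), we decompose
\begin{align*}
\psi \;=\; \chi^2 \psi \, + \, \bchi^2 P_\bchi \psi,
\end{align*}
whose first summand lies in $\fH_\chi \subseteq \fD$ and whose second lies in $\fD$ because $P_\bchi \psi \in \fD$ and $\bchi$ preserves $\fD$. Hence $\psi \in \fD$ and $S\psi = \phi$.

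This last closedness step is the main obstacle; without the boundedness assumption on $T_\chi$, one only recovers $P_\bchi \psi \in \fD$, and both the inclusion $\fH_\chi \subseteq \fD$ (extracted from the hypothesis on $T_\chi$) and the algebraic identity $\mathbf{1} = \chi^2 + \bchi^2$ are essential to reconstruct the full vector $\psi$ from its projection. With Hypothesis~\ref{hypo-T} verified for $S$, Theorem~\ref{IND} applies and delivers \eqref{eq-domfeshST-01a}.
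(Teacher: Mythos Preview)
Your proof is correct and follows the same approach as the paper: set $S := T P_\bchi$ and invoke Theorem~\ref{IND}. The paper merely states that this choice of $S$ is admissible without verification, whereas you have carefully supplied the missing details --- in particular the closedness argument, which is indeed the only nontrivial point and is precisely where the boundedness hypothesis on $T_\chi$ is needed.
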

%%%%%%%%%%%%%%%%%%%%%%%%%%%%%%%%%%%%%%%%%%%%%%%%%%%%%
%
We remark that our proof of Theorem~\ref{IND} shows 
that formally $T$ could be replaced even by $T P_{\chi \bchi}$
in the Feshbach--Schur map, i.e., $F_{\chi,T} = F_{\chi, T P_{\chi \bchi}}$.
The operator $T P_{\chi \bchi}$ does not, however, fulfill
the invertibility condition \eqref{lele-b} in Hypothesis~\ref{hypo-T},
in general.

%%%%%%%%%%%%%%%%%%%%%%%%%%%%%%%%%%%%%%%%%%%%%%%%%%%%%
\subsubsection{The Role of $W_{T}$ in the Smooth Feshbach--Schur Map} 
\label{Extra}
%%%%%%%%%%%%%%%%%%%%%%%%%%%%%%%%%%%%%%%%%%%%%%%%%%%%%
%
The term $W_T$ in the smooth Feshbach--Schur map $F_{\chi,T}$ might have more
or less convenient features depending on $T$ and the same holds true
for the operator $H_{\bchi, T}$. Convenient properties of $W_T$ do not
correspond to a better manageability of $H_{\bchi, T}$. For this
reason, an expression of $F_{\chi,T}$ in terms of $W_S$, for some
other auxiliary operator $S$ replacing $T$, is important. In this
section we address this issue.

For simplicity, in order to avoid domain issues, we assume that the
operators $H$, $T$ and $S$ are bounded, that $T$, $\chi$ and $S$
commute with one another and that $H$ belongs to the domain of
$F_{\chi,T}$.
%
%%%%%%%%%%%%%%%%%%%%%%%%%%%%%%%%%%%%%%%%%%%%%%%%%%%%%
\begin{definition} \label{Delta}
Suppose that $\chi$ fulfills Hypothesis~\ref{hypo-chi}, 
$T \in \cB(\fH)$ fulfills Hypothesis~\ref{hypo-T}, 
and $H \in \cB(\fH) \cap \dom[\cF_{\chi,T}]$.
For $S \in \cB(\fH)$ fulfilling Hypothesis~\ref{hypo-T}
and commuting with $T$ we define
\begin{align} \label{eq-Extra-01}
\Delta_{\chi,T}(S) \ := \ T \, \chi^2 \, + \, S \, \bchi^2 \, .
\end{align}
If $\Delta_{\chi,T}(S)\big|_{\fH_\bchi}$ is bounded invertible, we
define
\begin{align}\label{f}
f_{\chi,T}(S) \ :=  \  
\Big( T \frac{1}{\Delta_{\chi,T}(S) } \Big) \Big|_{\fH_\bchi}  
\, \oplus \, \bfone_{\fH_\bchi^\perp} \, ,   
\end{align} 
where we use the representation $\fH = \fH_\bchi \oplus \fH_\bchi^\perp$.
\end{definition}
%%%%%%%%%%%%%%%%%%%%%%%%%%%%%%%%%%%%%%%%%%%%%%%%%%%%%
%
The following theorem, whose proof is given in
Section~\ref{Proof-of-Tmaintsyeah}, expresses $F_{\chi,T}(H)$ in terms
of $W_S$:
%
%%%%%%%%%%%%%%%%%%%%%%%%%%%%%%%%%%%%%%%%%%%%%%%%%%%%%
\begin{theorem}\label{Tmaintsyeah}
For every $\chi$, $T$, $H$, and $S$ as in Definition~\ref{Delta}, it
follows that
\begin{align}\label{maintsyeah}
F_{\chi,T}(H) \ = \ &
S \, f_{\chi,T}(S) \, + \, 
\chi \, f_{\chi,T}(S) \, W_S \, f_{\chi,T}(S) \, \chi 
\nonumber \\[1ex] 
& \ \ - \, \chi \, f_{\chi,T}(S) \, W_S \, \bchi \, (H_{\bchi,T})^{-1}
\, \bchi \, W_S \, f_{\chi,T}(S) \, \chi \, ,    
\end{align}  
where $H_{\bchi,T}$ is defined in \eqref{Jpto1}. 
\end{theorem}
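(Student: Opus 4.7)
The strategy is a direct algebraic verification, based on rewriting both sides of \eqref{maintsyeah} in terms of the common ingredients $W_S$, $\Delta := \Delta_{\chi,T}(S)$ and $f := f_{\chi,T}(S)$. The starting point is the pair of structural rewrites that follow from $W_T = H - T = W_S + (S - T)$ and $\chi^2 + \bchi^2 = \bfone$:
\begin{align*}
H_{\bchi,T} \ = \ T_\bchi + \bchi W_T \bchi \ = \ \Delta\big|_{\fH_\bchi} + \bchi W_S \bchi,
\end{align*}
so that the resolvent $R := (H_{\bchi,T})^{-1}$ is literally the same operator whether expressed via $W_T$ or $W_S$; together with the defining relation
\begin{align*}
f\,\Delta \ = \ T \ \text{ on }\fH_\bchi, \qquad \text{equivalently} \qquad (\bfone - f)\,\Delta \ = \ (S - T)\bchi^2.
\end{align*}
As $T, S, \chi, \bchi$ commute pairwise, so does $f$ with each of them.

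Substituting $W_T = W_S + (S - T)$ into $F_{\chi,T}(H) = T_\chi + \chi W_T \chi - \chi W_T \bchi R \bchi W_T \chi$ and expanding the bilinear product yields seven terms: the three resolvent-free terms $T_\chi + \chi W_S\chi + (S - T)\chi^2$ and the four terms sandwiching $R$ obtained from the expansion of $\bigl(\chi W_S \bchi + (S-T)\chi\bchi\bigr)\,R\,\bigl(\bchi W_S \chi + (S-T)\chi\bchi\bigr)$. The right-hand side of \eqref{maintsyeah} is already in the analogous three-term form $Sf + \chi f W_S f \chi - \chi f W_S \bchi R \bchi W_S f \chi$.

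I would then compute the difference LHS $-$ RHS and show it vanishes. It splits into two resolvent-free residuals, $T_\chi + (S - T)\chi^2 - Sf$ and $\chi W_S\chi - \chi f W_S f \chi$, plus four cross terms involving $R$: the two mixed $-\chi W_S \bchi R (S - T)\chi\bchi$ and $-(S - T)\chi\bchi R \bchi W_S \chi$, the purely-$(S - T)$ term $-(S - T)^2 \chi^2 \bchi R \bchi$, and the difference $(\bfone - f^2)\,\chi W_S \bchi R \bchi W_S\chi$ obtained from combining $-\chi W_S \bchi R \bchi W_S \chi$ with $+\chi f W_S \bchi R \bchi W_S f \chi$. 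Using the key identity to replace each $(S - T)\bchi^2$ by $(\bfone - f)\Delta$, and the resolvent identity $\Delta\, R = \bfone_{\fH_\bchi} - \bchi W_S \bchi\, R$ (and its transpose $R\,\Delta = \bfone_{\fH_\bchi} - R\,\bchi W_S \bchi$) to simplify the $\Delta$-factors, each cross term is rewritten as the sum of a resolvent-free contribution and a dressed resolvent term. A direct collection shows that the dressed resolvent terms cancel and the resolvent-free contributions exactly match the two residuals.

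The main obstacle is the bookkeeping in this last step: five cross terms (including the split of the $W_S$-quadratic term via $\bfone - f^2 = (\bfone - f)(\bfone + f)$) must conspire to produce exactly the two resolvent-free residuals. Since $\bchi$ is not invertible in general, the key identity must be used only in the form $(S - T)\bchi^2 = (\bfone - f)\Delta$ and never after dividing by a single power of $\bchi$; this causes no difficulty because each occurrence of $(S - T)\chi\bchi$ is already paired with a second $\bchi$ inside $R$ via $R\bchi$ or $\bchi R$, and on $\fH_\bchi^\perp$ every term vanishes trivially since $\bchi = 0$ and $f = \bfone$ there. Once this bookkeeping is performed, the identity \eqref{maintsyeah} reduces to the two input identities $f\,\Delta = T$ and $\Delta\,R = \bfone_{\fH_\bchi} - \bchi W_S \bchi\, R$, both of which hold by construction.
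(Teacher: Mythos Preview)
Your overall strategy---direct algebraic verification after expanding $W_T = W_S + (S-T)$---is sound and indeed close in spirit to what the paper does, but there is a genuine error in your bookkeeping that invalidates the plan as stated.

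You claim that combining $-\chi W_S \bchi R \bchi W_S \chi$ with $+\chi f W_S \bchi R \bchi W_S f \chi$ gives $(\bfone - f^2)\,\chi W_S \bchi R \bchi W_S\chi$. This is false: $f$ commutes with $T$, $S$, $\chi$, $\bchi$ (and hence with $R$ and $\Delta$), but it does \emph{not} commute with $W_S = H - S$, since $H$ has no commutation relation with $f$. Writing $A := \chi W_S \bchi R \bchi W_S \chi$, what you actually get is $-A + fAf = -(\bfone - f)A - fA(\bfone - f)$, with the $(\bfone - f)$ factors on \emph{different} sides, not a single scalar factor $(\bfone - f^2)$. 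This breaks your subsequent plan to factor via $\bfone - f^2 = (\bfone - f)(\bfone + f)$ and recombine with the mixed terms.

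The paper circumvents this issue by a cleverer telescoping that never tries to pull $f$ across $W_S$. It first establishes (Remark preceding the Lemma) the identities
\[
\bchi W_T - \bchi W_S f \;=\; H_{\bchi,T}\,(S-T)\,\Delta^{-1}\bchi,
\qquad
W_T\bchi - f W_S\bchi \;=\; (S-T)\,\Delta^{-1}\bchi\, H_{\bchi,T},
\]
so that the difference of the quadratic resolvent terms telescopes as
$\chi f W_S \bchi R\,[\bchi W_T - \bchi W_S f]\chi + \chi[W_T\bchi - f W_S\bchi]\,R\,\bchi W_T\chi$;
in each summand the factor $H_{\bchi,T}$ cancels against $R = H_{\bchi,T}^{-1}$, leaving resolvent-free terms (this is Lemma~\ref{Si}). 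Your route via $\Delta R = \bfone - \bchi W_S\bchi R$ instead \emph{introduces} new resolvent terms, making the cancellation pattern more intricate---and with the $(\bfone - f^2)$ error, it does not close. Your approach can be repaired by using the correct two-sided telescoping $-(\bfone - f)A - fA(\bfone - f)$ together with $(\bfone - f) = \bchi(S-T)\Delta^{-1}\bchi$, but once you do that carefully you are essentially reproducing the paper's Lemma~\ref{Si}.
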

%%%%%%%%%%%%%%%%%%%%%%%%%%%%%%%%%%%%%%%%%%%%%%%%%%%%%

%%%%%%%%%%%%%%%%%%%%%%%%%%%%%%%%%%%%%%%%%%%%%%%%%%%%%
%%%%%%%%%%%%%%%%%%%%%%%%%%%%%%%%%%%%%%%%%%%%%%%%%%%%%
\subsection{The Renormalization Flow }\label{RF}
%%%%%%%%%%%%%%%%%%%%%%%%%%%%%%%%%%%%%%%%%%%%%%%%%%%%%
%%%%%%%%%%%%%%%%%%%%%%%%%%%%%%%%%%%%%%%%%%%%%%%%%%%%%

%%%%%%%%%%%%%%%%%%%%%%%%%%%%%%%%%%%%%%%%%%%%%%%%%%%%%
\subsubsection{Operators on Fock Spaces and their Kernels} 
\label{basicnotation}
%%%%%%%%%%%%%%%%%%%%%%%%%%%%%%%%%%%%%%%%%%%%%%%%%%%%%%
%
The renormalization map is defined for operators acting on the boson
Fock space over a one-particle Hilbert space $\fh$, which, in this
paper, is assumed to be $\fh = L^2(\RR^3)$.  We introduce some
notation.

We denote by $\fF$ the bosonic Fock space defined by 
\begin{align}
\fF \ :=  \ \bigoplus_{n = 0}^\infty \fF_n,  
\end{align}  
endowed with the inner product of the direct sum. Here
\begin{align}
\fF_n \ :=  \ L_{\mathrm{sym}}^{2}(\RR^{3n}) \ \equiv \ 
\bigotimes_{\mathrm{sym}}^n L^2(\RR^3) \, , \quad  \fF_0 = \CC \, ,    
\end{align}  
and $L_{\mathrm{sym}}^{2}(\RR^{3n}) \subset L^{2}(\RR^{3n})$ is the
subspace of the totally symmetric functions in $L^{2}(\RR^{3n})$,
i.e., square-integrable functions $\phi_n \in L^{2}(\RR^{3n})$, which
obey $\phi_n(k_1, k_2, \cdots, k_n) = \phi_n(k_{\pi(1)}, k_{\pi(2)},
\cdots, k_{\pi(n)})$, for every permutation $\pi \in \cS_n$. We denote
the elements of $\fF$ by sequences
\begin{align}
\psi \ = \ (\psi_{n})_{n = 0}^{\infty} \, ,  \quad \psi_n \in \fF_n \, , 
\end{align}   
so that the scalar product of 
$\phi = (\phi_{n})_{n = 0}^{\infty}, \psi = (\psi_{n})_{n = 0}^{\infty} \in \fF$ 
reads
\begin{align}
\la \phi , \psi \ra 
\ = \ 
\sum_{n=0}^\infty \la \phi_n , \psi_n \ra \, , 
\end{align}   
where it is understood that the scalar product on the left side is for
vectors in $\fF$ and on the right side for vectors in $\fF_n$. We use
the symbol
\begin{align}
\fF_\fin \ \subset \ \fF
\end{align}  
for the dense subspace of sequences 
$(\psi_{n})_{n = 0}^{\infty} \in \fF$, for which all but finitely many
$\psi_n$ are zero and all non-zero $\psi_n$ are Schwartz test
functions. The free boson operator on $\fF$ is denoted by $H_\ph$,
with
\begin{align}\label{free-1}
H_\ph ( \psi_n)_{n=0}^\infty \ =: \ (\phi_n)_{n= 0}^\infty \, ,   
\end{align} 
where $\phi_0 = 0$ and 
\begin{align}\label{free-2}
\phi_n(k_1, \ldots, k_n) 
\ := \ 
\big( |k_1| + \ldots + |k_n| \big) \, \psi_n(k_1, \ldots, k_n) \, ,
\end{align} 
for all $n \in \NN$ and all $k_1, \ldots, k_n \in \RR^3$. Its domain
is the set of vectors $( \psi_n)_{n=0}^\infty \in \fF$ that yield an
element of $\fF$ in the right hand side of \eqref{free-1}.
In the present work we use pointwise creation and annihilation
operators. For fixed $k \in \RR^3$, the annihilation operator 
$a(k): \fF_\fin \to \fF_\fin$ is defined by 
$a(k) (\psi_n)_{n=0}^\infty := (\phi_n)_{n=0}^\infty$, with 
$\phi_0 =0$ and
\begin{align}
\phi_n(k_1, \ldots, k_n) 
\ = \ 
\sqrt{n+1} \, \psi_{n+1} (k, k_1, \ldots, k_n) \, ,
\end{align}    
for all $n \in \NN$ and all $k_1, \ldots, k_n \in \RR^3$.
The corresponding creation operator $a^*(k)$ is defined as a quadratic
form on $\fF_\fin$ by
\begin{align}
\la  a^*(k) \phi \, , \, \psi \ra 
\ := \ 
\la \psi \, ,  \, a(k) \phi \ra \, .  
\end{align}  
In the present manuscript, we use the notation:
\begin{align} \label{eq-not-01}
k_i^m \, := \, (k_i, \ldots, k_m) \, \in \, \RR^{d(m-i+1)} 
\, , \quad  &  
\tk_i^n \, := \, (\tk_i, \ldots, \tk_n) \, \in \, \RR^{d(n-i+1)} \, , 
\\[1ex] \label{eq-not-02}
|k_i^m| \ := \ \prod_{j=i}^m | k_j| 
\, , \qquad &
\rmd k_i^m \ := \ \prod_{j=i}^m \rmd^dk_j \, ,
\\[1ex] \label{eq-not-03}
a^*(k_i^m) \ := \ \prod_{j=i}^m a^*(k_j)
\, , \qquad &
a(\tk_i^n) \ := \ \prod_{j=i}^n a^*(\tk_j) \, .
\end{align}  
We study operators on $\fF$ that are perturbations of $H_\ph$. These
are defined in terms of creation and annihilation operators and
measurable functions of the form
\begin{align} \label{wmn-01}
v_{m,n}:  \cI \times B^m \times B^n \, \to \, \CC \, , \quad 
m, n  \in \NN_0 := \{0, 1, \ldots \} , 
\end{align}  
where $\cI := [0,1]$ is the closed unit interval from $0$ to $1$, 
$B^0 := \{0\}$, and $B := \{ k \in \RR^3: |k| < 1 \}$ is the open 
unit ball in $\RR^3$ centered at the origin. We assume that the functions  
$v_{m,n}(r; k_1^m; \tk_1^n )$ are 
continuously differentiable in $r \in \cI$, for almost every 
$(k_1^m ;\tk_1^n) \in B^m \times B^n$. More specifically,
we assume that $v_{m,n}$ is an element of the Banach space 
\begin{align} \label{wmn-02}
\cW_{m,n} \ := \ L^2\big[ B^m \times B^n \, ; \, C^1(\cI) \big] \, ,
\end{align}  
where the norm on $\cW_{m,n}$ is defined as
\begin{align} \label{eq:muNormtilde-01}
\| v_{m,n} \|_{\cW_{m,n}} \ := \ 
\bigg( \int_{B^{m+n}} 
\big\| v_{m,n}( \cdot ; k_1^m; \tk_1^n) \big\|_{C^1(\cI)}^2 \:
\frac{\rmd k_1^m \: \rmd\tk_1^n}{|k_1^m|^{3+2\mu} \: |\tk_1^n|^{3+2\mu} } 
\bigg)^{1/2} \, ,  
\end{align}  
Here $\mu > 0$ is an infrared regularization that is fixed throughout
this paper and omitted from the notation.

We use the decomposition 
\begin{align} \label{eq:muNormtilde-01,5}
C^1(\cI) \ = \ \CC \oplus \cT
\quad \text{where} \quad
\cT \ := \ \big\{ h \in C^1(\cI) \, \big| \; h(0) = 0 \big\}
\end{align}  
is the space of continuous differentiable functions vanishing at
$r=0$, which we equip with the norm 
\begin{align} \label{eq:muNormtilde-01,6}
\|f\|_{(\partial_r)} \ := \ \max_{r \in \cI} |\partial_r f(r)| 
\end{align}  
and any function $f \in C^1(\cI)$ is represented as 
$f(0) \oplus [f-f(0)]$.  We define the norm on $C^1(\cI)$ as
\begin{align} \label{eq:muNormtilde-02}
\| f \|_{C^1(\cI)}
\ := \ 
|f(0)| \: + \: \|f\|_{(\partial_r)} \, .
\end{align}  
Using $|f(r)| \leq |f(0)| + \max_{r \in \cI} |\partial_r f(r)|$, it is
easy to see that
\begin{align} \label{eq:muNormtilde-03}
\| f \|_{C^1(\cI)}
\ \leq \ 
\max_{r \in \cI} \big| f(r) \big|
\: + \: \max_{r \in \cI} \big| \partial_r f(r) \big| 
\ \leq \ 
2 \, \| f \|_{C^1(\cI)} \, ,
\end{align}  
so \eqref{eq:muNormtilde-02} is, indeed, equivalent to the standard
norm on $C^1(\cI)$.

In case that $m = 0$ or $n = 0$ in \eqref{eq:muNormtilde-01}, the
corresponding integral over $k_1^m$ or $\tk_1^n$, respectively, does
not appear. Note the special case $m=n=0$, when we have that
  $v_{0,0} \in C^1(\cI)$ with
\begin{align} \label{eq:muNormtilde-03a}
\| v_{0,0} \|_{\cW_{0,0}} 
\ = \ 
|v_{0,0}(0)| \: + \: \| v_{0,0} \|_{(\partial_r)}
\ = \ 
|v_{0,0}(0)| \: + \: \max_{r \in \cI} \big| \partial_r v_{0,0}(r) \big| \, .
\end{align}  
We denote by $\bbone_\cI :\RR_0^+ \to \RR_0^+$ the characteristic
function of the interval 
$\cI = [0, 1] \subseteq \RR_0^+ := [0, \infty)$, and
\begin{align}
P_\red \ :=  \ \bbone_\cI(H_\ph) \, . 
\end{align}  
Given a function $v_{m,n} \in \cW_{m,n}$ as above, we define its
\textit{quantization} to be the corresponding quadratic form
\begin{align} \label{Wmn-eq-01,1}
\qHH_{m,n}[ & v_{m,n}]  
\ := \ 
P_\red \int_{B^{m+n}} 
\bigg( \prod_{i=1} \frac{\rmd k_i}{|k_i|^{1/2}} \: a^*(k_i) \bigg)
\\
& \qquad \qquad 
v_{m,n}(H_\ph; \, k_1, \ldots, k_m; \, \tk_1, \ldots \tk_n) 
\bigg( \prod_{j=1} \frac{\rmd \tk_j}{|\tk_j|^{1/2}} \: a(\tk_j) \bigg) 
P_\red \, ,
\end{align}  
on $\fF_\fin$, which, using the
notation~\eqref{eq-not-01}-\eqref{eq-not-03}, reads
\begin{align} \label{Wmn-eq-01,2}
\qHH_{m,n} & [v_{m,n}]  
\ = \ 
\\ \nonumber & P_\red \bigg( \int_{B^{m+n}} a^*(k_1^m) \;
v_{m,n}(H_\ph; k_1^m; \tk_1^n) \;
a(\tk_1^n) \; 
\frac{\rmd k_1^m \: \rmd \tk_1^n}{|k_1^m|^{1/2} \: |\tk_1^n|^{1/2}} 
\bigg) P_\red \, .
\end{align}  
The quadratic form $\qHH_{m,n}[v_{m,n}]$ is represented by a bounded
operator which we also denote $\qHH_{m,n}[v_{m,n}]$, in slight abuse
of notation. This operator fulfills the norm bound
\begin{align} \label{eq-Wmn-01}
\big\| \qHH_{m,n}[v_{m,n}] \big\|_\op 
\ \leq \ 
\frac{\|v_{m,n}\|_{\cW_{m,n}} }{\sqrt{m^m n^n}} \, ,
\end{align}  
where $0^0 = 1$ and $\| \cdot \|_\op := \| \cdot \|_{\cB(\fF)}$
denotes the operator norm on $\fF$ (see 
\cite[Thms~3.1 and 3.3]{BachChenFroehlichSigal2003}).

In this paper, we consider sequences of functions of the form 
\begin{align} \label{eq-Wmn-02}
\uv \ := \ (v_{m,n})_{m + n \geq 0} \, .
\end{align}  
The component $v_{0,0}$ plays a distinguished role and is called
the \textit{free part} of $\uv$, while
\begin{align}\label{vI-01}
\uv_\gI \ = \ 
\big( v_{\gI; \, m,n} \big)_{m+n \geq 0} \, ,  
\quad 
v_{\gI; \, m,n} \ := \ 
\left\{
\begin{array}{cc}
0 & m=n=0 \, , \\
v_{m,n} & m+n \geq 1 \, ,
\end{array} 
\right.  
\end{align}
is called \textit{\uline{i}nteraction kernel} in $\uv$.
Writing $\uv_\free := (v_{0,0}, 0, 0, \ldots)$, we note the
decomposition
\begin{align}\label{vI-02}
\uv \ = \ \uv_\free \, + \, \uv_\gI \, ,
\end{align}
of $\uv$ into its free part plus its interaction.

For every $\xi \in (0,1)$, we define 
\begin{align}\label{yeahhh}
\| \uv \|^{(\xi)} 
\ := \ 
\sum_{m+n \geq 0} \xi^{-(m+n)} \| v_{m,n} \|_{\cW_{m,n}} \, ,
\end{align}  
and denote by $\cW^\xi$ the Banach space of sequences $\uv$ defined by
the norm~\eqref{yeahhh}. It follows from \eqref{eq-Wmn-01} and
\eqref{yeahhh} for every $\xi \in (0,1)$ and every $\uv \in \cW^\xi$,
that the series $\sum_{m+n \geq 0} \qHH_{m,n}[v_{m,n}]$ converges in
operator norm. We define
\begin{align} \label{HW-1}
\qHH[\uv] \ := \ & \sum_{m+n \geq 0} \qHH_{m,n}[v_{m,n}] \, . 
\\[1ex] \label{HW-2} 
\qWW[\uv] \ := \ &
\qHH[ \uv_\gI] \ = \ \sum_{m+n \geq 1} \qHH_{m,n}[v_{m,n}] \, . 
\end{align} 
It follows again from \eqref{eq-Wmn-01} and \eqref{yeahhh} that
\begin{align}\label{opbound}
\big\| \qHH[\uv] \big\|_\op \ \leq \ \|\uv\|^{(\xi)} 
\quad \text{and} \quad
\big\| \qWW[\uv] \big\|_\op \ \leq \ 
\xi \, \|\uv_\gI\|^{(\xi)} \, .
\end{align}  
The number $0 < \xi < 1$ is an expansion parameter that ensures the
summability in \eqref{HW-1} and \eqref{HW-2}. The operators of the
form $\qHH[\uv]$ are the object of study of the present paper. We are
interested in the spectral properties of these operators. We recall
that the spectrum of $\qHH[\uv]$ is the set of complex numbers $\zeta$,
called \textit{spectral parameter}, such that $\qHH[\uv] - \zeta$ is not
invertible. In this paper we restrict the spectral parameter $\zeta$
to lie in the closed disc $\bD(\tfrac{1}{4})$ of radius $\frac{1}{4}$,
centered at zero. It is convenient to include (minus) the spectral
parameter $z = -\zeta$ in $\uv$. With this in mind, we denote by
$\uw(z) = \big( w_{m,n}(z) \big)_{m+n \geq 0}$ $z$-dependent sequences
of functions such that
\begin{align} \label{wmn-03}
w_{m,n}(z) \ \in \ \cW_{m,n} \, , \quad m, n  \in \NN_0 \, , 
\end{align}  
for every $z \in D(\tfrac{1}{4})$, where $\cW_{m,n}$ is defined in
\eqref{wmn-02} and
\begin{align} \label{wmn-04}
D(r) \ := \ \{ z \in \CC \; : \ |z | < r \} \, , \quad  
\bD(r) \ := \ \{ z \in \CC \; : \ |z | \leq r \} \ \subseteq \ \CC  
\end{align}  
is the open or closed disc of radius $r>0$ centered at zero,
respectively. We assume that $\uw(z)$ satisfies the properties of
$\uv$ described in Section~\ref{basicnotation} and belongs to
$\cW^\xi$, for each $z \in \bD(\tfrac{1}{4})$. Furthermore, the map
\begin{align} \label{wmn-05}
\uw: \bD(\tfrac{1}{4}) \, \to \, \cW^\xi \, , \quad
z \, \mapsto \, \uw(z) = \big( w_{m,n}(z) \big)_{m, n \in \NN_0}
\end{align}  
is assumed to be analytic on $D(\tfrac{1}{4})$ and itself and its
complex derivative to be continuous on $\bD(\tfrac{1}{4})$ (as a map
from a subset of the complex plane with values in a complex Banach
space). We collect these maps in
\begin{align} \label{wmn-06}
\cW^\xi_Z \ := \ 
\big\{ \uw \, \in \, C^1(\bD(\tfrac{1}{4}); \cW^\xi) \; \big| \ 
\text{$\uw$ is analytic on $D(\tfrac{1}{4})$} \} \, ,
\end{align}  
which itself is a Banach space 
$\big( \cW^\xi_Z, \| \cdot \|_Z^\gI \big)$ with norm
\begin{align} \label{wmn-06,2}
\| \uw \|^{(\xi)}_{Z} 
\ := \ 
\max\Big\{ \| \uw(z) \|^{(\xi)} + \| \partial_z \uw(z) \|^{(\xi)} 
\; \Big| \ |z| \leq \tfrac{1}{4} \Big\} \, .
\end{align}  
We remark that this definition together with \eqref{opbound} implies
\begin{align} \label{wmn-06,3}
\big\| \qHH[\uw] \big\|_\op \, + \, \big\| \qHH[\partial_z \uw] \big\|_\op 
\ \leq \ 
\| \uw \|^{(\xi)}_{Z} \, . 
\end{align}  
Moreover, we identify
\begin{align}\label{identii}
w_{m,n}(z)(r; k_1^m; \tk_1^n ) 
\ \equiv \
w_{m,n}(z ; r; k_1^m; \tk_1^n ) \, ,
\end{align}  
and we observe that these functions are analytic in $z$, pointwise 
for all $r \in I$ and almost every $(k_1^m, \tk_1^n) \in B^{m+n}$.  
We frequently omit the subscript ``$Z$'' whenever no confusion
arises and identify
\begin{align}
\cW^\xi_Z  \: \equiv \: \cW^\xi 
\, , \quad  
\| \uw \|^{(\xi)}_{Z}  \: \equiv \: \| \uw \|^{(\xi)}\, , 
\end{align}  
where we use \eqref{vI-01}. Although the definition of the operators
$H[\uw(z)]$ does not require special regularity properties of the
kernel $\uw$, the renormalization analysis does.  Specifically, we
require the kernels $\uw_{m,n}$ not only to be continuous, but rather
continuously differentiable in $r \in \cI$, with a square-integrable
$C^1(\cI)$-norm, as defined in
\eqref{eq:muNormtilde-01}-\eqref{eq:muNormtilde-02}, and we
additionally assume
\begin{align}\label{w00z}
w_{0,0}(z, 0) \ = \ z \, . 
\end{align}  
We denote by 
\begin{align}\label{ur}
\uline{r+z} 
\ = \ 
\big( (r + z)_{m,n} \big)_{m+n\geq 0} 
\end{align}  
the kernel satisfying 
$P_\red (H_\ph + z) P_\red = \qHH[\uline{r + z}]$, i.e., 
$(r+z)_{0,0} = r + z$ and $(r+z)_{m,n} = 0$, whenever $m+n \geq 1$.
Note that $\uline{r+z}_\free = \uline{r+z}$ and $\uline{r+z}_\gI = 0$,
using the decomposition \eqref{vI-02}.  In general, we require
sufficiently small 
$\| (\uw - \uline{r + z})_\gI \|^{(\xi)} = \| \uw_\gI \|^{(\xi)}$. 
Further note that we could derive bounds on 
$\| \partial_z \uw(z) \|^{(\xi)}$ from $\| \uw(z) \|^{(\xi)}$, for
$|z| < \tfrac{1}{4}$, using Cauchy's estimate. It is, however,
convenient to retain the derivative $\partial_z \uw(z)$ in
Norm~\eqref{wmn-06,2} explicitly.

Finally, all operators in this paper are implicitly assumed to act on
$\Ran(P_\red)$, and we identify
\begin{align}
H_\ph + z 
\ \equiv \
P_\red \, (H_\ph + z) 
\ = \ 
\qHH\big[ \uline{r + z} \big] \, .  
\end{align}  
%

%%%%%%%%%%%%%%%%%%%%%%%%%%%%%%%%%%%%%%%%%%%%%%%%%%%%%
\subsubsection{Re-scaled Smooth Feshbach--Schur Map}
%%%%%%%%%%%%%%%%%%%%%%%%%%%%%%%%%%%%%%%%%%%%%%%%%%%%%
%
Given a number $\alpha \in \RR$, we define the scaling (or
dilation) unitary operator, $\Gamma_\alpha$, on $\fF$ by its action on
the $n$-boson sector $\fF_n$ by
\begin{align}
\big(\Gamma_\alpha \psi_n \big)(k_1, \ldots, k_n) 
\ = \ 
e^{-\frac{3 n\alpha}{2}} \psi_n(e^{-\alpha}k_1, \ldots, e^{-\alpha}k_n)
\end{align}  
and $\Gamma_\alpha \Om = \Om$. For every operator $A$ on $\fF$
we define
\begin{align}\label{Sal}
S_\alpha(A) \ := \ e^\alpha \, \Gamma_\alpha \, A \, \Gamma_\alpha^* \, ,
\end{align}  
and we call $S_\alpha$ the \textit{scaling transformation}.  

%%%%%%%%%%%%%%%%%%%%%%%%%%%%%%%%%%%%%%%%%%%%%%%%%%%%%
\begin{definition} \label{dfn-smoothfamily}
A collection $\{ \chi_\alpha, \bchi_\alpha \}_{\alpha > 0}$ of
smooth functions $\chi_\alpha, \bchi_\alpha: \RR_0^+ \to \cI = [0,1]$
is called a \textbf{smooth family} if 
\begin{align} \label{eq-functionchisemigroup}
\chi_{\alpha+\beta}[r] \; = \; \chi_\beta[e^\alpha r] \, \chi_\alpha[r]
\, , \quad 
\chi_\alpha^2[r] + \bchi_\alpha^2[r] \; = \; 1
\, , \quad 
\text{$\chi_\alpha \equiv 1$ on $[0, \tfrac{1}{2} e^{-\alpha}]$,}
\end{align}  
for all $r \geq 0$ and all $\alpha, \beta > 0$. 
\end{definition} 
%%%%%%%%%%%%%%%%%%%%%%%%%%%%%%%%%%%%%%%%%%%%%%%%%%%%%
%
Note that, if $\{\chi_\alpha, \bchi_\alpha\}_{\alpha > 0}$ is a smooth
family, then $\alpha \mapsto \chi_\alpha(r)$ is monotonically
decreasing, pointwise in $r \geq 0$.
%
%%%%%%%%%%%%%%%%%%%%%%%%%%%%%%%%%%%%%%%%%%%%%%%%%%%%%
\begin{bem} We show how to construct smooth families 
$\{ \chi_\alpha, \bchi_\alpha \}_{\alpha > 0}$. Given a smooth
decreasing function $\eta: \RR_0^+ \to [0,1]$ such that 
$\eta \equiv 1$ on $[0, \frac{1}{2}]$ and $\eta \equiv 0$ on 
$[1, \infty)$ and given $\alpha >0$, we define
\begin{align} \label{eq-functionchisemigroup-2}
\theta_\alpha(r) 
\ := \ 
\sin\bigg( \frac{\pi}{2} \, 
\frac{\eta(e^{\alpha}r)}{\eta(r)} \bigg)
\, , \quad
\btheta_\alpha(r) \ := \ 
\cos\bigg( \frac{\pi}{2} \, 
\frac{\eta(e^{\alpha}r)}{\eta(r)} \bigg) \, ,
\end{align}
whenever $\eta(r) > 0$ and $\theta_\alpha(r) := 0$, 
$\btheta_\alpha(r) = 1$ in case that $\eta(r)=0$. It is easy to check
that $\{ \theta_\alpha, \btheta_\alpha \}_{\alpha > 0}$ is a family of
smooth functions $\RR_0^+ \to \cI$ satisfying
\eqref{eq-functionchisemigroup}. The use of sine and cosine in
\eqref{eq-functionchisemigroup-2} ensure the smoothness of both
$\theta_\alpha$ and $\btheta_\alpha$, for $\alpha >0$.
Note that, pointwise in $r \in \cI$, we obtain
$\theta_0(r) := \lim_{\alpha \searrow 0} \theta_\alpha(r) = \bfone_{[0,m)}(r)$, 
where $m = \sup\{ r > 0| \: \eta(r) > 0 \} \in (\tfrac{1}{2}, 1]$
in the limit $\alpha \searrow 0$.
\end{bem}
%%%%%%%%%%%%%%%%%%%%%%%%%%%%%%%%%%%%%%%%%%%%%%%%%%%%%

Given a smooth family $\{ \chi_\alpha, \bchi_\alpha \}_{\alpha > 0}$, 
we identify
\begin{align*}
\chi_\alpha \ \equiv \ \chi_\alpha[H_\ph] 
\quad \text{and} \quad 
\bchi_\alpha \ \equiv \ \bchi_\alpha[H_\ph] 
\end{align*}  
in the following.
%
%%%%%%%%%%%%%%%%%%%%%%%%%%%%%%%%%%%%%%%%%%%%%%%%%%%%%
\begin{definition}\label{def:rescaledfsmap}
Let $\{ \chi_\alpha, \bchi_\alpha \}_{\alpha > 0}$ be a smooth family
in the sense of Definition~\ref{dfn-smoothfamily}. 
For every $\uw \in \cW_Z^\xi$,  
such that $\qHH[\uw(z)]$ belongs to the domain of 
$F_{\chi_\alpha, H_\ph+z}$, for all $z \in \bD(\tfrac{1}{4})$, 
we define 
\begin{align} \label{eq-functionchisemigroup-3}
\hcR_\alpha \big( \qHH[\uw(z)] \big) 
\ := \  
S_\alpha\Big( F_{\chi_\alpha, H_\ph+z} \big( \qHH[\uw(z)] \big) \Big) \, . 
\end{align}
In this case, we say that $\qHH[\uw(z)] $ belongs to the domain of 
$\hcR_\alpha$. We call $\alpha > 0$ the \textit{scaling parameter}.  
\end{definition}
%%%%%%%%%%%%%%%%%%%%%%%%%%%%%%%%%%%%%%%%%%%%%%%%%%%%%

%%%%%%%%%%%%%%%%%%%%%%%%%%%%%%%%%%%%%%%%%%%%%%%%%%%%%
\subsubsection{Main Theorems: \\ Renormalization of the Spectral Parameter}
%%%%%%%%%%%%%%%%%%%%%%%%%%%%%%%%%%%%%%%%%%%%%%%%%%%%%
%

%%%%%%%%%%%%%%%%%%%%%%%%%%%%%%%%%%%%%%%%%%%%%%%%%%%%%
\begin{definition} \label{def-Qgamma}
Let $\{ \chi_\alpha, \bchi_\alpha \}_{\alpha > 0}$ be a smooth family 
in the sense of Definition~\ref{dfn-smoothfamily}.
For every $\uw \in \cW_Z^\xi$, such that $\qHH[\uw(z)]$ belongs to the 
domain of $F_{\chi_\alpha, H_\ph+z}$, for all $z \in \bD(\tfrac{1}{4})$, 
we define 
\begin{align} \label{Qgamma}
Q_\alpha(z) \ := \ 
\Big\la \hcR_\alpha\big( \qHH[\uw(z)] \big) \Big\ra_\Om \, ,
\end{align}  
where $\la A \ra_{\Om} = \la \Om | A \Om \ra$ is the vacuum
expectation value.
\end{definition}
%%%%%%%%%%%%%%%%%%%%%%%%%%%%%%%%%%%%%%%%%%%%%%%%%%%%%
%
The next result is proved in Theorem~\ref{thm-specrg} below  
%
%%%%%%%%%%%%%%%%%%%%%%%%%%%%%%%%%%%%%%%%%%%%%%%%%%%%%
\begin{prop}\label{prop-specrg}
Let $\alpha > 0$ and $0 < r_Z < \tfrac{1}{4}$. Suppose that 
$\uw \in \cW_Z^\xi$ and that $\uw$ fulfills \eqref{w00z}. Then, for
sufficiently small $\|\uw_\gI \|^{(\xi)}$ and 
$\| \uw_\free - r \|_{(\partial_r)}$, the map $Q_\alpha$ is a
biholomorphic function from 
$D(\tfrac{1}{4} e^{-\alpha}) \cap Q_\alpha^{-1}\big( D(r_Z) \big)$
onto $D(r_Z)$.
\end{prop}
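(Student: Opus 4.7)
The plan is to show that $Q_\alpha$ equals $e^\alpha$ times the identity plus a small holomorphic perturbation on $D(\tfrac{1}{4} e^{-\alpha})$, and then to invert it via a contraction-mapping argument (equivalently, by Rouch\'e). First I would exploit that $\Gamma_\alpha\Om = \Om$ and $\chi_\alpha[H_\ph]\Om = \chi_\alpha(0)\Om = \Om$, so that $\bchi_\alpha \Om = 0$; together with the normalization $w_{0,0}(z,0) = z$ from \eqref{w00z}, these yield $\la\Om, (H_\ph + z)\Om\ra = z$ and $\la\Om, W_T \Om\ra = 0$ with $T := H_\ph + z$ and $W_T := \qHH[\uw(z)] - T$. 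Plugging into \eqref{FchiT} and using $\Gamma_\alpha^\ast \Om = \Om$ gives
\begin{align*}
Q_\alpha(z) \;=\; e^\alpha\big( z - L_\alpha(z) \big),
\qquad
L_\alpha(z) \;:=\; \Big\la \Om,\, W_T\, \bchi_\alpha\, (H_{\bchi_\alpha, H_\ph + z})^{-1}\, \bchi_\alpha\, W_T\, \Om \Big\ra,
\end{align*}
with $L_\alpha$ analytic on $D(\tfrac{1}{4} e^{-\alpha})$ (from analyticity of $\uw(z)$ and existence of the Feshbach resolvent there).

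Next I would quantify the smallness of $L_\alpha$ and $L_\alpha'$. The cutoff $\chi_\alpha \equiv 1$ on $[0,\tfrac{1}{2} e^{-\alpha}]$ forces $H_\ph \gtrsim e^{-\alpha}$ on $\Ran(\bchi_\alpha)$, and since $|z| \leq \tfrac{1}{4} e^{-\alpha}$ I obtain $\|\bchi_\alpha (H_{\bchi_\alpha, H_\ph + z})^{-1} \bchi_\alpha\| = O(e^\alpha)$ via a Neumann expansion around $(H_\ph + z)|_{\Ran(\bchi_\alpha)}^{-1}$; this is where the smallness of $\|\uw_\free - r\|_{(\partial_r)}$ enters, controlling $\|\bchi_\alpha W_T \bchi_\alpha\|$ alongside $\|\uw_\gI\|^{(\xi)}$. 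Combining this with the bound \eqref{eq-Wmn-01} applied to $\bchi_\alpha W_T \Om$ and the identity $w_{0,0}(z,r) - r - z = \int_0^r [\partial_r w_{0,0}(z,s) - 1]\,\rmd s$, one arrives at estimates of the shape $|L_\alpha(z)| = O(e^\alpha \delta^2)$ and $|L_\alpha'(z)| = O(e^{2\alpha} \delta^2)$ with $\delta := \|\uw_\gI\|^{(\xi)} + \|\uw_\free - r\|_{(\partial_r)}$, the extra $e^\alpha$ in $L_\alpha'$ coming from differentiating the Feshbach resolvent. For $\delta$ small enough (depending on $\alpha$ and $r_Z$) these sharpen to
\begin{align*}
\sup_{|z| \leq \tfrac{1}{4} e^{-\alpha}} |L_\alpha(z)| \;\leq\; \tfrac{1}{4} e^{-\alpha}(1 - 4 r_Z),
\qquad
\sup_{|z| \leq \tfrac{1}{4} e^{-\alpha}} |L_\alpha'(z)| \;\leq\; \tfrac{1}{2}.
\end{align*}

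With these bounds in place the inversion is routine. For each $w \in D(r_Z)$ the equation $Q_\alpha(z) = w$ is equivalent to the fixed-point equation $z = e^{-\alpha} w + L_\alpha(z)$; the first inequality above shows that the right-hand side maps $\bD(\tfrac{1}{4} e^{-\alpha})$ into itself and the second that it is a $\tfrac{1}{2}$-contraction, so Banach's theorem furnishes a unique $z^\ast \in D(\tfrac{1}{4}e^{-\alpha})$ with $Q_\alpha(z^\ast) = w \in D(r_Z)$. This $z^\ast$ lies in $D(\tfrac{1}{4} e^{-\alpha}) \cap Q_\alpha^{-1}(D(r_Z))$ by construction, and varying $w$ yields a bijection $D(r_Z) \to D(\tfrac{1}{4}e^{-\alpha}) \cap Q_\alpha^{-1}(D(r_Z))$; since $Q_\alpha'(z) = e^\alpha(1 - L_\alpha'(z))$ is non-vanishing on $D(\tfrac{1}{4} e^{-\alpha})$, the holomorphic inverse function theorem promotes it to a biholomorphism. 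The main obstacle is the quantitative middle step: tracking the factors of $e^\alpha$ from the Feshbach resolvent and its $z$-derivative and absorbing them into the smallness of $\delta$, which is exactly what forces the smallness thresholds for $\|\uw_\gI\|^{(\xi)}$ and $\|\uw_\free - r\|_{(\partial_r)}$ to depend on $\alpha$ and $r_Z$.
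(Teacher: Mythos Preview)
Your proposal is correct and follows essentially the same route as the paper's proof (Theorem~\ref{thm-specrg}): both compute $Q_\alpha(z) = e^\alpha(z - L_\alpha(z))$ from the vacuum identities $\Gamma_\alpha\Om=\Om$, $\chi_\alpha\Om=\Om$, $w_{0,0}(z,0)=z$, bound $L_\alpha$ and $L_\alpha'$ via the Feshbach resolvent estimate, and then invert by the contraction mapping principle on $\bD(\tfrac14 e^{-\alpha})$. The only notable difference is that the paper observes $\bchi_\alpha W_T\Om = \bchi_\alpha \qWW[\uw]\Om$ (the free-part contribution vanishes on the vacuum), yielding the sharper bound $|L_\alpha(z)|\leq 10e^\alpha(\xi\|\uw_\gI\|^{(\xi)})^2$ that depends only on the interaction norm; your bound $O(e^\alpha\delta^2)$ with $\delta=\|\uw_\gI\|^{(\xi)}+\|\uw_\free-r\|_{(\partial_r)}$ is weaker but perfectly sufficient for the qualitative statement.
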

%%%%%%%%%%%%%%%%%%%%%%%%%%%%%%%%%%%%%%%%%%%%%%%%%%%%%

%%%%%%%%%%%%%%%%%%%%%%%%%%%%%%%%%%%%%%%%%%%%%%%%%%%%%
\begin{definition} \label{E}
Suppose that 
$Q_\alpha: D(\tfrac{1}{4} e^{- \alpha}) \cap Q_\alpha^{-1}\big( D(r_Z) \big) 
\to D(r_Z)$ defines a biholomorphic function as in 
Proposition~\ref{prop-specrg}. We denote by 
\begin{align} \label{eq-boEalpha}
\boE_\alpha \ \equiv \ \boE_{\alpha, \uw}:  \ 
D(r_Z) \ \to \ D(\tfrac{1}{4} e^{-\alpha}) \cap Q_\alpha^{-1}\big( D(r_Z) \big) 
\end{align}  
the inverse of $Q_\alpha$. The function $\boE_\alpha $ is called 
\textbf{renormalization of the spectral parameter}. 
\end{definition}
%%%%%%%%%%%%%%%%%%%%%%%%%%%%%%%%%%%%%%%%%%%%%%%%%%%%%
%
The image of $\boE_\alpha$ localizes the spectral points 
$\zeta = \boE_\alpha(z)$, where the smooth Feshbach--Schur map (or,
equivalently, the smooth rescaled Feshbach--Schur map) is well-defined,
i.e.\ those 
$\zeta \in D(\tfrac{1}{4} e^{-\alpha}) \cap Q_\alpha^{-1}\big(D(r_Z)\big)$,
for which $\qHH[\uw(\zeta)]$ is in the domain of $\hcR_\alpha$.
%
%%%%%%%%%%%%%%%%%%%%%%%%%%%%%%%%%%%%%%%%%%%%%%%%%%%%%
\begin{theorem}[Main Theorem: Spectral Parameter] \label{renspect}
Let $0 < r_Z < \tfrac{1}{4}$. Suppose that $\alpha > 0$ and that $\uw$
satisfies the hypothesis of Proposition~\ref{prop-specrg}, for $r_Z$
and $\alpha$. Suppose further that $\beta > 0$ and that there is 
$\tuw \in \cW^{(\xi)}$ satisfying the hypothesis of
Proposition~\ref{prop-specrg}, for $r_Z$ and $\beta$,
such that $\qHH[\tuw(z)] = \cR_\alpha(\qHH[\uw(z)])$. Then
\begin{align} \label{eq-cocyclicity}
\boE_{\alpha + \beta, \uw} 
\ = \ 
\boE_{\alpha, \uw} \circ \boE_{\beta, \uline{\widetilde{w}}} \, . 
\end{align}  
\end{theorem}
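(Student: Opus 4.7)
The plan is to reduce the cocyclicity identity for $\boE$ to the semigroup property of the renormalization map itself, which is granted by Theorem~\ref{thm:fullsemigroup}. The link between $\boE$ and $\cR$ is provided by the defining identity $\boE_\alpha = Q_\alpha^{-1}$, together with the observation that
\begin{align*}
Q_\alpha(z) \ = \ \la \cR_\alpha(\qHH[\uw(z)]) \ra_\Om
\end{align*}
is nothing but the vacuum expectation of the renormalized operator (Definition~\ref{def-Qgamma}).

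First I would fix $z$ in the (common) domain of applicability and apply Theorem~\ref{thm:fullsemigroup} to $H=\qHH[\uw(z)]$. Since $\alpha+\beta=\beta+\alpha$, the flow identity may be read as $\cR_{\alpha+\beta}=\cR_\beta\circ\cR_\alpha$, which, combined with the hypothesis $\qHH[\tuw(z)]=\cR_\alpha(\qHH[\uw(z)])$, yields
\begin{align*}
\cR_{\alpha+\beta}\bigl(\qHH[\uw(z)]\bigr) \ = \ \cR_\beta\Bigl(\cR_\alpha\bigl(\qHH[\uw(z)]\bigr)\Bigr) \ = \ \cR_\beta\bigl(\qHH[\tuw(z)]\bigr).
\end{align*}
Taking vacuum expectations on both sides, the left-hand side becomes $Q_{\alpha+\beta,\uw}(z)$ by definition. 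On the right-hand side, the normalization~\eqref{w00z} imposed on $\tuw$ forces the scalar argument at which $\tuw$ must be evaluated after the first renormalization step to coincide with the vacuum expectation of $\cR_\alpha(\qHH[\uw(z)])$, i.e.\ with $Q_{\alpha,\uw}(z)$; hence the right-hand side equals $Q_{\beta,\tuw}\bigl(Q_{\alpha,\uw}(z)\bigr)$, and we arrive at
\begin{align*}
Q_{\alpha+\beta,\uw}(z) \ = \ Q_{\beta,\tuw}\bigl(Q_{\alpha,\uw}(z)\bigr).
\end{align*}
Inverting this composition identity by means of $\boE_\bullet=Q_\bullet^{-1}$ (which exists and is biholomorphic thanks to Proposition~\ref{prop-specrg}) produces the desired
\begin{align*}
\boE_{\alpha+\beta,\uw} \ = \ \boE_{\alpha,\uw}\circ\boE_{\beta,\tuw}.
\end{align*}

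The main obstacle, and where the genuine work lies, is checking the compatibility of the various domains and ranges so that each step above is legitimate. Concretely, one needs to verify that $\qHH[\uw(z)]$ belongs to $\dom[\cR_{\alpha+\beta}]$ for every $z \in D\bigl(\tfrac14 e^{-(\alpha+\beta)}\bigr)\cap Q_{\alpha+\beta,\uw}^{-1}(D(r_Z))$, that $\qHH[\tuw(\zeta)]$ belongs to $\dom[\cR_\beta]$ for $\zeta$ in the range of $Q_{\alpha,\uw}$, and that the image $\boE_{\beta,\tuw}(D(r_Z))$ is contained in the domain $D(r_Z)$ of $\boE_{\alpha,\uw}$ so that the composition on the right-hand side of \eqref{eq-cocyclicity} is well defined. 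These inclusions use the quantitative bounds furnished by Proposition~\ref{prop-specrg} together with the shrinking behaviour of $\chi_\alpha$ and the scaling $S_\alpha$. Once they are in place, the cocyclicity becomes an essentially algebraic consequence of the flow property of $\cR$ and the characterisation $\boE=Q^{-1}$, so that no further analytic input is required beyond the careful bookkeeping of the domains.
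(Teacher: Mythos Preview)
Your argument rests on a confusion between the two maps $\hcR_\alpha$ and $\cR_\alpha$, and this creates a genuine gap. The identity you write at the outset,
\[
Q_\alpha(z) \ = \ \big\la \cR_\alpha(\qHH[\uw(z)]) \big\ra_\Om,
\]
is not Definition~\ref{def-Qgamma}: there $Q_\alpha$ is the vacuum expectation of $\hcR_\alpha$, the rescaled Feshbach--Schur map \emph{before} the spectral parameter is renormalized. The map $\cR_\alpha$ already incorporates $\boE_\alpha$ via \eqref{eq-rgmap}, and its vacuum expectation is always $z$ by \eqref{eq-vacexpz}. Consequently, taking vacuum expectations of the flow identity $\cR_{\alpha+\beta}=\cR_\beta\circ\cR_\alpha$ from Theorem~\ref{thm:fullsemigroup} yields only the tautology $z=z$, not the composition law $Q_{\alpha+\beta,\uw}=Q_{\beta,\tuw}\circ Q_{\alpha,\uw}$ you need.

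There is also a circularity: in the paper Theorem~\ref{thm:fullsemigroup} is derived \emph{from} Theorem~\ref{renspect} (both are proved together in Theorem~\ref{thm-fullsemigroupprima}, with \eqref{uno} established first and \eqref{dos} deduced from it), so you cannot invoke the former to prove the latter. What the paper actually uses is Theorem~\ref{thm-semigroup}, the semigroup property $\hcR_\beta\circ\hcR_\alpha=\hcR_{\alpha+\beta}$ for the \emph{unrenormalized} map, which is proved independently. If you replace $\cR$ by $\hcR$ throughout and cite Theorem~\ref{thm-semigroup} instead, your computation becomes
\[
Q_{\alpha+\beta,\uw}(z)
\;=\;\big\la\hcR_{\alpha+\beta}(\qHH[\uw(z)])\big\ra_\Om
\;=\;\big\la\hcR_\beta\big(\qHH[\tuw(Q_{\alpha,\uw}(z))]\big)\big\ra_\Om
\;=\;Q_{\beta,\tuw}\big(Q_{\alpha,\uw}(z)\big),
\]
using that $\qHH[\tuw(\zeta)]=\hcR_\alpha(\qHH[\uw(\boE_{\alpha,\uw}(\zeta))])$ and hence $\qHH[\tuw(Q_{\alpha,\uw}(z))]=\hcR_\alpha(\qHH[\uw(z)])$. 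With this correction your argument coincides with the paper's proof; the domain bookkeeping you mention is handled there via Remark~\ref{bem} and an analytic continuation from the nonempty open set $\boE_\beta^{-1}(\cA_{\tuw,\beta})$.
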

%%%%%%%%%%%%%%%%%%%%%%%%%%%%%%%%%%%%%%%%%%%%%%%%%%%%%

%%%%%%%%%%%%%%%%%%%%%%%%%%%%%%%%%%%%%%%%%%%%%%%%%%%%%
\subsubsection{Main Theorems:  Renormalization  Flow}
%%%%%%%%%%%%%%%%%%%%%%%%%%%%%%%%%%%%%%%%%%%%%%%%%%%%%
%
%%%%%%%%%%%%%%%%%%%%%%%%%%%%%%%%%%%%%%%%%%%%%%%%%%%%%
\begin{definition}[Renormalization Operator]
Assuming the requirements of Definition~\ref{E}, we define 
the renormalization map $\cR_\alpha$ as 
\begin{align} \label{eq-rgmap}
\cR_\alpha\big( \qHH[\uw] \big)(z) 
\ := \ 
\hcR_\alpha\big( \qHH[\uw( \boE_\alpha(z))] \big) \, .  
\end{align}  
\end{definition}
%%%%%%%%%%%%%%%%%%%%%%%%%%%%%%%%%%%%%%%%%%%%%%%%%%%%%
%
The renormalization $\boE_\alpha$ of the spectral parameter is the key
ingredient for the flow property of the renormalization operator based
on the sharp Feshbach--Schur map, see
\cite{BachBallesterosFroehlich2015}. The choice of $\boE_\alpha$ is
determined by the requirement
\begin{align} \label{eq-vacexpz}
\big\la \cR_\alpha(\qHH[\uw])(z) \big\ra_\Om \ = \ z \, ,
\end{align}  
which is a consequence of our definitions, since     
\begin{align} \label{eq-vacexpz-2}
\big\la \cR_\gamma(\qHH[\uw])(z) \big\ra_\Om 
\ = \ 
\big\la \hcR_\gamma(\qHH[\uw(Q_\gamma^{-1}(z))]) \big\ra_\Om 
\ = \ 
Q_\gamma(Q_\gamma^{-1}(z)) \ = \ z \, .
\end{align}  
For the smooth version of the Feshbach--Schur map, the new key
ingredient for the flow property is to fix the operator $T$ in the 
smooth Feshbach--Schur map to be $H_\ph$.
%
%%%%%%%%%%%%%%%%%%%%%%%%%%%%%%%%%%%%%%%%%%%%%%%%%%%%%
\begin{theorem}[Main Theorem: Flow Property]\label{thm:fullsemigroup}
Assume that the hypothesis of Theorem~\ref{renspect} is fulfilled. 
Then 
\begin{align}
\forall\, z\in D_{r_Z}: \quad 
\cR_\beta \circ \cR_\alpha(\qHH[\uw(z)]) 
\ = \ 
\cR_{\alpha+\beta}(\qHH[\uw(z)]) \, .
\end{align}  
\end{theorem}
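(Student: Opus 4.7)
The plan is to unfold both sides of the claimed flow identity, pull the dilations $S_\alpha, S_\beta$ outside the nested smooth Feshbach--Schur maps by a unitary intertwining, and reduce the claim to an abstract composition identity for the smooth Feshbach--Schur map, which is then verified using the semigroup properties of the cutoffs and dilations together with the $T$-freedom granted by Theorems~\ref{IND} and \ref{Tmaintsyeah}.

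Writing $\zeta := \boE_{\alpha+\beta,\uw}(z)$ and $\eta := \boE_{\beta,\tuw}(z)$, the cocyclicity from Theorem~\ref{renspect} gives $\zeta = \boE_{\alpha,\uw}(\eta)$, and the definitions of $\cR_\alpha$ in~\eqref{eq-rgmap} and of $\hcR_\alpha$ in~\eqref{eq-functionchisemigroup-3} put the identity to prove into the form
\begin{align*}
S_\beta\Bigl( F_{\chi_\beta, H_\ph + \eta}\bigl( S_\alpha\bigl( F_{\chi_\alpha, H_\ph + \zeta}(H_0)\bigr)\bigr)\Bigr) \;=\; S_{\alpha+\beta}\Bigl( F_{\chi_{\alpha+\beta}, H_\ph + \zeta}(H_0)\Bigr),
\end{align*}
with $H_0 := \qHH[\uw(\zeta)]$. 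Next I would derive the scaling intertwiner
\begin{align*}
F_{\chi_\beta[H_\ph], H_\ph + \eta} \circ S_\alpha \;=\; S_\alpha \circ F_{\chi_\beta[e^\alpha H_\ph], H_\ph + e^{-\alpha}\eta}
\end{align*}
from the similarity invariance of $F_{\chi, T}$ under conjugation by $\Gamma_\alpha$, the identity $\Gamma_\alpha^* H_\ph \Gamma_\alpha = e^\alpha H_\ph$, and the homogeneity $F_{\chi, cT}(cH) = c\,F_{\chi, T}(H)$. Combining this with the semigroup law $S_\beta \circ S_\alpha = S_{\alpha+\beta}$ (which follows from $\Gamma_\alpha\Gamma_\beta = \Gamma_{\alpha+\beta}$) reduces the flow property to the abstract composition identity
\begin{align*}
F_{\chi_\beta[e^\alpha H_\ph], H_\ph + e^{-\alpha}\eta}\bigl( F_{\chi_\alpha, H_\ph + \zeta}(H_0)\bigr) \;=\; F_{\chi_{\alpha+\beta}, H_\ph + \zeta}(H_0).
\end{align*}

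Finally I would prove this composition identity. The algebraic drivers are the cutoff semigroup $\chi_\beta[e^\alpha H_\ph]\,\chi_\alpha[H_\ph] = \chi_{\alpha+\beta}[H_\ph]$ from Definition~\ref{dfn-smoothfamily} and the derived decomposition $\bchi_{\alpha+\beta}^2 = \bchi_\alpha^2 + \chi_\alpha^2\,\bchi_\beta^2[e^\alpha H_\ph]$, which allow the nested Schur-complement resolvent produced by the iterated Feshbach--Schur maps to be recombined into the single resolvent on the right-hand side. The principal obstacle is the constant mismatch $\delta := e^{-\alpha}\eta - \zeta$ between the outer and inner free parts, which in the smooth case generates cross-terms proportional to $\delta\,\chi_\beta[e^\alpha H_\ph]\,\bchi_\beta[e^\alpha H_\ph]$ that are absent in the sharp Feshbach setting of \cite{BachBallesterosFroehlich2015}. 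The resolution is to invoke Theorem~\ref{Tmaintsyeah} with auxiliary operator $S := H_\ph + \zeta$ to re-express the outer smooth Feshbach--Schur map in terms of the common perturbation $W_{H_\ph+\zeta}$ and the factor $f_{\chi_\beta[e^\alpha H_\ph], H_\ph + e^{-\alpha}\eta}(H_\ph + \zeta)$, whose presence precisely accounts for the constant shift; the legitimacy of this substitution is underwritten by Theorem~\ref{IND}. A term-by-term expansion, together with the $\bchi^2$-decomposition above, then matches the two sides. This reconciliation of two different free parts in the smooth setting is the technical crux of the argument and the precise reason why Theorems~\ref{IND} and \ref{Tmaintsyeah} are the central analytic tools developed earlier in the paper.
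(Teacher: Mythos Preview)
Your reduction to the abstract composition identity
\[
F_{\chi_\beta[e^\alpha H_\ph],\, H_\ph + e^{-\alpha}\eta}\bigl(F_{\chi_\alpha,\, H_\ph+\zeta}(H_0)\bigr)
\;=\;
F_{\chi_{\alpha+\beta},\, H_\ph+\zeta}(H_0)
\]
contains a genuine gap: this identity is \emph{false} whenever $e^{-\alpha}\eta\neq\zeta$. A quick way to see this is via the inverse formula~\eqref{eq-Fsmoothinvers}: the left-hand inverse contains the summand $\bchi_\beta[e^\alpha H_\ph]^2\,(H_\ph+e^{-\alpha}\eta)^{-1}$, whereas the right-hand inverse contains $\bchi_\beta[e^\alpha H_\ph]^2\,(H_\ph+\zeta)^{-1}$ (using $\bchi_{\alpha+\beta}^2=\bchi_\beta[e^\alpha H_\ph]^2+\chi_\beta[e^\alpha H_\ph]^2\bchi_\alpha^2$), and $\bchi_\beta[e^\alpha H_\ph]$ does not vanish on $\fH_{\chi_{\alpha+\beta}}$. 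Your proposed repair via Theorem~\ref{Tmaintsyeah} cannot close this: that theorem merely \emph{re-expresses} the same operator $F_{\chi,T}(H)$ in terms of $W_S$ and the correction factor $f_{\chi,T}(S)$; it does not replace the $T$-parameter. By Theorem~\ref{IND} the map genuinely depends on $T|_{\fH_{\chi\bchi}}$, and the constants $e^{-\alpha}\eta$ and $\zeta$ differ there.

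The paper's route is both different and shorter. It fixes $T=H_\ph$ throughout (cf.\ the remark just before Theorem~\ref{thm:fullsemigroup} and the hypotheses of Theorem~\ref{thm-semigroup}), so no constant mismatch ever arises. The semigroup identity for $\hcR$ is then proved in one stroke by computing \emph{inverses} via~\eqref{eq-Fsmoothinvers}: under the assumption $0\in\rho(\qHH[\uw])$ one writes out $\bigl[\hcR_\beta(\hcR_\alpha(\qHH[\uw]))\bigr]^{-1}$, commutes the dilations past functions of $H_\ph$, and the cutoff semigroup~\eqref{eq-functionchisemigroup} collapses the nested expression to $\hcR_{\alpha+\beta}(\qHH[\uw])^{-1}$ (this is~\eqref{eq-step2-01}). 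Theorem~\ref{Tmaintsyeah} is used only in Step~1 of Theorem~\ref{thm-semigroup}, to verify the domain condition $\qHH[\uw]\in\dom(F_{\chi_{\alpha+\beta},H_\ph})$, not in the flow identity itself. The cocyclicity of $\boE$ is then \emph{derived} from the $\hcR$-semigroup (rather than assumed, as you do), and both conclusions are extended from the invertibility region to all of $D(r_Z)$ by analyticity.
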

%%%%%%%%%%%%%%%%%%%%%%%%%%%%%%%%%%%%%%%%%%%%%%%%%%%%%

%%%%%%%%%%%%%%%%%%%%%%%%%%%%%%%%%%%%%%%%%%%%%%%%%%%%%
\subsection{Iterated Applications of the Smooth 
Feshbach--Schur Map}\label{Itera}
%%%%%%%%%%%%%%%%%%%%%%%%%%%%%%%%%%%%%%%%%%%%%%%%%%%%%
%
Theorems~\ref{renspect} and \ref{thm:fullsemigroup} require the
existence of an interaction kernel $\tuw$ such that 
$\qHH[\tuw(z)] = \cR_\alpha(\qHH[\uw(z)])$. More restrictive assumptions on
$\uw$ ensure this. Once this is achieved, the renormalization map can
be iterated any finite number of times, and the spectral analysis
described at the end of Section~\ref{intro} can be performed.

The goal of the present paper is to introduce a renormalization map
based on the smooth Feshbach--Schur map that satisfies a flow property. The
spectral analysis of operators in quantum field theory is not the
objective of this paper, and we do not include it here because this
would shift the focus away from the flow property to a series of
technically involved proofs which would lead to a considerable
increase in length and made this article difficult to read. The
iterative application method for the renormalization map was first
introduced in \cite{BachFroehlichSigal1998a, BachFroehlichSigal1998b}
for the sharp Feshbach--Schur map, and in
\cite{BachChenFroehlichSigal2003} for the smooth Feshbach--Schur map. The
corresponding results for a renormalization map based on the sharp
Feshbach--Schur map satisfying a flow property were derived in
\cite{BachBallesterosFroehlich2015}. Our results for the smooth
counterpart are presented below as an announcement, but our proofs are
deferred to a forthcoming paper.

The idea of the construction of the iterative applications of the
smooth Feshbach--Schur map is that the operators we use must be
all the time close to the free photon energy operator
$H_\ph + z = H(\uline{r+z})$, notably also after the application of
the renormalization map. This is indeed the situation we consider in
this paper, since the central requirement of our main theorems is that
$\| (\uw - \uline{r + z})_\gI \|^{(\xi)} = \| (\uw)_\gI \|^{(\xi)}$ and
$\| (\uw - \uline{r+z})_\free \|_{(\partial_r)} = 
\max_{r \in \cI} |\partial_r w_\free - 1|$ are sufficiently small.

For iterated applications of the renormalization map, we need more
refined norms. For a kernel $\uw \in \cW^\xi$, we define
\begin{align} \label{eq-newnormpartr}
\| \uw \| 
\ := \ 
\| \uw \|^{(\xi)} \: + \: \| \partial_z \uw \|^{(\xi)} 
\: + \: \| \partial_r \uw \|^{(\xi)}    
\end{align}  
and 
\begin{align} \label{normpartr}
\| \uw \|_{(\partial_z)} 
\ := \ 
\sup\big\{ |\partial_z w_{0,0}(z,r)| \: : \ 
r \in \cI \, , \ z \in D(\tfrac{1}{4}) \big\} \, . 
\end{align}  
For every 
\begin{align}
\va \ = \ (a_I, a_R, a_Z) \ \in \ (\RR^+)^3 \, , 
\end{align} 
we denote by $\cW^\xi_{\va}$ the polydisc of all
$\uw \in \cW^\xi $ such that
\begin{align}
\big\| (\uw -\uline{r, z})_I \big\| \; \leq \; a_I 
\, , \quad
\| \uw -\uline{r, z} \|_{(\partial_r)} \; \leq \; a_R
\, , \quad
\| \uw -\uline{r, z} \|_{(\partial_z)} \; \leq \; a_Z \, .
\end{align}  
For $\vb \in (\RR^+)^3$, we say that $\vb \leq \va$, if this relation
holds componentwise, i.e., $b_I \leq a_I$, $b_R \leq a_R$, and 
$b_Z \leq a_Z$. Moreover, we denote $|\vb| = b_I + b_R + b_Z$. The
existence of iterative applications of the renormalization map is a
consequence of the following theorem, which states that, for a
suitably chosen polydisc $\cW^\xi_{\va}$ and sufficiently small
initial data, their orbits under iterated applications of the
renormalization map never leave this polydisc.
%
%%%%%%%%%%%%%%%%%%%%%%%%%%%%%%%%%%%%%%%%%%%%%%%%%%%%%
\begin{announcement} \label{ann}
There exist $\boldsymbol{\va} \in (\RR^+)^3 $, $\xi > 0$, and a closed, 
non-empty interval $\boI \neq \emptyset$ such that, for every 
$\uw \in \cW^\xi_{\boldsymbol{\va}}$ and every $\alpha \in \boldsymbol{I}$, 
$\uw$ satisfies all norm-bounds of the hypotheses of  
Theorems~\ref{renspect} and \ref{thm:fullsemigroup}. 

Moreover, for $\vec \epsilon \leq \vec {\boldsymbol{a}}$ with 
$|\vec \epsilon|$ sufficiently small, $H(\uw)$ belongs to the domain 
of $( R_\alpha )^{n}$, for all $n \in \NN$, and there exists 
$\uw^{(n)} \in  \cW^\xi_{\vec{\boldsymbol{\alpha}}}$ such that
\begin{align}\label{Rn}
\qHH[\uw^{(n)}] \ = \ R_\alpha^{n} \big( \qHH[\uw] \big).   
\end{align}  
The interaction part of $\uw^{(n)}$ contracts exponentially, as $n$
tends to infinity, i.e., there exists $\ell < 1$ such  that
\begin{align}\label{contractionn}
\big\| (\uw^{(n)})_\gI \big\|_Z^{(\xi)}
\ \leq \ \ell^n \, .  
\end{align}  
\end{announcement}
%%%%%%%%%%%%%%%%%%%%%%%%%%%%%%%%%%%%%%%%%%%%%%%%%%%%%

The iteration of the renormalization map in Theorem~\ref{ann} is
possible by the contraction property~\eqref{contractionn}, because it
implies that the kernel $\uw^{(n)} $ gets ever closer (exponentially
fast) to the free kernel, as $n$ tends to infinity.  The fact,
however, that we take $H_\ph$ as a parameter in the smooth
Feshbach--Schur map in Definition~\ref{def:rescaledfsmap} (and
$W(\uw^{(n)}) \neq H_\ph $) causes some problems when controlling the
free part $w^{(n)}_{0,0}$ (here Theorem~\ref{Tmaintsyeah} is useful).
The key observation to solve these problems is that the norm estimates
for $w^{(n)}_{0,0}(z,r)$ depend on $w^{(n-1)}_{0,0}(z,r)$ only for $r
\in [0, \tfrac{1}{2}]$.  More specifically, defining
\begin{align}\label{normpartialr}
\| \uw \|_{(\partial_r, -)} 
\ := \ &
\sup\big\{ |\partial_r w_{0,0}(z,r)| \: : \ 
r \in [0, \tfrac{1}{2}] \, , \ z \in D(\tfrac{1}{4}) \big\} \, , 
\\[1ex] \label{normpartialz}
\| \uw \|_{(\partial_z, -)} 
\ := \ &
\sup\big\{ |\partial_z w_{0,0}(z,r)| \: : \ 
r \in [0, \tfrac{1}{2}] \, , \ z \in D(\tfrac{1}{4}) \big\} \, , 
\end{align}  
we make use of the fact that the norms 
$\| \uw^{(n)} - \uline{r+z} \|_{(\partial_r)}$ and 
$\| \uw^{(n)} - \uline{r+z} \|_{(\partial_z)}$ 
can be estimated solely in terms of 
$\| \uw^{(n-1)} - \uline{r+z} \|_{(\partial_r, -)}$ and 
$\| \uw^{(n-1)} - \uline{r+z} \|_{(\partial_z, -)}$.

%%%%%%%%%%%%%%%%%%%%%%%%%%%%%%%%%%%%%%%%%%%%%%%%%%%%%
%%%%%%%%%%%%%%%%%%%%%%%%%%%%%%%%%%%%%%%%%%%%%%%%%%%%%
%%%%%%%%%%%%%%%%%%%%%%%%%%%%%%%%%%%%%%%%%%%%%%%%%%%%%
\section{Proofs of Section~\ref{SFSM}
The Smooth Feshbach--Schur Map}\label{sec:SFM}
%%%%%%%%%%%%%%%%%%%%%%%%%%%%%%%%%%%%%%%%%%%%%%%%%%%%%
%%%%%%%%%%%%%%%%%%%%%%%%%%%%%%%%%%%%%%%%%%%%%%%%%%%%%
%%%%%%%%%%%%%%%%%%%%%%%%%%%%%%%%%%%%%%%%%%%%%%%%%%%%%
%
%%%%%%%%%%%%%%%%%%%%%%%%%%%%%%%%%%%%%%%%%%%%%%%%%%%%%
\subsection{Proof of Theorem~\ref{IND}} \label{subsec:SFM.1}
%%%%%%%%%%%%%%%%%%%%%%%%%%%%%%%%%%%%%%%%%%%%%%%%%%%%%
%
In this section we study the smooth Feshbach--Schur map introduced in
Section~\ref{SFSM} in detail and prove Theorem~\ref{IND}. We use the
notation introduced in the latter section.
%
%%%%%%%%%%%%%%%%%%%%%%%%%%%%%%%%%%%%%%%%%%%%%%%%%%%%%
\begin{prop} \label{prop-isospecSFB}
Suppose that $H: \fD \to \fH$ belongs to the domain of 
$F_{\chi,T}$ (see Section~\ref{SFSM}). Then $H$ is bounded 
invertible if and only if $F_{\chi,T}(H): \fD_\chi \to \fH_\chi$ 
is bounded invertible [recall \eqref{hilbert-1}]. In either case:      
\begin{align} \label{eq-Fsmoothinvers}
F_{\chi,T}(H)^{-1} \ = \  
\Big[ \chi H^{-1}\chi + \bchi T^{-1}\bchi  \Big]\Big|_{\fH_\chi}.
\end{align}  
Moreover, $\dim[\cern(H)] = \dim[\cern(F_{\chi,T}(H))]$.
\begin{proof}
See \cite[Theorem~1]{GriesemerHasler2008} and
\cite[Theorem~2.1]{BachChenFroehlichSigal2003}.
\end{proof}
\end{prop}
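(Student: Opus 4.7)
My plan is a standard Feshbach/Schur-complement factorization argument, adapted to the smooth setting. The idea is to establish a single algebraic intertwining between $H$ and $F_{\chi,T}(H)$ via two bounded ``feeding'' operators, and then extract all three conclusions---equivalence of bounded invertibility, the explicit formula for the inverse, and the equality of kernel dimensions---from this intertwining by direct algebra.

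Concretely, I would introduce
\begin{equation*}
Q \ := \ \chi \, - \, \bchi \, (H_{\bchi,T})^{-1} \, \bchi \, W_T \, \chi,
\qquad
\bar Q \ := \ \chi \, - \, \chi \, W_T \, \bchi \, (H_{\bchi,T})^{-1} \, \bchi,
\end{equation*}
both of which are bounded and well-defined by the hypotheses placing $H$ in $\dom[F_{\chi,T}]$. The key step is to verify, by direct computation using $\chi^2 + \bchi^2 = \mathbf{1}$, the commutation relations $\chi T \subset T \chi$ and $\bchi T \subset T \bchi$ from Hypothesis~\ref{hypo-T}, and the decomposition $H = T + W_T$, the intertwining identities
\begin{equation*}
H \, Q \ = \ \chi \, F_{\chi,T}(H) \quad \text{on } \fD_\chi,
\qquad
\bar Q \, H \ = \ F_{\chi,T}(H) \, \chi \quad \text{on } \fD.
\end{equation*}
A convenient intermediate step is the identity $H \, R_\bchi = \bchi^2 + \chi^2 \, W_T \, R_\bchi$ with $R_\bchi := \bchi \, (H_{\bchi,T})^{-1} \, \bchi$, which follows from $T_\bchi = H_{\bchi,T} - \bchi W_T \bchi$ upon inverting $H_{\bchi,T}$; inserting this into $H Q$ and using $\mathbf{1} = \chi^2 + \bchi^2$ a few times, the cross-terms collapse to exactly the definition of $F_{\chi,T}(H)$.

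Given the intertwinings, all three claims follow by direct algebra. If $H$ is bounded invertible, then writing the candidate $G := \chi \, H^{-1} \, \chi + \bchi \, T^{-1} \, \bchi$, I would verify $F_{\chi,T}(H) \cdot G|_{\fH_\chi} = \mathrm{id}_{\fH_\chi}$ and $G|_{\fH_\chi} \cdot F_{\chi,T}(H) = \mathrm{id}_{\fD_\chi}$ by composing with $F_{\chi,T}(H)$ and using $\bar Q H = F_{\chi,T}(H) \chi$ together with $\chi^2 + \bchi^2 = \mathbf{1}$ and the fact that $T^{-1}$ and $\bchi$ commute on $\fH_\bchi$. Conversely, if $F_{\chi,T}(H)$ is bounded invertible, I would define $H^{-1} := Q \, F_{\chi,T}(H)^{-1} \, \bar Q + R_\bchi$ and verify $H \cdot H^{-1} = H^{-1} \cdot H = \mathrm{id}$ by the same algebra. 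For the kernel equality, the intertwinings directly produce linear maps $\chi : \cern(H) \to \cern(F_{\chi,T}(H))$ (using $\bar Q H = F_{\chi,T}(H) \chi$) and $Q : \cern(F_{\chi,T}(H)) \to \cern(H)$ (using $H Q = \chi F_{\chi,T}(H)$), and a short calculation shows they are mutual inverses, giving the bijection.

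The main technical obstacle is not the algebra itself but the careful domain bookkeeping. Since $T$ is generally unbounded and $\chi, \bchi$ are only commuting bounded positive operators (not orthogonal projections), the ranges $\fH_\chi$ and $\fH_\bchi$ overlap generically and the decomposition $\fH = \fH_\chi + \fH_\bchi$ is non-direct. Every use of $\chi T = T \chi$ and every composition has to be justified at the domain level via Hypothesis~\ref{hypo-T}, particularly the invariances $\chi, \bchi, P_\chi, P_\bchi : \fD \to \fD$ and the bounded invertibility of $T_\bchi$. Once these issues are handled, the algebraic content reduces to a handful of applications of $\chi^2 + \bchi^2 = \mathbf{1}$.
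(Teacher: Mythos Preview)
Your approach is correct and is essentially the standard argument from the references the paper cites: the paper does not give its own proof of this proposition but simply refers to \cite[Theorem~1]{GriesemerHasler2008} and \cite[Theorem~2.1]{BachChenFroehlichSigal2003}, and your intertwining operators $Q$, $\bar Q$ together with the identities $HQ = \chi F_{\chi,T}(H)$ and $\bar Q H = F_{\chi,T}(H)\chi$ are exactly the machinery used there. Your caveat about domain bookkeeping is well placed and is the only point requiring care.
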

%%%%%%%%%%%%%%%%%%%%%%%%%%%%%%%%%%%%%%%%%%%%%%%%%%%%%
%
We refer to  Proposition~\ref{prop-isospecSFB} 
as \uline{\textit{isospectrality of $H$ and $F_{\chi,T}(H)$}}. 
Moreover, the operators $\chi$ and $\bchi$ are referred to as 
\uline{\textit{smooth projections}}.
%
%%%%%%%%%%%%%%%%%%%%%%%%%%%%%%%%%%%%%%%%%%%%%%%%%%%%%
\begin{bem} 
In case that $\chi = \chi^2$ is an orthogonal projection,  
we have that $P_\chi = \chi =: P$, $\bchi = P_\bchi = P^\perp$,
and $\fH_{P^\perp} = \fH_P^\perp$. In this case, Eq.~\eqref{Jpto1} 
becomes 
\begin{align} \label{eq-projfesh-01}
H_{P^\perp,T} 
\ = \  
T_{P^\perp} + P^\perp W_T P^\perp
\ = \ 
\big( P^{\perp} H P^{\perp} \big) \Big|_{\fH_P^\perp} \, ,   
\end{align}  
independently of $T$, and we write 
$H_{P^\perp} := H_{P^\perp,T}$. Similarly, for $H$ in the domain of
$F_{\chi,T}$ it follows from Eqs~\eqref{HTW} and \eqref{FchiT} that
\begin{align} \label{eq-projfesh-02}
F_{P, T}(H) 
\ = \ 
\big(  PHP + P W P^\perp 
\big( P^\perp H P^\perp \big |_{\fH_P^\perp} \big)^{-1}  W P \big)\Big|_{\fH_P} 
\ = \  
F_P(H) \, , 
\end{align}  
independently of $T$, where $F_P$ is the (projection based)
Feshbach--Schur map introduced in \cite{BachFroehlichSigal1998b}.
\end{bem}
%%%%%%%%%%%%%%%%%%%%%%%%%%%%%%%%%%%%%%%%%%%%%%%%%%%%%
%
Now we prove Theorem~\ref{IND}, which shows that $T P_{\chi\bchi}$ is
the only relevant part of $T$ for the smooth Feshbach--Schur map. We restate
Theorem~\ref{IND} for the convenience of the reader.
%
%%%%%%%%%%%%%%%%%%%%%%%%%%%%%%%%%%%%%%%%%%%%%%%%%%%%%
\begin{theorem}\label{thm-ambiguity}
Assume Hypothesis~\ref{hypo-chi}, and let
$S$ and $T$ be two operators that fulfill
Hypothesis~\ref{hypo-T} with $\fD := \fD(S) = \fD(T)$
and such that $T|_{\fH_{\chi \bchi}} = S|_{\fH_{\chi \bchi}}$.
An operator $H: \fD \to \fH$ belongs to the domain of 
$F_{\chi,S}$ if, and only if, it belongs to the domain of 
$F_{\chi,T}$. In either case
\begin{align} \label{eq-domfeshST-02}
F_{\chi,S}(H) \ = \ F_{\chi,T}(H)
\end{align}   
on $\fD_\chi \subseteq \fH_\chi$.
\begin{proof}
We first remark that, formally, Identity~\eqref{eq-domfeshST-02} 
follows from \eqref{eq-Fsmoothinvers} because 
$T|_{\fH_{\chi \bchi}} = S|_{\fH_{\chi \bchi}}$ implies that
$\bchi T^{-1} \bchi|_{\fH_\chi} = \bchi S^{-1} \bchi|_{\fH_\chi}$
and thus
\begin{align} \label{eq-Fsmoothinvers-2}
F_{\chi,T}(H)^{-1} 
\ = \ & 
\Big[ \chi H^{-1}\chi + \bchi T^{-1}\bchi  \Big]\Big|_{\fH_\chi}
\ = \  
\Big[ \chi H^{-1}\chi + \bchi S^{-1}\bchi  \Big]\Big|_{\fH_\chi}
\nonumber \\[1ex]
\ = \ & 
F_{\chi,S}(H)^{-1} \, .
\end{align}  
This argument assumes, however, (a) that $H$ belongs to the domain of
$F_{\chi,T}$ if, and only if, it belongs to the domain of $F_{\chi,S}$
and (b) that $F_{\chi,T}(H)$ or $F_{\chi,S}(H)$ (and hence both) are
invertible.
 
For a more careful argument, we first establish (a).  Namely, we now
assume that $H$ is in the domain of $F_{\chi,T}$ and demonstrate that
$H$ also belongs to the domain of $F_{\chi,S}$.  Since $H$ belongs to
the domain of $F_{\chi,T}$ we have that $\fD(T) = \fD(H)$, and by
assumption $\fD(S) = \fD(T)$, which implies that $\fD(S) = \fD(H)$.
Next, we recall the notation of Section~\ref{SFSM},
\begin{align} \label{eq-3.03}
T + W_T \ = \ H \ = \ S + W_S \, .
\end{align}  
Multiplying by $\chi$ from the left and by $\bchi$ from the right
and using that $T$ and $S$ commute with both $\chi$ and $\bchi$,
we obtain 
\begin{align} \label{eq-3.04}
\chi W_T \bchi - \chi W_S \bchi 
\ = \ 
(S - T) \, \chi \, \bchi 
\ = \ 0 \, , 
\end{align}  
since $S P_{\chi\bchi} = T P_{\chi\bchi}$. Similarly
$\bchi W_T \chi = \bchi W_S \chi$ on $\fD$, so
\begin{align} \label{eq-3.05}
\chi W_T \bchi \ = \ \chi W_S \bchi 
\quad \text{and} \quad 
\bchi W_T \chi \ = \ \bchi W_S \chi 
\end{align}  
on $\fD$. Next we use \eqref{eq-3.04} on $\fD_\bchi$ and
$\chi^2 + \bchi^2 = 1$ to obtain
\begin{align} \label{eq-3.06}
H_{\bchi,S} \ = \ &
S_\bchi + \bchi W_S \bchi 
\ = \ 
S_\bchi + (T-S) \bchi^2 + \bchi W_T \bchi 
\\[1ex] \nonumber 
\ = \ &
T_\bchi \bchi^2 + S_\bchi \chi^2 + \bchi W_T \bchi 
\ = \ 
T_\bchi + \bchi W_T \bchi 
\ = \ 
H_{\bchi,T} \, ,
\end{align}  
because $S_\bchi \chi^2 = S P_{\chi\bchi} \chi^2 = T P_{\chi\bchi}
\chi^2 = T_\bchi \chi^2$ by assumption. Therefore, since $H_{\bchi,T}$
is bounded invertible, so is $H_{\bchi,S}$, and since
$\bchi (H_{\bchi,T})^{-1} \bchi W_T \chi$ defines a bounded
operator $\fH_\chi \to \fH_\bchi$, so does 
$\bchi (H_{\bchi,S})^{-1} \bchi W_S \chi$. This proves that $H$
belongs to the domain of $F_{\chi,S}$.

Finally, an argument similar to \eqref{eq-3.06},
interchanging the roles of $\chi$ and $\bchi$, shows that
$H_{\chi,S} = H_{\chi,T}$ on $\fD_\chi$. This together with \eqref{eq-3.06}
and \eqref{eq-3.05} implies that $F_{\chi,S}(H) = F_{\chi,T}(H)$ 
on $\fD_\chi$.
\end{proof}
\end{theorem}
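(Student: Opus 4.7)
The plan is to compare $F_{\chi,T}(H)$ and $F_{\chi,S}(H)$ term by term in the definitions~\eqref{FchiT}, \eqref{Jpto1}, \eqref{Hchi}. Writing $W_T = H - T$ and $W_S = H - S$ so that $W_T - W_S = S - T$, the operator $(S-T)$ appears everywhere flanked by some composition of $\chi$ and $\bchi$. Using the commutations $\chi T \subset T\chi$, $\bchi T \subset T\bchi$ in Hypothesis~\ref{hypo-T} (and their $S$-analogues) together with the preservation of $\fD$ by $\chi, \bchi$, these compositions can be rearranged so that $(S-T)$ ultimately acts on vectors in $\fH_{\chi\bchi}$---precisely the subspace on which $T = S$ by hypothesis.

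Concretely, the first step is to establish
\begin{align*}
\chi\, W_T\, \bchi \ = \ \chi\, W_S\, \bchi \quad\text{and}\quad \bchi\, W_T\, \chi \ = \ \bchi\, W_S\, \chi
\end{align*}
on $\fD$. The difference reduces to $\chi(S-T)\bchi\phi$, which by commutation equals $(S-T)\chi\bchi\phi$ for $\phi \in \fD$. Since $\chi\bchi\phi \in \Ran(\chi) \cap \Ran(\bchi) \subseteq \fH_{\chi\bchi}$ and $\chi\bchi\phi \in \fD$, the hypothesis $T|_{\fH_{\chi\bchi}} = S|_{\fH_{\chi\bchi}}$ forces vanishing.

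The crucial second step is to prove $H_{\bchi,T} = H_{\bchi,S}$ on $\fD_\bchi$ (and symmetrically $H_{\chi,T} = H_{\chi,S}$ on $\fD_\chi$). For $\phi \in \fD_\bchi$, using $\chi^2 + \bchi^2 = 1$ and the commutation of $S-T$ with $\bchi$,
\begin{align*}
(H_{\bchi,S} - H_{\bchi,T})\phi \ = \ (S-T)\phi + \bchi^2(T-S)\phi \ = \ \chi^2(S-T)\phi \ = \ (S-T)\chi^2\phi.
\end{align*}
Since $\phi \in \fH_\bchi$ and $\chi^2$ commutes with $\bchi$, the vector $\chi^2\phi$ belongs to $\Ran(\chi) \cap \fH_\bchi \subseteq \fH_{\chi\bchi}$; combined with $\chi^2\phi \in \fD$, the hypothesis again gives vanishing.

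Assembling these identities into \eqref{FchiT}, all constituent pieces of $F_{\chi,T}(H)$ and $F_{\chi,S}(H)$ agree on $\fD_\chi$; the domain-equivalence statement then follows because the conditions defining the two domains (bounded invertibility of $H_{\bchi,\cdot}$ and boundedness of $\bchi(H_{\bchi,\cdot})^{-1}\bchi W_\cdot \chi$) reduce to identical statements after the identifications just established. The main obstacle I anticipate is the transfer of the graph-norm equivalence~\eqref{HTgraphnorm} from the pair $(H,T)$ to $(H,S)$: the hypothesis $T = S$ on $\fH_{\chi\bchi}$ does not control $(S-T)\phi$ for arbitrary $\phi \in \fD$, so bounding $\|(S-T)\phi\|$ in terms of $\|T\phi\|$ and $\|\phi\|$ is not immediate. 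I expect to handle this by splitting $\phi$ via $\chi^2 + \bchi^2 = 1$ and exploiting the bounded invertibility of $T_\bchi$ and $S_\bchi$ on $\fH_\bchi$, but this is the subtle technical point of the proof.
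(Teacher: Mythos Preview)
Your proposal is correct and follows essentially the same route as the paper's proof: both establish $\chi W_T\bchi = \chi W_S\bchi$ (and the symmetric identity) from $(S-T)\chi\bchi = 0$, then prove $H_{\bchi,S}=H_{\bchi,T}$ on $\fD_\bchi$ via $\chi^2+\bchi^2=1$ and $S_\bchi\chi^2 = T_\bchi\chi^2$, and assemble the result. Regarding your anticipated obstacle on the graph-norm equivalence~\eqref{HTgraphnorm}: this is in fact automatic, since $S$ and $T$ are both \emph{closed} with the common domain $\fD$, so the closed graph theorem applied to $S:(\fD,\|\cdot\|_T)\to\fH$ gives $\|S\phi\|\leq C(\|T\phi\|+\|\phi\|)$ and symmetrically; the paper's proof simply records $\fD(S)=\fD(H)$ and does not comment further.
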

%%%%%%%%%%%%%%%%%%%%%%%%%%%%%%%%%%%%%%%%%%%%%%%%%%%%%

%%%%%%%%%%%%%%%%%%%%%%%%%%%%%%%%%%%%%%%%%%%%%%%%%%%%%
\subsection{Proof of Theorem~\ref{Tmaintsyeah}} 
\label{Proof-of-Tmaintsyeah}
%%%%%%%%%%%%%%%%%%%%%%%%%%%%%%%%%%%%%%%%%%%%%%%%%%%%%
%
In this section we prove Theorem~\ref{Tmaintsyeah}. We use the
notation of Section~\ref{Extra} and assume the corresponding
hypotheses. We recall from \eqref{hilbert-2,1} that 
$\bchi = \bchi P_\bchi = P_\bchi \bchi$ and from
Definition~\ref{Delta} that $\Delta_{\chi,T}(S) = T\chi^2 + S\bchi^2$.
In the following remark we motivate the definition of $f_{\chi,T}(S)$
in Eq.~\eqref{f}. For this we recall from \eqref{hilbert-2,1} that
\begin{align} \label{ca1}
\bchi \ = \ \bchi \, P_\bchi \ = \ P_\bchi \, \bchi \, ,
\end{align}  
due to the self-adjointness of $\bchi$.

%%%%%%%%%%%%%%%%%%%%%%%%%%%%%%%%%%%%%%%%%%%%%%%%%%%%%
\begin{bem}\label{DelraRem}
Assuming that $\Delta_{\chi,T}(S) \big|_{\fH_\bchi}$ is bounded invertible,
the operator $f_{\chi,T}$ permits us to trade $W_T$ for $W_S$ in the
smooth Feshbach--Schur operator $F_{\chi,T}(H)$. More precisely,
we exchange in the quadratic term with respect to $W_T$ of the smooth
Feshbach--Schur map the factor $\bchi W_T$ by 
$\bchi W_S f_{\chi,T}(S)$ using that
\begin{align}\label{ca1,1}
\bchi \, W_T \: - \: \bchi \, W_S \, f_{\chi,T}(S) 
\ = \ 
H_{\bchi, T} \,  (S-T) \, \Delta_{\chi,T}(S)^{-1} \, \bchi
\end{align}  
and the factor $W_T \bchi$ by $f_{\chi,T}(S) W_S \bchi$ using that
\begin{align} \label{ca2}
W_T \, \bchi \: - \: f_{\chi,T}(S) \, W_S \, \bchi
\ = \ 
(S-T) \, \Delta_{\chi,T}(S)^{-1} \, \bchi \, H_{\bchi, T} \, . 
\end{align}  
The factor $H_{\bchi, T}$ in Eqs.~\eqref{ca1} and \eqref{ca2} cancels with  
$( H_{\bchi, T} )^{-1}$ in the smooth Feshbach--Schur map, 
see \eqref{FchiT}). This is specifically presented in Lemma~\ref{Si} below. 
\begin{proof}[Proof of Eqs.~\eqref{ca1} and \eqref{ca2}] 
We only prove Eq.~\eqref{ca1}, Eq.~\eqref{ca2} is deduced
similarly. We first observe that
\begin{align}\label{JN1}
\big( 1 - f_{\chi,T}(S) \big) \, (1 - P_\bchi) \ = \ 0 \, ,
\end{align}  
because, by definition, $f_{\chi,T}(S)$ equals the identity on 
$\fH_\bchi^\perp$. Eq.~\eqref{JN1}, the fact that $\chi^2 + \bchi^2 = 1$ 
and $\bchi = P_\bchi \bchi = \bchi P_\bchi$ [see \eqref{ca1,1}] imply that 
\begin{align} \label{Si1}
\big( 1 - f_{\chi,T}(S) \big) 
\ = \ &
\big( 1 - f_{\chi,T}(S) \big) \, P_\bchi    
\ = \  
\big( T \chi^2 + S \bchi^2  - T \big) \, \Delta_{\chi,T}(S)^{-1} \, P_\bchi
\nonumber \\[1ex] 
\ = \ & 
(S - T) \, \bchi^2 \, \Delta_{\chi,T}(S)^{-1} \, P_\bchi 
\ = \  
\bchi \, (S - T) \, \Delta_{\chi,T}(S)^{-1} \, \bchi \, .
\end{align}  
We recall that, by definition, $H = T + W_T = S+ W_S$.  
This implies that $W_T = W_S + S - T$. We use \eqref{Si1} to obtain
\begin{align}
\bchi \, W_T \, - \, \bchi \, W_S \, f_{\chi,T}(S)
\ = \ &
\bchi \, W_S \, \bchi \, (S - T) \, \Delta_{\chi,T}(S)^{-1} \, \bchi
\: + \: (S-T) \, \bchi
\nonumber \\[1ex]
\ = \ &
\big( \bchi \, W_S \, \bchi \, + \, T \, \chi^2 \, + \, S \, \bchi^2 \big)
\, (S-T) \, \Delta_{\chi,T}(S)^{-1} \, \bchi 
\nonumber \\[1ex]
\ = \ &
\big( \bchi (W_S + S) \bchi \, + \, T \, \chi^2 \big)
\, (S-T) \, \Delta_{\chi,T}(S)^{-1} \, \bchi 
\nonumber \\[1ex]
\ = \ &
H_{\bchi,T} \, (S-T) \, \Delta_{\chi,T}(S)^{-1} \, \bchi \, ,
\end{align}  
where we utilized \eqref{Jpto1}. 
\end{proof}
\end{bem}
%%%%%%%%%%%%%%%%%%%%%%%%%%%%%%%%%%%%%%%%%%%%%%%%%%%%%

%%%%%%%%%%%%%%%%%%%%%%%%%%%%%%%%%%%%%%%%%%%%%%%%%%%%%
\begin{lemma}\label{Si}
The quadratic term in $W_T$ in the smooth Feshbach--Schur
map~\eqref{FchiT} satisfies the following identity:
\begin{align}\label{long}
\chi \, W_T \, & \bchi \, H_{\bchi, T}^{-1} \, \bchi \, W_T \, \chi
\: - \: 
\chi \, f_{\chi,T}(S) \, W_S \, \bchi \, H_{\bchi, T}^{-1} \, 
\bchi \, W_S \, f_{\chi,T}(S) \, \chi
\nonumber \\[1ex] 
& \ = \    
\chi \, f_{\chi,T}(S) \, W_S \big( 1 - f_{\chi T}(S) \big) \chi
\: + \: \chi \big( 1 - f_{\chi T}(S) \big) W_T \, \chi \, .  
\end{align}  
\begin{proof}
We temporarily denote by $A$ the first line (left side) in
Eq.~\eqref{long}. A telescopic sum argument leads us to
\begin{align} \label{ss1}
A \ = \ &   
\chi \, f_{\chi,T}(S) \, W_S \, \bchi \, H_{\bchi, T}^{-1} 
\big( \bchi \, W_T \, \chi \, - \, \bchi \, W_S \, f_{\chi,T}(S) \, \chi \big)
\\[1ex] \label{ss2}  
& + \big( \chi \, W_T \, \bchi 
\, - \, \chi \, f_{\chi,T}(S) \, W_S \, \bchi \big) 
H_{\bchi, T}^{-1} \, \bchi \, W_T \, \chi \, .
\end{align}  
Using \eqref{ca1} and \eqref{Si1}, the right side in Line~\eqref{ss1}
is seen to equal
\begin{align}\label{sii4}
\chi \, f_{\chi,T}(S) \, & W_S \, \bchi \, H_\bchi^{-1} \, H_\bchi 
\, (S-T) \, \Delta_{\chi,T}(S)^{-1} \, \bchi \, \chi  
\nonumber \\[1ex] 
\ = \ &
\chi \, f_{\chi,T}(S) \, W_S \big( 1 - f_{\chi,  T}(S) \big) \chi \, ,
\end{align}  
whereas Eq.~\eqref{ss2} equals 
\begin{align}\label{sii3}
\chi \, (S-T) \, \Delta_{\chi,T}(S)^{-1} \, \bchi \, H_\bchi \, H_\bchi^{-1} 
\, \bchi \, W_T \, \chi 
\ = \ 
\chi \big( 1 - f_{\chi ,  T}(S) \big) W_T \, \chi \, ,
\end{align}  
thanks to \eqref{ca2} and \eqref{Si1}. Eqs.~\eqref{ss1}, \eqref{sii3},
and \eqref{sii4} imply the identity sought for.
\end{proof}
\end{lemma}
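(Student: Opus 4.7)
The identity \eqref{long} is purely algebraic, so the plan is to reduce the left-hand side to the right-hand side using only the three identities already established in Remark~\ref{DelraRem}: the two ``trading'' identities \eqref{ca1,1} and \eqref{ca2}, and the formula \eqref{Si1} expressing $1-f_{\chi,T}(S)$ as $\bchi(S-T)\Delta_{\chi,T}(S)^{-1}\bchi$. Denote the left-hand side by $A$. The first step is a telescoping rewrite: by adding and subtracting the hybrid term $\chi f_{\chi,T}(S) W_S \bchi \, H_{\bchi,T}^{-1} \, \bchi W_T \chi$, one obtains
\begin{align*}
A \ = \ & \big( \chi W_T \bchi - \chi f_{\chi,T}(S) W_S \bchi \big)\, H_{\bchi,T}^{-1} \, \bchi W_T \chi \\
& + \chi f_{\chi,T}(S) W_S \bchi \, H_{\bchi,T}^{-1}\, \big( \bchi W_T \chi - \bchi W_S f_{\chi,T}(S) \chi \big),
\end{align*}
so that each summand contains exactly one of the two ``trading'' differences, placed on the appropriate side of the central $H_{\bchi,T}^{-1}$.

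The second step is to substitute these identities. For the first summand, \eqref{ca2} replaces the left difference by $\chi(S-T)\Delta_{\chi,T}(S)^{-1}\bchi \, H_{\bchi,T}$, and the factor $H_{\bchi,T}$ cancels the adjacent $H_{\bchi,T}^{-1}$; the residue $\chi(S-T)\Delta_{\chi,T}(S)^{-1}\bchi\,\bchi W_T \chi$ can be rearranged into $\chi\, \bchi(S-T)\Delta_{\chi,T}(S)^{-1}\bchi \, W_T \chi$ via the pairwise commutativity of $S$, $T$, $\bchi$, and $\Delta_{\chi,T}(S)^{-1}$, and then \eqref{Si1} identifies it as $\chi(1-f_{\chi,T}(S)) W_T \chi$. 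The second summand is handled symmetrically using \eqref{ca1,1}: the right difference becomes $H_{\bchi,T}(S-T)\Delta_{\chi,T}(S)^{-1}\bchi\chi$, the $H_{\bchi,T}$ again cancels the central $H_{\bchi,T}^{-1}$, and invoking \eqref{Si1} after commuting yields $\chi f_{\chi,T}(S) W_S (1-f_{\chi,T}(S)) \chi$. Adding these two pieces reproduces the right-hand side of \eqref{long}.

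No analytic estimates or invertibility arguments are required beyond the standing hypotheses of Section~\ref{Extra}; the whole calculation is algebraic. The only potential pitfall --- and hence the main obstacle to writing the proof cleanly --- is the bookkeeping of $\bchi$'s around the central $\Delta_{\chi,T}(S)^{-1}$: one must recognise how a lone $\bchi$ supplied by the trading identities can be combined, via commutativity, with a second $\bchi$ from the adjacent operator so as to produce the sandwich $\bchi(S-T)\Delta_{\chi,T}(S)^{-1}\bchi$ demanded by \eqref{Si1}. Once the telescoping split has been written down and the commutations performed, the lemma follows by inspection.
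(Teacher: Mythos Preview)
Your proposal is correct and follows essentially the same approach as the paper: the same telescoping split into the two summands \eqref{ss1}--\eqref{ss2}, the same substitution of the trading identities \eqref{ca1,1} and \eqref{ca2} to cancel the central $H_{\bchi,T}^{-1}$, and the same identification of the residues via \eqref{Si1}. Your explicit remark about commuting the stray $\bchi$ past $(S-T)\Delta_{\chi,T}(S)^{-1}$ to assemble the sandwich required by \eqref{Si1} is exactly the bookkeeping the paper glosses over in passing from \eqref{sii3} to its right-hand side.
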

%%%%%%%%%%%%%%%%%%%%%%%%%%%%%%%%%%%%%%%%%%%%%%%%%%%%%
%
Lemma~\ref{Si} allows us to prove the main theorem of this section,
Theorem~\ref{Tmaintsyeah}, which we restate here for the convenience
of the reader.
%
%%%%%%%%%%%%%%%%%%%%%%%%%%%%%%%%%%%%%%%%%%%%%%%%%%%%%
\begin{theorem}
For every $H, T$ and $S$ as in Definition~\ref{Delta}, 
it follows that
\begin{align} \label{maints}
F_{\chi,T}(H) 
\ = \ & 
S \, f_{\chi,T}(S) 
\: + \: \chi \, f_{\chi,T}(S) \, W_S \, f_{\chi,T}(S) \, \chi 
\nonumber \\[1ex]
& \: - \: \chi \, f_{\chi,T}(S) \, W_S \, \bchi \, H_\bchi^{-1}
\, \bchi \, W_S \, f_{\chi,T}(S) \, \chi \, . 
\end{align}    
\begin{proof}
The definition of the smooth Feshbach--Schur map 
in \eqref{FchiT} and Lemma~\ref{Si} imply that 
\begin{align} \label{chi}
F_{\chi,T}(H) 
\ = \ 
H_{\chi,T} \: - \: &  \chi \, f_{\chi,T}(S) \, W_S 
\big( 1 - f_{\chi,T}(S) \big) \chi \: - \: 
\chi \big( 1 - f_{\chi,T}(S) \big) W_T \, \chi
\nonumber \\[1ex]  
& \: - \:  \chi \, f_{\chi,T}(S) \, W_S \, \bchi 
( H_\bchi )^{-1} \bchi \, W_S \, f_{\chi,T}(S) \, \chi \, .
\end{align}  
We recall the definition \eqref{Hchi} of $ H_{\chi,T}$ and calculate
the first line on the right side of \eqref{chi} as follows,
\begin{align}\label{chi1}
T_\chi \, + \, & \chi \, W_T \, \chi 
\, - \, \chi \, f_{\chi,T}(S) \, W_S \big( 1 - f_{\chi,T}(S) \big) \chi 
\, - \, \chi \big( 1 - f_{\chi,T}(S) \big) W_T \, \chi
\nonumber \\[1ex]  
\ = \ & 
\chi \, f_{\chi,T}(S) \, W_S \, f_{\chi,T}(S) \, \chi \, + \, T_\chi
\, - \, \chi \, f_{\chi,T}(S) \, W_S \, \chi 
\, + \, \chi \, f_{\chi,T}(S) \, W_T \, \chi \, . 
\end{align} 
Next we further calculate the last three terms in \eqref{chi1}, using
that $H = T + W_T = S+ W_S$ and therefore $W_T - W_S = S-T$. We obtain
\begin{align}\label{chi3}
T_\chi \, - \, \chi \, f_{\chi,T}(S) \, W_S \, \chi
\, + \, \chi \, f_{\chi,T}(S) \, W_T \, \chi 
\ = \  
T_\chi \, + \, \chi \, f_{\chi,T}(S) \, 
(S - T) \, \chi \, .   
\end{align} 
Recall that the operator on the right side of \eqref{chi3} acts
on $\fH_\chi$. We study its action on $\fH_\chi \cap \fH_\bchi$ and on 
$\fH_\chi \cap \fH_\bchi^\perp$ separately. We first multiply by $P_\bchi$ 
from the right and obtain
\begin{align}\label{chi4}
\big( T_\chi 
\, + \, \chi \, & f_{\chi,T}(S) \, (S-T) \, \chi \big) P_\bchi 
\ = \  
\Big( T_\chi \, + \, \frac{T}{ T \, \chi^2 + S \, \bchi^2} \, 
(S-T) \, \chi^2 \Big) P_\bchi 
\nonumber \\[1ex] 
\ = \ & 
\Big( T \, P_\chi ( T \, \chi^2 + S \, \bchi^2)
\, + \, T \, P_\chi \, (S-T) \, \chi^2 \Big)
\frac{1}{ T \, \chi^2 + S \, \bchi^2} \, P_\bchi
\nonumber \\[1ex] 
\ = \ & 
S \, P_\chi \, f_{\chi,T}(S) \, P_\bchi  
\ = \  
S_\chi \, f_{\chi,T}(S) \, P_\bchi \, , 
\end{align}  
where we use that $\chi^2 + \bchi^2 = 1$ and the definition of 
$f_{\chi,T}(S)$ in \eqref{f}. Moreover, using again \eqref{f}
and $\bchi P_\bchi^\perp = 0$, we get that 
$f_{\chi,T}(S) \, P_\bchi^\perp = P_\bchi^\perp$. Additionally, since 
$\bchi$ is self-adjoint and therefore 
$\fH_\bchi^\perp = \cern(\bchi)$, it follows that 
$\chi^2 P_\bchi^\perp = P_\bchi^\perp$, where we recall that 
$\chi^2 + \bchi^2 = 1$. This implies that 
\begin{align}\label{chi5}
\big( T_\chi \, + \, \chi \, f_{\chi,T}(S) \, (S-T) \, \chi \big) 
\, P_\bchi^\perp
\ = \ 
S_\chi \, P_\bchi^\perp  
\ = \ S_\chi \, f_{\chi,T}(S) \, P_\bchi^\perp \, .
\end{align}  
Using \eqref{chi3}, \eqref{chi4} and \eqref{chi5} we arrive at
\begin{align}\label{chi6}
T_\chi \, - \, \chi \, f_{\chi,T}(S) \, W_S \, \chi
\, +  \, \chi \, f_{\chi,T}(S) \, W_T \, \chi 
\ = \  
S_\chi \, f_{\chi,T}(S) \, .
\end{align}  
Finally, \eqref{maints} follows from \eqref{chi},
\eqref{chi1} and \eqref{chi6}. 
\end{proof}
\end{theorem}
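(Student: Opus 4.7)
The plan is to start from the definition \eqref{FchiT} of $F_{\chi,T}(H) = H_{\chi,T} - \chi W_T \bchi (H_{\bchi,T})^{-1} \bchi W_T \chi$ and systematically trade each factor of $W_T$ for $W_S$, absorbing the discrepancy into the operator $f_{\chi,T}(S)$ defined in \eqref{f}. Since $H = T + W_T = S + W_S$ gives $W_T - W_S = S - T$, the algebraic glue is the identity $(1 - f_{\chi,T}(S)) = \bchi(S-T)\Delta_{\chi,T}(S)^{-1}\bchi$ on $P_\bchi$ (and $0$ on $P_\bchi^\perp$), which encapsulates how $f_{\chi,T}(S)$ interpolates between the two choices of auxiliary operator.

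The first step will be to establish the two key substitution identities
\begin{align*}
\bchi W_T - \bchi W_S f_{\chi,T}(S) &= H_{\bchi,T}(S-T)\Delta_{\chi,T}(S)^{-1}\bchi, \\
W_T\bchi - f_{\chi,T}(S) W_S\bchi &= (S-T)\Delta_{\chi,T}(S)^{-1}\bchi H_{\bchi,T},
\end{align*}
by direct computation. Starting from $W_T = W_S + (S-T)$, one factors $\bchi$ using $\bchi = \bchi P_\bchi$ and uses the explicit form $f_{\chi,T}(S) = T\Delta_{\chi,T}(S)^{-1}$ on $\fH_\bchi$ together with $\chi^2 + \bchi^2 = 1$ to obtain that the expression $\bchi W_S\bchi + T\chi^2 + S\bchi^2$ appears, which is precisely $H_{\bchi,T}$. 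The factor $H_{\bchi,T}$ on the boundary is crucial, because it will cancel exactly with $H_{\bchi,T}^{-1}$ in the Feshbach--Schur formula.

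Next, I would apply a telescoping decomposition to the quadratic term:
\begin{align*}
\chi W_T\bchi H_{\bchi,T}^{-1}\bchi W_T\chi
&= \chi f_{\chi,T}(S) W_S\bchi H_{\bchi,T}^{-1}\bchi W_S f_{\chi,T}(S)\chi \\
&\quad + \chi f_{\chi,T}(S) W_S\bchi H_{\bchi,T}^{-1}\big(\bchi W_T\chi - \bchi W_S f_{\chi,T}(S)\chi\big) \\
&\quad + \big(\chi W_T\bchi - \chi f_{\chi,T}(S) W_S\bchi\big) H_{\bchi,T}^{-1}\bchi W_T\chi.
\end{align*}
Plugging in the substitution identities, the two error terms collapse (the $H_{\bchi,T}^{\pm 1}$ factors cancel) to give $\chi f_{\chi,T}(S) W_S(1-f_{\chi,T}(S))\chi + \chi(1-f_{\chi,T}(S))W_T\chi$, which is exactly the content of Lemma~\ref{Si}.

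The final step is to combine the simplified quadratic term with $H_{\chi,T} = T_\chi + \chi W_T \chi$ and check that the non-quadratic remainder equals $Sf_{\chi,T}(S) + \chi f_{\chi,T}(S) W_S f_{\chi,T}(S)\chi$. Using $W_T = W_S + (S-T)$ to merge $\chi W_T\chi$ with the telescoping error terms, the $W_S$-quadratic piece $\chi f_{\chi,T}(S) W_S f_{\chi,T}(S)\chi$ emerges cleanly, leaving the identity $T_\chi + \chi f_{\chi,T}(S)(S-T)\chi = S f_{\chi,T}(S)$ to verify. I expect this last identity to be the main technical obstacle because $f_{\chi,T}(S)$ is defined piecewise on $\fH_\bchi \oplus \fH_\bchi^\perp$, so I would verify it by multiplying on the right by $P_\bchi$ and $P_\bchi^\perp$ separately: on $P_\bchi$ one uses the defining formula $f_{\chi,T}(S) = T\Delta_{\chi,T}(S)^{-1}$ together with $\chi^2 + \bchi^2 = 1$ to rewrite $T\chi^2 + T\chi^2(S-T)\Delta_{\chi,T}(S)^{-1}$ as $S\Delta_{\chi,T}(S)\Delta_{\chi,T}(S)^{-1}$, while on $P_\bchi^\perp$ one uses $f_{\chi,T}(S) = \bfone$ and $\chi^2 P_\bchi^\perp = P_\bchi^\perp$ (since $\bchi^2 P_\bchi^\perp = 0$) to reduce everything to $S_\chi P_\bchi^\perp$.
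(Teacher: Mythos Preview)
Your proposal is correct and follows essentially the same route as the paper: the two substitution identities you state are precisely Eqs.~\eqref{ca1,1}--\eqref{ca2} of Remark~\ref{DelraRem}, your telescoping decomposition is exactly Lemma~\ref{Si}, and your final verification of $T_\chi + \chi f_{\chi,T}(S)(S-T)\chi = S f_{\chi,T}(S)$ by splitting into the $P_\bchi$ and $P_\bchi^\perp$ pieces is exactly the argument in \eqref{chi4}--\eqref{chi5}. The only cosmetic difference is that the paper writes $S_\chi f_{\chi,T}(S)$ rather than $S f_{\chi,T}(S)$, which coincide on $\fH_\chi$.
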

%%%%%%%%%%%%%%%%%%%%%%%%%%%%%%%%%%%%%%%%%%%%%%%%%%%%%

%%%%%%%%%%%%%%%%%%%%%%%%%%%%%%%%%%%%%%%%%%%%%%%%%%%%%
\subsection{Isospectrality of the Smooth Feshbach--Schur Map from 
Isospectrality of the sharp Feshbach--Schur Map}
\label{subsec:sharpsmooth}
%%%%%%%%%%%%%%%%%%%%%%%%%%%%%%%%%%%%%%%%%%%%%%%%%%%%%
%
In this section we derive the isospectrality formulated in
Proposition~\ref{prop-isospecSFB} of the smooth Feshbach Map on $\fH$
from the isospectrality established in, e.g.,
\cite{BachFroehlichSigal1998b} of the sharp Feshbach--Schur Map
realized on the bigger Hilbert space
\begin{align} \label{eq-sharpsmooth-02}
\hfH \ := \ \fH_\chi \oplus \fH_\bchi \, ,
\end{align}
where we recall from \eqref{hilbert-1} and \eqref{hilbert-2}
that $\fH_\chi = P_\chi \fH$ and $\fH_\bchi = P_\bchi \fH$. The
scalar product on $\hfH$ is defined as
\begin{align} \label{eq-sharpsmooth-03}
\left\la \begin{pmatrix} f \\ g \end{pmatrix} \left| 
\begin{pmatrix} f' \\ g' \end{pmatrix} \right. \right\ra 
\ := \ 
\la f | f' \ra + \la g | g' \ra \, .
\end{align}
We further define the natural embedding $J: \fH \to \hfH$ by 
\begin{align} \label{eq-sharpsmooth-04}
J[\phi] \ := \ \begin{pmatrix} \chi \phi \\ \bchi \phi \end{pmatrix} 
\end{align}
and observe that the adjoint operator $J^*: \hfH \to \fH$ is given by
\begin{align} \label{eq-sharpsmooth-05}
J^* \left[ \begin{pmatrix} f \\ g \end{pmatrix} \right] 
\ = \ \chi f + \bchi g \, .
\end{align}
It easy to check that $J$ is an isometry and that 
$J^* J \ = \ \bfone_\fH$, which implies that 
$(J J^*) J \ = \ J (J^* J) \ = \ J$ and hence in summary
\begin{align} \label{eq-sharpsmooth-06}
J^* J \ = \ \bfone_\fH  
\quad \text{and} \quad
J J^* \big|_{\hfH_J} \ = \ \bfone_{\hfH_J} \, ,
\end{align}
where
\begin{align} \label{eq-sharpsmooth-07}
\hfH_J \ := \ \Ran(J) \ \subseteq \ \hfH \, .
\end{align}
%
%%%%%%%%%%%%%%%%%%%%%%%%%%%%%%%%%%%%%%%%%%%%%%%%%%%%%%%%%%%%%
\begin{lemma} \label{lem-01} 
$\hfH_J \subseteq \hfH$ is a closed subspace and $J: \fH \to \hfH_J$
is unitary.
\begin{proof}
It is clear that $\hfH_J \subseteq \hfH$ is a subspace. We prove
that, in general, the range $\Ran(\cI) \subseteq \hfH$ of any isometry
$\cI: \fH \to \hfH$ is closed. To this end, we assume $\hpsi \in \hfH$
to be an accumulation point of $\Ran(\cI)$ and 
$(\phi_n)_{n=1}^\infty \in \fH^\NN$ be a sequence such that 
$\cI \phi_n \to \hpsi$, as $n \to \infty$. Then, using that 
$\cI^* \cI = \bfone_\fH$ and that $\|\cI^*\|_\op = \|\cI\|_\op = 1$,
we obtain
\begin{align} \label{eq-sharpsmooth-08}
\| \phi_n - \cI^* \hpsi \|_{\fH}
\ = \ 
\| \cI^* ( \cI \phi_n - \hpsi ) \|_{\fH}
\ \leq \ 
\| \cI \phi_n - \hpsi \|_{\hfH} \, ,
\end{align}
and hence $(\phi_n)_{n=1}^\infty$ converges to $\cI^* \hpsi$. By
continuity of $\cI$, it follows that
\begin{align} \label{eq-sharpsmooth-09}
\hpsi 
\ = \ 
\lim_{n \to \infty} \cI \phi_n
\ = \ 
\cI [ \cI^* \hpsi ] 
\end{align}
belongs to the range of $\cI$, indeed. Hence, as a closed subspace, 
$\hfH_J \subseteq \hfH$ is itself a Hilbert space and $J$ is
unitary by \eqref{eq-sharpsmooth-06}.
\end{proof}
\end{lemma}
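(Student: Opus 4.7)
The plan is to verify the two assertions separately but with the second essentially reducing to the first. That $\hfH_J \subseteq \hfH$ is a linear subspace is immediate from the linearity of $J$, so the real content lies in closedness. Once closedness is established, the unitarity of $J: \fH \to \hfH_J$ is automatic: by construction $J$ is surjective onto $\hfH_J = \Ran(J)$, and the relation $J^*J = \bfone_\fH$ recorded in \eqref{eq-sharpsmooth-06} means $\|J\phi\|_{\hfH} = \|\phi\|_\fH$ for all $\phi \in \fH$, so $J$ is an isometric bijection between Hilbert spaces, i.e., unitary.

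For closedness of the range, I would argue at the level of general Hilbert space operators, since $J$ is only used through the identities \eqref{eq-sharpsmooth-06}. Suppose $\hpsi \in \hfH$ is an accumulation point of $\Ran(J)$ and $(\phi_n)_{n \geq 1} \subseteq \fH$ is a sequence with $J\phi_n \to \hpsi$ in $\hfH$. The key idea is to use $J^*$ to pull back and produce a concrete candidate preimage. Apply $J^*$ to both sides: since $J^*J = \bfone_\fH$, we have $\phi_n = J^*(J\phi_n)$, and since $J$ is an isometry its adjoint $J^*$ has operator norm $1$, hence
\begin{align*}
\|\phi_n - J^*\hpsi\|_\fH
\ = \ \|J^*(J\phi_n - \hpsi)\|_\fH
\ \leq \ \|J\phi_n - \hpsi\|_{\hfH} \ \xrightarrow[n\to\infty]{} \ 0.
\end{align*}
So $(\phi_n)$ is Cauchy and converges to $\phi := J^*\hpsi \in \fH$. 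By continuity of $J$, $J\phi_n \to J\phi$, and by uniqueness of limits $\hpsi = J\phi \in \Ran(J)$. Therefore $\Ran(J) = \hfH_J$ is closed.

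I do not expect a genuine obstacle here; the only subtle point is avoiding any appeal to an explicit description of $\hfH_J$ in terms of $\chi$ and $\bchi$. It is tempting to try to characterize $\Ran(J)$ concretely, for instance as the set of pairs $(f,g) \in \fH_\chi \oplus \fH_\bchi$ with $\bchi f = \chi g$, and read off closedness from that description, but this requires additional verification and is not needed. The clean route is via the abstract principle that any isometry of Hilbert spaces has closed range, which is what \eqref{eq-sharpsmooth-06} directly enables. Once closedness is in hand, $\hfH_J$ is itself a Hilbert space in the inherited inner product and the unitarity of $J: \fH \to \hfH_J$ follows at once from $J^*J = \bfone_\fH$ together with $JJ^*|_{\hfH_J} = \bfone_{\hfH_J}$.
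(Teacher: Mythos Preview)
Your proposal is correct and follows essentially the same argument as the paper: both show closedness of $\Ran(J)$ via the general fact that an isometry has closed range, using $J^*J=\bfone_\fH$ to pull back a convergent sequence $J\phi_n \to \hpsi$ to $\phi_n \to J^*\hpsi$ and then continuity of $J$ to conclude $\hpsi = J(J^*\hpsi)$. The unitarity step is likewise identical, invoking \eqref{eq-sharpsmooth-06}.
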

%%%%%%%%%%%%%%%%%%%%%%%%%%%%%%%%%%%%%%%%%%%%%%%%%%%%%%%%%%%%%
%
Next we recall from Hypothesis~\ref{hypo-T} that $T: \fD \to \fH$ is
assumed to be a closed operator, which commutes $\chi$ and $\bchi$ in
the sense that
\begin{align} \label{eq-sharpsmooth-10}
\chi \, T \ \subset \ T \, \chi \, , \quad 
\bchi \, T \ \subset \ T \, \bchi \, . 
\end{align}
Furthermore, $T_\chi: P_\chi \fD \to \fH_\chi$ and 
$T_\bchi: P_\bchi \fD \to \fH_\bchi$ denote the restrictions of $T$ to
$\fH_{\chi}$ and $\fH_{\bchi}$, and we recall from \eqref{eq-sharpsmooth-14}
that $T_{\bchi}$ is assumed to be bounded invertible on $\fH_{\bchi}$.

Now let $H$ be an operator on $\fH$ in the domain of $\cF_{\chi,T}$,
i.e., $H: \fD \to \fH$ is a closed linear operator, 
$H_\bchi \ = \ T_\bchi + W_\bchi$ is bounded invertible on $\fH_\bchi$,
and $\bchi H_\bchi^{-1} \bchi W \chi: \fH_\chi \to \fH_\bchi$
is bounded, where $W_\bchi = \bchi W \bchi$ and $W := H-T$ (we 
abbreviate $H_\bchi := H_{\chi,T}$ and $W := W_T$).

We now introduce the operators $\hT, \hW, \hH: \hfD \to \hfH$ by
\begin{align} \label{eq-sharpsmooth-11}
\hT \ := \ 
\begin{pmatrix} T_\chi & 0 \\ 0 & T_\bchi \end{pmatrix} 
\, , \quad
\hW \ := \ 
\begin{pmatrix} \chi W \chi & \chi W \bchi \\ 
                \bchi W \chi & \bchi W \bchi \end{pmatrix} 
\, , \quad
\hH \ := \ \hT + \hW \, , 
\end{align}
where $\hfD := P_\chi \fD \oplus P_\bchi \fD \subseteq \hfH$.
Since $T_{\chi}$ and $T_{\bchi}$ are closed, so is $\hT$.
We observe that if $\phi \in \dom(\hT J)$ then $J \phi \in \hfD$,
so $J \phi = f \oplus g$ with $f = P_\chi f \in \fD$ and 
$g = P_\bchi \, g \in \fD$. Since $\chi \fD, \bchi \fD \subseteq \fD$,
this implies that $J^*(f \oplus g) = \chi f + \bchi g \in \fD$.
Consequently,  
\begin{align} \label{eq-sharpsmooth-12}
\phi \ = \ J^* J \phi \ = \ J^*(f \oplus g) \ \in \ \fD \, .
\end{align}
from which we obtain that $\dom(\hT J) \subseteq \fD$ and similarly
also $\dom(\hH J) \subseteq \fD$. Conversely, if $\phi \in \fD$ then
\begin{align} \label{eq-sharpsmooth-13}
\hT J \phi 
\ = \ &
\begin{pmatrix} 
\chi T \phi \\ 
\bchi T \phi 
\end{pmatrix} 
\ = \ 
J T \phi \, , 
\\ \label{eq-sharpsmooth-14}
\hH J \phi 
\ = \ &
\begin{pmatrix} 
(T_\chi + \chi W \chi ) \chi \phi \, + \, (\chi W \bchi) \bchi \phi \\ 
(\bchi W \chi) \chi \phi  \, + \, (T_\bchi + \bchi W \bchi ) \bchi \phi
\end{pmatrix} 
\ = \ 
J H \phi \, ,
\end{align}
and hence $\phi \in \dom(\hT J) \cap \dom(\hH J)$. This and
\eqref{eq-sharpsmooth-12} imply that 
$\dom(\hT J) = \dom(\hH J) = \fD$. Moreover, by unitarity
of $J$, we obtain $J^* \hT J = T$ and  
\begin{align} \label{eq-sharpsmooth-15}
J^* \, \hH \, J \ = \ H
\quad \text{on $\fD$.}
\end{align}
It follows that $H$ is isospectral to $\hH$, because $J$ is unitary.
Now we apply the sharp Feshbach--Schur map with projections 
\begin{align} \label{eq-sharpsmooth-16}
\hP \ := \ \begin{pmatrix} \bfone_{\fH_\chi} & 0 \\ 0 & 0 \end{pmatrix} 
\quad \text{and} \quad
\hP^\perp \ := \ \begin{pmatrix} 0 & 0 \\ 0 & \bfone_{\fH_\bchi} \end{pmatrix}
\, , 
\end{align}
and obtain
\begin{align} \label{eq-sharpsmooth-17}
\cF_{\hP}(\hH) 
\ = \ &
\hP \, \hH \, \hP - \hP \, \hH \, \hP^\perp 
\big( \hP^\perp \hH \hP^\perp \big)^{-1} \hP^\perp \, \hH \, \hP
\nonumber \\[1ex]
\ = \ &
T_\chi \, + \, \chi \, W \, \chi \, - \,
\chi \, W \, \bchi ( H_\bchi )^{-1} \bchi \, W \chi 
\ = \ 
\cF_{\chi, T}(H) 
\end{align}
on $\fD_\chi$.

%%%%%%%%%%%%%%%%%%%%%%%%%%%%%%%%%%%%%%%%%%%%%%%%%%%%%
%%%%%%%%%%%%%%%%%%%%%%%%%%%%%%%%%%%%%%%%%%%%%%%%%%%%%
%%%%%%%%%%%%%%%%%%%%%%%%%%%%%%%%%%%%%%%%%%%%%%%%%%%%%
\section{Proofs of Sect.~\ref{RF}: Renormalization Flow}
\label{sec:RG}
%%%%%%%%%%%%%%%%%%%%%%%%%%%%%%%%%%%%%%%%%%%%%%%%%%%%%
%%%%%%%%%%%%%%%%%%%%%%%%%%%%%%%%%%%%%%%%%%%%%%%%%%%%%
%%%%%%%%%%%%%%%%%%%%%%%%%%%%%%%%%%%%%%%%%%%%%%%%%%%%%

%%%%%%%%%%%%%%%%%%%%%%%%%%%%%%%%%%%%%%%%%%%%%%%%%%%%%
\subsection{The Rescaled Smooth Feshbach--Schur Map}
%%%%%%%%%%%%%%%%%%%%%%%%%%%%%%%%%%%%%%%%%%%%%%%%%%%%%
%
In the following, we apply the results from Section~\ref{sec:SFM} with
the choices $T := H_\ph$ and $S := w_{0,0}(H_\ph)$. We recall that the
resolvent set $\rho(A)$ of an operator $A$ consists of all complex
numbers $\lambda$, for which $A - \lambda \bfone$ is bounded
invertible. We further recall from Definition~\ref{Delta}
the definition of $\Delta_{\chi,T}(S)$ and, specifically, 
$\Delta_{\chi_\alpha, H_\ph}[w_{0,0}(H_\ph)] 
= H_\ph \chi_\alpha^2 + w_{0,0}(H_\ph) \bchi_\alpha^2$.
%
%%%%%%%%%%%%%%%%%%%%%%%%%%%%%%%%%%%%%%%%%%%%%%%%%%%%%
\begin{theorem}\label{thm-semigroup}
Let $\uw \in \cW^\xi$ and $\alpha, \beta \geq 0$. Suppose that 
$\qHH[\uw] \in \dom(\mathrm{F}_{\chi_\alpha, H_\ph})$ and $0\in \rho(\qHH[\uw])$. 
Further assume that 
$\hcR_\alpha(\qHH[\uw]) \in \dom(\mathrm{F}_{\chi_\beta, H_\ph})$
and that the restriction 
$\Delta_{\chi_\alpha, H_\ph}[w_{0,0}(H_\ph)]\big|_{ \fH_{\bchi_\alpha}} 
\in \cB(\fH_{\bchi_\alpha})$ of 
$\Delta_{\chi_\alpha, H_\ph}(w_{0,0}(H_\ph))$ to $\fH_{\bchi_\alpha}$
is bounded invertible, see Definition~\ref{Delta}. 
It follows that $\qHH[\uw]\in \dom(\mathrm{F}_{\chi_{\alpha+\beta}, H_\ph})$ 
and that $\hcR$ satisfies the semigroup property
\begin{align} \label{eq-Semigroup}
\hcR_\beta \big( \hcR_\alpha(\qHH[\uw]) \big) 
\ = \ 
\hcR_{\alpha+\beta}(\qHH[\uw]).
\end{align}  
\end{theorem}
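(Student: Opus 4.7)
The strategy is to strip off the scaling transformations $S_\alpha$ and reduce the semigroup identity to a composition lemma for iterated smooth Feshbach--Schur maps with nested smooth projections sharing the common $T = H_\ph$.

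First, I would verify two structural identities. The scaling is itself a semigroup, $S_\beta \circ S_\alpha = S_{\alpha+\beta}$, which follows from $\Gamma_\beta \Gamma_\alpha = \Gamma_{\alpha+\beta}$ on each sector. The smooth Feshbach--Schur map is covariant under simultaneous unitary conjugation and scalar rescaling: from the definition~\eqref{FchiT},
\[
F_{U\chi U^*,\,\lambda U T U^*}(\lambda\, U H U^*) \;=\; \lambda\, U\, F_{\chi, T}(H)\, U^*,
\]
for every unitary $U$ and scalar $\lambda>0$. Applied with $U=\Gamma_\alpha$, $\lambda=e^\alpha$, and using $\Gamma_\alpha^* H_\ph \Gamma_\alpha = e^\alpha H_\ph$ (together with $\chi_\beta(e^\alpha H_\ph)^2 + \bchi_\beta(e^\alpha H_\ph)^2 = 1$), this yields
\[
F_{\chi_\beta(H_\ph),\, H_\ph}\!\bigl(S_\alpha(B)\bigr) \;=\; S_\alpha\!\Bigl(F_{\chi_\beta(e^\alpha H_\ph),\, H_\ph}(B)\Bigr)
\]
for every admissible $B$. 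Taking $B = F_{\chi_\alpha(H_\ph), H_\ph}(\qHH[\uw])$, applying $S_\beta$, and invoking $S_\beta\circ S_\alpha = S_{\alpha+\beta}$ reduces Theorem~\ref{thm-semigroup} to the composition identity
\[
F_{\chi_\beta(e^\alpha H_\ph),\, H_\ph}\!\Bigl( F_{\chi_\alpha(H_\ph),\, H_\ph}(\qHH[\uw]) \Bigr) \;=\; F_{\chi_{\alpha+\beta}(H_\ph),\, H_\ph}(\qHH[\uw]).
\]

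For the composition identity I would use the explicit inverse formula~\eqref{eq-Fsmoothinvers} from Proposition~\ref{prop-isospecSFB}. With $\chi_1 := \chi_\alpha(H_\ph)$, $\chi_2 := \chi_\beta(e^\alpha H_\ph)$, $T := H_\ph$ (all mutually commuting) and $H := \qHH[\uw]$ bounded invertible thanks to $0\in\rho(\qHH[\uw])$, iterating~\eqref{eq-Fsmoothinvers} twice gives
\[
F_{\chi_2, T}\!\bigl(F_{\chi_1, T}(H)\bigr)^{-1} \;=\; \Bigl[ \chi_1\chi_2\, H^{-1}\, \chi_1\chi_2 \,+\, \bigl(\chi_2^2 \bchi_1^2 + \bchi_2^2\bigr)\, T^{-1} \Bigr]\Big|_{\fH_{\chi_1\chi_2}}.
\]
The algebraic identity $\chi_2^2 \bchi_1^2 + \bchi_2^2 = 1 - \chi_1^2\chi_2^2 = \overline{\chi_1\chi_2}^{\,2}$, combined with the smooth-family relation $\chi_1\chi_2 = \chi_{\alpha+\beta}(H_\ph)$ from~\eqref{eq-functionchisemigroup}, identifies the right-hand side as $F_{\chi_{\alpha+\beta}(H_\ph),\, H_\ph}(H)^{-1}$. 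Inverting closes the composition identity.

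The main obstacle is the domain-theoretic bookkeeping required to legitimize the inverse formulas. The preceding argument presupposes that $\qHH[\uw] \in \dom(F_{\chi_{\alpha+\beta}, H_\ph})$, i.e.\ bounded invertibility of the $\bchi_{\alpha+\beta}$-block of $\qHH[\uw]$, which is not in the hypotheses verbatim and must be derived from the invertibility of $H_{\bchi_\alpha, H_\ph}$, the invertibility of the $\bchi_\beta$-block of $\hcR_\alpha(\qHH[\uw])$, and the $\Delta$-invertibility hypothesis. The natural route is to decompose
\[
\bchi_{\alpha+\beta}^{\,2} \;=\; \bchi_\alpha^{\,2} \,+\, \chi_\alpha^{\,2}\,\bchi_\beta(e^\alpha H_\ph)^{2},
\]
which follows from the smooth-family relation and $\chi^2+\bchi^2=1$, interpret the $\bchi_{\alpha+\beta}$-block as a $2\times 2$ block operator over this orthogonal splitting, and argue its invertibility by a Schur-complement/Feshbach computation. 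Theorem~\ref{Tmaintsyeah} enters here precisely to re-express $\hcR_\alpha(\qHH[\uw])$ in terms of $W_{w_{0,0}(H_\ph)}$ so that the $\Delta$-invertibility hypothesis becomes operative in controlling the off-diagonal coupling between the two summands. Once this invertibility is secured, the isospectrality of Proposition~\ref{prop-isospecSFB} also guarantees that the cross-term $\overline{\chi_{\alpha+\beta}}H_{\bchi_{\alpha+\beta}}^{-1}\overline{\chi_{\alpha+\beta}} W \chi_{\alpha+\beta}$ is bounded, completing the verification that $\qHH[\uw]\in\dom(F_{\chi_{\alpha+\beta}, H_\ph})$ and thereby the proof of~\eqref{eq-Semigroup}.
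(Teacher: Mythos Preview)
Your proposal is correct and follows essentially the same approach as the paper. Step~2 of the paper's proof is exactly your iterated application of the inverse formula~\eqref{eq-Fsmoothinvers}, culminating in the same algebraic identity $\chi_2^2\bchi_1^2 + \bchi_2^2 = \bchi_{\alpha+\beta}^2$. For Step~1 (the domain question), the paper makes your ``$2\times 2$ block/Schur-complement'' idea concrete by introducing auxiliary smooth projections $X,\bX$ on $\fH_{\bchi_{\alpha+\beta}}$ with $X^2+\bX^2=1$, defined so that $\bchi_\alpha = \bX\,\bchi_{\alpha+\beta}$ and $\chi_\alpha\,\bchi_\beta(e^\alpha H_\ph) = X\,\bchi_{\alpha+\beta}$; it then proves the identity $\hcR_\alpha(\qHH[\uw])_{\bchi_\beta,H_\ph} = e^\alpha\Gamma_\alpha\, F_{X,H_\ph}\bigl(\qHH[\uw]_{\bchi_{\alpha+\beta},H_\ph}\bigr)\,\Gamma_\alpha^*$ using Theorem~\ref{Tmaintsyeah} exactly as you anticipated, and concludes invertibility of the $\bchi_{\alpha+\beta}$-block via isospectrality. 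One small terminological correction: the decomposition $\bchi_{\alpha+\beta}^2 = \bchi_\alpha^2 + \chi_\alpha^2\bchi_\beta(e^\alpha H_\ph)^2$ is not an \emph{orthogonal} splitting but a smooth partition of unity, which is precisely why the smooth (rather than sharp) Feshbach machinery with $(X,\bX)$ is needed there.
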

\begin{proof} We prove the result in two steps, Step~1 and Step~2 below. 
In Step~1, we prove that $\qHH[\uw] \in \dom(\mathrm{F}_{\chi_\alpha, H_\ph})$ 
and $\hcR_\alpha(\qHH[\uw]) \in \dom(\mathrm{F}_{\chi_\beta, H_\ph})$ together 
imply that $\qHH[\uw]\in \dom(\mathrm{F}_{\chi_{\alpha+\beta}, H_\ph})$. 
Then in Step~2 we prove \eqref{eq-Semigroup}.

\paragraph*{Step~1: 
$\qHH[\uw]\in \dom(\mathrm{F}_{\chi_{\alpha+\beta}, H_\ph})$.} 
Given $\alpha, \beta \geq 0$, we 
define the functions, 
$X, \bX: [0, \infty) \to \RR$ by 
$X(r) := 1$  and $\bX(r) := 0$, whenever 
$\bchi_{\alpha + \beta}(r) = 0$, and 
\begin{align}\label{XoX}
X(r) \ := \ 
\frac{\bchi_\beta(e^{\alpha} r) \: \chi_\alpha(r)}{ \bchi_{\alpha + \beta}(r) }
\quad \text{and} \quad 
\bX(r) \ := \ \frac{ \bchi_\alpha(r) }{ \bchi_{\alpha+\beta}(r) } \, ,
\end{align}  
provided $\bchi_{\alpha + \beta}(r) > 0$. Note that $X$ and $\bX$ depend
on $\alpha$ and $\beta$, but we refrain from displaying this dependence
in the notation.
Recalling \eqref{eq-functionchisemigroup}, we deduce that 
$\chi_{\alpha+\beta}(r) = 1$ implies that $\chi_\alpha(r) = 1 $ and, 
therefore, $\bchi_{\alpha+\beta}(r) = 0$ implies that $\bchi_\alpha(r) = 0$.  
Hence,
\begin{align}\label{jor1}
\bchi_\alpha(r) = 0  \quad \Leftrightarrow \quad \bX(r) = 0 \, ,
\end{align}  
and thus 
\begin{align} \label{jor2}
\fH_{\bchi_{\alpha+\beta}} 
\ = \ &
\Ran\big[ \bchi_{\alpha+\beta}(H_\ph) \big] 
\\[1ex] \nonumber
\ \supseteq \ &
\fH_{\bchi_{\alpha}} 
\ = \ 
\Ran\big[ \bchi_\alpha(H_\ph) \big] 
\ = \ 
\Ran\big[ \bX(H_\ph) \big] 
\ = \ 
\fH_{\bX(H_\ph)} \, .
\end{align}  
Moreover, using again \eqref{eq-functionchisemigroup}, we obtain that
$\bchi_{\alpha+\beta}(r) = 0$ implies that $\chi_\alpha(r) = 1$ and
$\bchi_{\beta}(e^{\alpha} r) = 0$. From the above discussion we obtain
\begin{align}\label{Falta}
\bchi_{\beta}(e^{\alpha} r) \, \chi_\alpha(r) 
\ = \
X(r) \, \bchi_{\alpha + \beta}(r)
\quad \text{and} \quad 
\bchi_\alpha(r) \ = \ \bX(r) \, \bchi_{\alpha + \beta}(r) \, ,
\end{align}   
for all $r \geq 0$. Recall that 
$\chi_{\alpha+\beta}(r) = \chi_\beta(e^\alpha r) \chi_\alpha(r) $ and that 
$\chi_\rho^2(r) + \bchi_\rho^2(r) = 1$, for every $\rho > 0$ 
[see Eq.~\eqref{eq-functionchisemigroup}]. This implies that  
\begin{align}
\bchi_\beta^2(e^\alpha r) \, \chi_\alpha^2(r) \, + \, \bchi_\alpha^2(r)
\ = \ & 
\big( 1 - \chi_\beta^2(e^\alpha r) \big) \, \chi_\alpha^2(r) 
\, + \, 1 - \chi_\alpha^2(r)
\nonumber \\[1ex]
\ = \ & 
1 - \chi_{\alpha+\beta}^2(r) 
\ = \ 
\bchi_{\alpha+\beta}^2(r)  
\end{align}  
and, therefore, 
\begin{align}\label{X}
X^2(r) \, + \, \bX^2(r) \ = \ 1 \, ,
\end{align} 
for all $r \geq 0$. We identify, as usual in this paper,
\begin{align}
X \ \equiv \ X(H_\ph) 
\quad \text{and} \quad 
\bX \ \equiv \ \bX(H_\ph) \, . 
\end{align} 
Now, we introduce 
$\uw^{\alpha+\beta}  = ( w^{\alpha+\beta}_{m,n}  )_{m, n \geq 0 }  
\in \cW^\xi$ by $\uw_\gI^{\alpha+\beta} := 
\bchi_{\alpha+\beta} \, \uw_\gI \, \bchi_{\alpha+\beta}$
and $w^{\alpha+\beta}_{0,0}(r) := 
r \chi_{\alpha+\beta}^2(r) P_{\bchi_{\alpha+\beta}}
+ w_{0,0}(r) \bchi_{\alpha+\beta}^2(r)$, such that
\begin{align} \label{eq-wchidef}
\qHH[\uw]_{\bchi_{\alpha+\beta}, H_\ph} 
\ = \ 
H_\ph \, \chi_{\alpha+\beta}^2 \, P_{\bchi_{\alpha+\beta}} \, + \, 
\bchi_{\alpha+\beta} \, \qHH[\uw] \, \bchi_{\alpha+\beta} 
\ = \ 
\qHH[\uw^{\alpha+\beta}] \, , 
\end{align}  
recalling \eqref{Jpto1} and the definition of $\qHH[\uw]$ in 
\eqref{HW-1}. Additionally using \eqref{jor2}, it follows that
\begin{align} \label{EqCom}
\qHH[\uw^{\alpha+\beta}]_{\bX, H_\ph} 
\ = \ &  
H_\ph \, X^2 \, P_\bX \, + \, \bX \, \qHH[\uw^{\alpha+\beta}] \, \bX
\nonumber \\[1ex] 
\ = \ &  
H_\ph \big( X^2 \, P_\bX \, + \, \chi_{\alpha+ \beta}^2 \, \bX^2 \big) 
\, + \, \bX \, \bchi_{\alpha + \beta} \, \qHH[\uw] \, \bchi_{\alpha + \beta} \, \bX
\nonumber \\[1ex] 
\ = \ &  
H_\ph \big( X^2 \, P_\bX \, + \, (1- \bchi_{\alpha+ \beta}^2) \bX^2 \big) 
\, + \, 
\bX \, \bchi_{\alpha + \beta} \, \qHH[\uw] \, \bchi_{\alpha + \beta} \, \bX
\nonumber \\[1ex] 
\ = \ &  
H_\ph \big( P_\bX \, - \, \bX^2 \, \bchi_{\alpha+\beta}^2 \big) 
\, + \, 
\bX \, \bchi_{\alpha + \beta} \, \qHH[\uw] \, \bchi_{\alpha + \beta} \, \bX \, ,
\end{align}  
where we further use \eqref{X} and that 
$\chi^2_\gamma + \bchi^2_\gamma = 1$, 
for $\gamma \geq 0$ [see Eq.~\eqref{eq-functionchisemigroup}].
Eqs.~\eqref{Falta} and \eqref{EqCom} imply [recall \eqref{Jpto1}]
\begin{align}\label{nomams}
\qHH[\uw]_{\bchi_\alpha, H_\ph} 
\ = \ 
H_\ph \, \chi_\alpha^2 \, P_\bX 
\, + \, \bchi_\alpha \, \qHH[\uw] \, \bchi_\alpha 
\ = \ 
\qHH[\uw^{\alpha+\beta}]_{\bX, H_\ph} \, , 
\end{align}  
where we use that $P_{\bchi_\alpha} = P_\bX$, due to \eqref{jor2}.
Similarly, substituting $\uw^{\alpha+\beta} - \uw_\gI^{\alpha+\beta}$
for $\uw^{\alpha+\beta}$ in \eqref{EqCom}, we obtain (recall
Definition~\ref{Delta})
\begin{align} \label{eq-freeXrelation-1}
\Delta_{\chi_\alpha, H_\ph}(w_{0,0}(H_\ph)) 
\ = \ & 
H_\ph \, \chi_\alpha^2 \, P_{\bchi_\alpha} 
\, + \, w_{0,0}(H_\ph) \, \bchi_\alpha^2
\nonumber \\[1ex] 
\ = \ &  
H_\ph \, X^2 \, P_\bX \, + \, 
\big[ H_\ph \, \chi_{\alpha+\beta}^2 \, + \, 
w_{0,0}(H_\ph) \, \bchi_{\alpha+\beta}^2 \big] \, \bX^2 
\nonumber \\[1ex] 
\ = \ &  
\Delta_{X, H_\ph}(w^{\alpha+\beta}_{0,0}(H_\ph)) \, , 
\end{align}  
using that 
$w^{\alpha+\beta}_{0,0}(H_\ph) = 
H_\ph \; \chi_{\alpha+\beta}^2 + w_{0,0}(H_\ph) \, \bchi_{\alpha+\beta}^2$.
We recall \eqref{f} and write
\begin{align} \label{imp}
f_{\chi_\alpha, H_\ph}[w_{0,0}(H_\ph)] 
\ = \ & 
\big( H_\ph \, \Delta_{\chi_\alpha, H_\ph}[ w_{0,0}(H_\ph) ]^{-1} \big) 
\big|_{\fH_{\bchi_\alpha}}  \oplus \bfone_{\fH_{\bchi_\alpha}^\perp}
\nonumber \\[1ex]
\ = \ & 
\big( H_\ph \, \Delta_{X, H_\ph}[ w_{0,0}^{\alpha+\beta}(H_\ph) ]^{-1} \big) 
\big|_{\fH_\bX}  \oplus \bfone_{\fH_\bX^\perp} 
\nonumber \\[1ex]
\ = \ & 
f_{X, H_\ph}[w^{\alpha+\beta}_{0,0}(H_\ph)] \, ,   
\end{align}  
thanks to $\fH_{\bchi_\alpha} = \fH_\bX$, see \eqref{jor2}.

Now, by \eqref{Jpto1} and \eqref{eq-functionchisemigroup}, taking
$\hcR_\alpha(\qHH[\uw])$ instead of $\qHH[\uw]$, we obtain 
\begin{align} \label{renn}
\hcR_\alpha(\qHH[\uw])_{\bchi_\beta, H_\ph} 
\ = \ & 
H_\ph \, P_{\bchi_\beta} \, + \, 
\bchi_\beta \big( \hcR_\alpha(\qHH[\uw]) - H_\ph \big) \bchi_\beta  
\nonumber \\[1ex]
\ = \ & 
H_\ph \, \chi_{\beta}^2 \, P_{\bchi_\beta} \, + \, 
\bchi_\beta \big( \hcR_\alpha(\qHH[\uw]) \big) \bchi_\beta\ \, .
\end{align} 
We recall our assumption that 
$\hcR_\alpha(\qHH[\uw]) \in \dom(\mathrm{F}_{\chi_\beta, H_\ph})$, which
implies that $\hcR_\alpha(\qHH[\uw])_{\bchi_{\beta}, H_\ph}$ is bounded
invertible [see the text around \eqref{Jpto1}]. The first step of the
proof is accomplished if we prove that 
$\qHH[\uw^{\alpha+\beta}] = \qHH[\uw]_{\bchi_{\alpha + \beta}, H_\ph}$
is bounded invertible [see \eqref{eq-wchidef})], because all operators
we consider are bounded. In view of Proposition~\ref{prop-isospecSFB},
this follows from the invertibility of $F_{X, H_\ph}(
\qHH[\uw^{\alpha+\beta}] )$ which, in turn, is a consequence of
the identity
\begin{align} \label{dom}
\hcR_\alpha \big( \qHH[\uw] \big)_{\bchi_{\beta}, H_\ph}
\ = \ 
e^\alpha \, \Gamma_\alpha \, 
F_{X, H_\ph} \big(\qHH[\uw^{\alpha+\beta}] \big) \, \Gamma_\alpha^* 
\end{align}  
proven below, together with the fact that
$\qHH[\uw^{\alpha+\beta}]$ belongs to the domain of $F_{X, H_\ph}$.

Using that, for every measurable function $g : \RR \to \CC$, we have
that $\Gamma_\alpha g(H_\ph) \Gamma_\alpha^* = g(e^{-\alpha}H_\ph)$,
we obtain
\begin{align} \label{porque}
e^\alpha \, \Gamma_\alpha & \, H_\ph \, \Gamma_\alpha^* \, \chi_\beta^2
\, + \, \bchi_\beta \, e^\alpha \, \Gamma_\alpha \, 
F_{\chi_\alpha, H_\ph}\big( \qHH[\uw] \big) \, \Gamma_\alpha^* \, \bchi_\beta
\\[1ex] \nonumber
\ = \ &
e^\alpha \, \Gamma_\alpha \Big[ 
H_\ph \, \Gamma_\alpha^* \, \chi_\beta^2 \, \Gamma_\alpha 
\, + \, \Gamma_\alpha^* \, \bchi_\beta \, \Gamma_\alpha \,
F_{\chi_\alpha, H_\ph}\big( \qHH[\uw] \big) \, \Gamma_\alpha^* \, \bchi_\beta \, 
\Gamma_\alpha \Big] \Gamma_\alpha^*
\\[1ex] \nonumber
\ = \ &
e^\alpha \, \Gamma_\alpha \Big[ H_\ph \, \chi_\beta(e^\alpha H_\ph)^2 
\, + \, \bchi_\beta(e^\alpha H_\ph) \, 
F_{\chi_\alpha, H_\ph}\big( \qHH[\uw] \big) \, \bchi_\beta(e^\alpha H_\ph) \Big]
\Gamma_\alpha^* \, .
\end{align}  
Using Eq.~\eqref{maintsyeah} and 
$\qHH[\uw] = w_{0,0}(H_\ph) + \qWW[\uw]$, 
we analyze the second term in the last line which contains the smooth 
Feshbach--Schur map,
\begin{align}
\bchi_\beta & (e^\alpha H_\ph) \,
F_{\chi_\alpha, H_\ph}\big( \qHH[\uw] \big) \, \bchi_\beta(e^\alpha H_\ph) 
\nonumber \\[1ex] 
\ = \ &
w_{0,0}(H_\ph) \, f_{\chi_\alpha, H_\ph}[w_{0,0}(H_\ph)] \, 
\bchi_\beta(e^\alpha H_\ph)^2 
\\[0,5ex] \nonumber &
+ \, 
\bchi_\beta(e^\alpha H_\ph) \, \chi_\alpha \, 
f_{\chi_\alpha, H_\ph}[w_{0,0}(H_\ph)] \, \qWW[\uw] \, 
f_{\chi_\alpha, H_\ph}[w_{0,0}(H_\ph)] \, \chi_\alpha \, \bchi_\beta(e^\alpha H_\ph)
\\[0,5ex] \nonumber &
- \, 
\bchi_\beta(e^\alpha H_\ph) \, \chi_\alpha \, 
f_{\chi_\alpha, H_\ph}[w_{0,0}(H_\ph)] \, \qWW[\uw] \, \bchi_\alpha
\\ \nonumber &
\qquad H_{\bchi_\alpha, H_\ph}(\uw)^{-1} \, \bchi_\alpha \, \qWW[\uw] \,
f_{\chi_\alpha, H_\ph}[w_{0,0}(H_\ph)] \, 
\chi_\alpha \, \bchi_\beta(e^\alpha H_\ph) \, .
\end{align} 
Additionally using Eq.~\eqref{Falta}, we obtain:  
\begin{align}  \label{casi}
\bchi_\beta(e^\alpha H_\ph) \, & 
F_{\chi_\alpha, H_\ph}\big( \qHH[\uw] \big) \, \bchi_\beta(e^\alpha H_\ph) 
\nonumber \\[1ex] 
\ = \ &
w_{0,0}(H_\ph) \, f_{\chi_\alpha, H_\ph}[w_{0,0}(H_\ph)] \, 
\bchi_\beta(e^\alpha H_\ph)^2 
\\[0,5ex] \nonumber &
+ \, 
X \, f_{\chi_\alpha, H_\ph}[w_{0,0}(H_\ph)] \, 
\bchi_{\alpha+\beta} \, \qWW[\uw] \, \bchi_{\alpha+\beta} \, 
f_{\chi_\alpha, H_\ph}[w_{0,0}(H_\ph)] \, X 
\\[0,5ex] \nonumber &
- \, 
X \, f_{\chi_\alpha, H_\ph}[w_{0,0}(H_\ph)] \, 
\bchi_{\alpha+\beta} \, \qWW[\uw] \, \bchi_{\alpha+\beta} \, 
\\ \nonumber &
\qquad 
\bX \, H_{\bchi_\alpha, H_\ph}(\uw)^{-1} \, \bX \, 
\bchi_{\alpha+\beta} \, \qWW[\uw] \, \bchi_{\alpha+\beta} \, 
f_{\chi_\alpha, H_\ph}[w_{0,0}(H_\ph)] \, X \, .
\end{align} 
Moreover, recalling \eqref{HW-1} and \eqref{HW-2}, we observe that 
\begin{align}  \label{casi-2}
\bchi_{\alpha+\beta} \, \qWW[\uw] \, \bchi_{\alpha+\beta} 
\ = \ 
\qWW[\uw^{\alpha+\beta}] \, ,
\end{align} 
which, together with \eqref{eq-wchidef}, \eqref{nomams}, \eqref{imp}, 
\eqref{casi}, and 
\begin{align}  \label{casi-3}
w_{0,0}^{\alpha+\beta}(H_\ph) 
\ = \ 
H_\ph \, \chi_{\alpha+\beta}^2(H_\ph) \, + \, 
w_{0,0}(H_\ph) \, \bchi_{\alpha+\beta}^2(H_\ph) 
\end{align} 
implies that
\begin{align}  \label{casip}
\bchi_\beta(e^\alpha H_\ph) \, & 
F_{\chi_\alpha, H_\ph}\big( \qHH[\uw] \big) \, \bchi_\beta(e^\alpha H_\ph) 
\nonumber \\[1ex] 
\ = \ &
w_{0,0}(H_\ph) \, f_{X, H_\ph}[w_{0,0}^{\alpha+\beta}(H_\ph)] \, 
\bchi_\beta(e^\alpha H_\ph)^2 
\nonumber \\[0,5ex] &
+ \, 
X \, f_{X, H_\ph}[w_{0,0}^{\alpha+\beta}(H_\ph)] \, \qWW[\uw^{\alpha+\beta}] \, 
f_{X, H_\ph}[w_{0,0}^{\alpha+\beta}(H_\ph)] \, X \, 
\nonumber \\[0,5ex] &
- \, 
X \, f_{X, H_\ph}[w_{0,0}^{\alpha+\beta}(H_\ph)] \, \qWW[\uw^{\alpha+\beta}] \, 
\\ \nonumber &
\qquad 
\bX \, H_{\bchi_\alpha, H_\ph}(\uw)^{-1} \, \bX \, 
\qWW[\uw^{\alpha+\beta}] \, f_{X, H_\ph}[w_{0,0}^{\alpha+\beta}(H_\ph)] \, X 
\\[1ex] \nonumber 
\ = \ &
w_{0,0}(H_\ph) \, f_{X, H_\ph}[ w_{0,0}^{\alpha+\beta}(H_\ph) ] \, 
\bchi_\beta(e^\alpha H_\ph)^2  
\\ \nonumber &
\qquad 
- \, w_{0, 0}^{\alpha+\beta}(H_\ph) \, f_{X, H_\ph}[w^{\alpha+\beta}_{0,0}(H_\ph)]
\, + \, F_{X, H_\ph}\big( \qHH[\uw^{\alpha+\beta}] \big) \, ,   
\end{align} 
where, in the last step, we apply \eqref{maintsyeah} again, with 
$\chi = X$, $S = \uw_{0,0}^{\alpha+\beta}(H_\ph)$, and $T = H_\ph$.

Using \eqref{f}, we observe that [see \eqref{imp} and the text 
below \eqref{XoX}] 
\begin{align} \label{homo}
f_{X, H_\ph} & [ w_{0,0}^{\alpha+\beta}(H_\ph) ] \, 
\Delta_{\chi_\alpha, H_\ph}[ w_{0,0}(H_\ph) ] 
\\[1ex] \nonumber
\ = \ &
f_{\chi_\alpha, H_\ph}[ w_{0,0}(H_\ph) ] \, 
\Delta_{\chi_\alpha, H_\ph}[ w_{0,0}(H_\ph) ] 
\ = \ H_\ph \, ,   
\end{align}  
where we use that 
$\Delta_{\chi_\alpha, H_\ph}[w_{0,0}(H_\ph)]\big|_{\fH_{\bchi_\alpha}^\perp} 
= H_\ph \big|_{\fH_{\bchi_\alpha}^\perp}$. 
Moreover, \eqref{eq-wchidef} implies that 
$w_{0,0}^{\alpha+\beta}(H_\ph) = 
\Delta_{\chi_{\alpha + \beta}, H_\ph}[ w_{0,0}(H_\ph) ]$, see Definition~\ref{Delta}.  
Consequently, we obtain
\begin{align} \label{homo1}
& w_{0,0}(H_\ph) \, f_{X, H_\ph}[w_{0,0}^{\alpha+\beta}(H_\ph)] \, 
\bchi_\beta(e^\alpha H_\ph)^2 
\nonumber \\ & \qquad 
\, - \, 
w_{0,0}^{\alpha+\beta}(H_\ph) \, f_{X, H_\ph}[ w^{\alpha+\beta}_{0,0}(H_\ph) ]
\\[1ex] \nonumber
& \ = \ 
\big\{ w_{0,0}(H_\ph) \, \bchi_\beta(e^\alpha H_\ph)^2 \, - \,    
\Delta_{\chi_{\alpha + \beta}, H_\ph}[ w_{0,0}(H_\ph) ] \big\} \,
f_{X, H_\ph}[w^{\alpha+\beta}_{0,0}(H_\ph)] \, .
\end{align}  
Moreover, using Definition~\ref{Delta} and 
\eqref{eq-functionchisemigroup}, we get
\begin{align} \label{nnoo}
w_{0,0} & (H_\ph) \, \bchi_\beta(e^\alpha H_\ph)^2    
\, - \, \Delta_{\chi_{\alpha + \beta}, H_\ph}(w_{0,0}(H_\ph))  
\\[1ex] \nonumber
\ = \ & 
w_{0,0}(H_\ph) \big\{ [ 1 - \chi_\beta(e^\alpha H_\ph)^2 ] 
- [ 1- \chi_\beta(e^\alpha H_\ph)^2 \chi_{\alpha }^2 ] \big\}  
\, - \, H_\ph \, \chi_{\alpha + \beta}^2  
\\[1ex] \nonumber
\ = \ & 
- w_{0,0}(H_\ph) \, \chi_\beta(e^\alpha H_\ph)^2 \, [1 - \chi_\alpha^2] 
\, - \, H_\ph \, \chi_{\alpha + \beta}^2  
\\[1ex] \nonumber
\ = \ & 
- \chi_\beta(e^\alpha H_\ph)^2 \big( H_\ph \, \chi_{\alpha }^2 \, + \, 
w_{0,0}(H_\ph) \, \bchi_\alpha^2 \big) 
\\[1ex] \nonumber
\ = \ & 
- \chi_\beta(e^\alpha H_\ph)^2 \, \Delta_{\chi_\alpha, H_\ph}[ w_{0,0}(H_\ph) ]  \, . 
\end{align}  
Eqs.~\eqref{casip}, \eqref{homo}, \eqref{homo1} and \eqref{nnoo}  
lead us to
\begin{align}\label{casiend}
\bchi_\beta(e^\alpha H_\ph) \, & F_{\chi_\alpha, H_\ph}\big( \qHH[\uw] \big) \, 
\bchi_\beta(e^\alpha H_\ph) 
\\[1ex] \nonumber
\ = \ &
- \chi_\beta(e^\alpha H_\ph)^2 \, H_\ph \, + \, 
F_{X, H_\ph}\big( \qHH[\uw^{\alpha+\beta}] \big) \, . 
\end{align}  
From Definition~\ref{def:rescaledfsmap}, Eqs.~\eqref{porque} and 
\eqref{casiend}, and observing that 
$e^\alpha \Gamma_\alpha H_\ph\Gamma_\alpha^* = H_\ph$, 
we obtain 
\begin{align} \label{coyo}
H_\ph \, \chi_\beta^2 \, + \, 
\bchi_\beta \, \hcR_\alpha\big( \qHH[\uw] \big) \, \bchi_\beta 
\ = \
e^\alpha \, \Gamma_\alpha \, 
F_{X, H_\ph}( \qHH[\uw^{\alpha+\beta}] )\Gamma_\alpha^* \, .  
\end{align}  
Eqs.~\eqref{renn} and \eqref{coyo} imply the desired identity, namely,
Eq.~\eqref{dom}.

\paragraph*{Step 2: Flow Property, Eq.~\eqref{eq-Semigroup}.}
We turn to the proof of \eqref{eq-Semigroup}. Since $0\in
\rho(\qHH[\uw])$, Proposition~\ref{prop-isospecSFB} shows that
$F_{\chi_\alpha, H_\ph}(\qHH[\uw])$, $\hcR_\alpha(\qHH[\uw])$, and
$\hcR_\beta(\hcR_\alpha(\qHH[\uw]))$ are bounded invertible. Moreover,
$\hcR_{\alpha+\beta}(\qHH[\uw])$ is bounded invertible as well, by
Step~1. We recall that $(\Gamma_\alpha)_{\alpha \in \RR}$ is a group
of unitary operators and that for every measurable function $g : \RR
\to \CC$, $\Gamma_\alpha g(H_\ph) \Gamma_\alpha^* = g(e^{-\alpha }
H_\ph)$ and, in particular, that $e^\alpha \Gamma_\alpha H_\ph
\Gamma_\alpha^* = H_\ph$. Now, from \eqref{eq-Fsmoothinvers} and
Definition~\ref{def:rescaledfsmap}, it follows that 
\begin{align} \label{eq-step2-01}
\big[ \hcR_\beta \big( \hcR_\alpha & (\qHH[\uw]) \big) \big]^{-1} 
\ = \  
e^{-\beta} \, \Gamma_\beta
\big[\chi_\beta \, \hcR_\alpha(\qHH[\uw])^{-1} \, \chi_\beta 
\, + \, \bchi_\beta \, H_\ph^{-1}\bchi_\beta \, \big] \Gamma_\beta^*
\nonumber \\[1ex]
\ = \ & 
e^{-\beta} \, \Gamma_\beta \, \big[ \chi_\beta \, e^{-\alpha} \, \Gamma_\alpha 
\, \big\{ \chi_\alpha \, \qHH[\uw]^{-1} \, \chi_\alpha \, + \, 
\bchi_\alpha \, H_\ph^{-1} \, \bchi_\alpha \big\} \Gamma_\alpha^* \, \chi_\beta
\nonumber \\
& \qquad \qquad 
+ \bchi_\beta \, e^{-\alpha} \, \Gamma_\alpha \, H_\ph^{-1} \, 
\Gamma_\alpha^* \, \bchi_\beta \big] \Gamma_\beta^*
\nonumber \\[1ex]
\ = \ & 
e^{-\alpha-\beta} \, \Gamma_\beta \, \Gamma_\alpha \, 
\big[ \chi_\beta(e^\alpha H_\ph) \, 
\big\{ \chi_\alpha \, \qHH[\uw]^{-1} \, \chi_\alpha \, + \, 
\bchi_\alpha \, H_\ph^{-1} \, \bchi_\alpha \big\} \chi_\beta(e^\alpha H_\ph)
\nonumber \\
& \qquad \qquad 
+ \bchi_\beta(e^\alpha H_\ph) \, H_\ph^{-1} \, \bchi_\beta(e^\alpha H_\ph) \big] 
\Gamma_\alpha^* \, \Gamma_\beta^*
\nonumber \\[1ex]
\ = \ & 
e^{-\alpha-\beta} \, \Gamma_{\alpha+\beta} \, 
\big[ \chi_\beta(e^\alpha H_\ph) \, \chi_\alpha \, \qHH[\uw]^{-1} \, 
\chi_\alpha \, \chi_\beta(e^\alpha H_\ph)
\nonumber \\
& \qquad \qquad 
+ H_\ph^{-1} \big\{ (1- \chi_\alpha^2) \, \chi_\beta(e^\alpha H_\ph)^2 
\, + \, \bchi_\beta(e^\alpha H_\ph)^2 \big\} \big] \Gamma_{\alpha+\beta}^*
\nonumber \\[1ex]
\ = \ & 
e^{-\alpha-\beta} \, \Gamma_{\alpha+\beta} \, 
\big[ \chi_{\alpha+\beta} \, \qHH[\uw]^{-1} \, \chi_{\alpha+\beta} 
\, + \, H_\ph^{-1} \, \bchi_{\alpha+\beta}^2 \big] \Gamma_{\alpha+\beta}^*
\nonumber \\[1ex]
\ = \ & 
\hcR_{\alpha+\beta}(\qHH[\uw])^{-1} \, ,
\end{align}  
where we use \eqref{eq-functionchisemigroup}.  
\end{proof}
%%%%%%%%%%%%%%%%%%%%%%%%%%%%%%%%%%%%%%%%%%%%%%%%%%%%

%%%%%%%%%%%%%%%%%%%%%%%%%%%%%%%%%%%%%%%%%%%%%%%%%%%%
\subsection{Renormalization of the Spectral Parameter}
%%%%%%%%%%%%%%%%%%%%%%%%%%%%%%%%%%%%%%%%%%%%%%%%%%%%
%
In the following lemma we establish the invertibility of the
restrictions of $\Delta_{\chi_{\alpha }, H_\ph}[w_{0,0}(H_\ph)]$ and
$\qHH[\uw]_{\bchi_\alpha, H_\ph}$ to $\fH_{\bchi_\alpha}$. For this we
recall Definition~\ref{Delta}, and Eqs.~\eqref{Jpto1}, \eqref{vI-01},
\eqref{yeahhh}, and \eqref{ur} and the notation $\cB(\fE)$ for the
Banach space of bounded linear operators on a Hilbert space $\fE$.
%
%%%%%%%%%%%%%%%%%%%%%%%%%%%%%%%%%%%%%%%%%%%%%%%%%%%%
\begin{lemma}\label{lemma: domFchiconditions}
Suppose that $|z | \leq \frac{1}{4} e^{-\alpha}$, that 
$\|\uw - \ur\|_{(\partial_r)} < \frac{1}{2}$, and that
$4 \xi \| \uw_\gI \|^{(\xi)} e^{\alpha} < 
1 - 2 \|\uw - \ur\|_{(\partial_r)}$.
Then, the restrictions of 
$\Delta_{\chi_{\alpha }, H_\ph}[w_{0,0}(H_\ph)]$
and $\qHH[\uw]_{\bchi_\alpha, H_\ph}$ to $\fH_{\bchi_\alpha}$ are invertible, and
these inverses obey the norm bounds
\begin{align}\label{H-0}
\Big\| \big( 
\Delta_{\chi_\alpha, H_\ph}[w_{0,0}(H_\ph)] \big|_{\fH_{\bchi_\alpha}} 
\big)^{-1} \Big\|_{\mathcal{B}(\fH_{\bchi_\alpha})} 
\ \leq \ 
\frac{4 \, e^{\alpha}}{1 - 2\|\uw - \ur\|_{(\partial_r)}} 
\end{align}  
and
\begin{align}\label{H-1}
\big\| ( \qHH[\uw]_{\bchi_\alpha, H_\ph} )^{-1} \big\|_{\mathcal{B}(\fH_{\bchi_\alpha})} 
\ \leq \ 
\frac{4 e^{\alpha}}{1 - 2\|\uw - \ur\|_{(\partial_r)}
- 4 \xi \, \| \uw_\gI \|^{(\xi)} \, e^{\alpha} } \, . 
\end{align}  
Moreover, $\qHH[\uw] \equiv \qHH[\uw(z)]$ is invertible 
provided $\rRe(z) \geq \xi \| \uw_\gI \|^{ (\xi)} 2 e^{\alpha}$. 
\begin{proof}
In this proof we denote by $\delta_z : [0,1] \to \CC$ the function
\begin{align}
\delta_z(r) \ := \ 
r \, \chi^2_\alpha(r) \, + \, w_{0,0}(z,r) \, \bchi^2_\alpha(r)  
\end{align}  
such that  
\begin{align}\label{EM0}
\Delta_{\chi_{\alpha}, H_\ph}[w_{0,0}(H_\ph)] \ =: \ \delta_{z}(H_\ph),  
\end{align}  
according to Definition~\ref{Delta}. 

We recall that $w_{0,0}(z,0) = z$, see \eqref{w00z}. Assuming
$e^{-\alpha}/2 \leq r \leq 1$, we observe that $|z| \leq r/2$ and
hence:
\begin{align} \label{EM1}
|\delta_{z} (r)| 
\ = \ & 
\big| r \, + \, \bchi_\alpha^2(r) \, 
[w_{0,0}(z,r) - r - w_{0,0}(z,0) + z] \big| 
\nonumber \\[1ex]
\ \geq \ & 
r \, - \, r \, \Big|\frac{w_{0,0}(z,r) - w_{0,0}(z,0)}{r} - 1 \Big| \, - \, |z|
\nonumber \\[1ex]
\ \geq \ & 
r \, \Big( 1 - 
\sup_{r \in [e^{-\alpha}/2, 1]} |\partial_r w_{0,0}(z,r) - 1| \Big) 
\, - \, \frac{r}{2}
\\[1ex] \nonumber 
\ \geq \ & 
r \, \Big( \frac{1}{2} - \|\uw - \ur\|_{(\partial_r)} \Big) 
\ > \ 
\frac{r}{2} \, \big( 1 - 2 \|\uw - \ur\|_{(\partial_r)} \big) 
\ > \ 0 \, ,
\end{align}  
where we use \eqref{eq:muNormtilde-01,6} and the hypotheses of the
present lemma.

Next, we establish \eqref{H-0}. Eqs.~\eqref{EM0} and \eqref{EM1}
together with the functional calculus imply that
\begin{align}\label{carbon}
\big\| \Delta_{\chi_{\alpha}, H_\ph}[w_{0,0}(H_\ph)]^{-1} 
\big\|_{\mathcal{B}(\fH_{\bchi_\alpha})}  
\ = \ & 
\sup_{r \in [e^{-\alpha}/2, 1]} \Big| \frac{1}{\delta_{z}(r)} \Big| 
\ \leq \ 
\frac{4 e^{\alpha}}{1 - 2\|\uw - \ur\|_{(\partial_r)}} \, ,
\end{align}  
since $e^{-\alpha}/2 \leq H_\ph \leq 1$ on
$\fH_{\bchi_\alpha}$ according to \eqref{eq-functionchisemigroup}. 
This proves \eqref{H-0}.  

We turn to \eqref{H-1}. It follows from \eqref{opbound} that 
\begin{align}\label{fep1}
\| \qWW[\uw] \| \ \leq \ \xi \, \| \uw_\gI \|^{(\xi)}
\end{align}  
and, therefore,
\begin{align} \label{NYC1} 
\Big\| \bchi_\alpha \, \qWW[\uw] \, \bchi_\alpha \, 
\big( \Delta_{\chi_\alpha, H_\ph}[w_{0,0}(H_\ph)] |_{\fH_{\bchi_\alpha}} \big)^{-1} 
\Big\|_{\mathcal{B}(\fH_{\bchi_\alpha})} 
\ \leq \ &  
\frac{ 4 e^{\alpha} \, \xi \, \| \uw_\gI \|^{(\xi)} }{
1 - 2\|\uw - \ur\|_{(\partial_r)} } 
\ < \ 1 \, . 
\end{align}  
We notice that [see Eq.~\eqref{Jpto1}, \eqref{HW-1},
\eqref{eq-functionchisemigroup}, and Definition~\ref{Delta}]
\begin{align}\label{NC}
\qHH[\uw]_{\bchi_\alpha, H_\ph} 
\ = \ & 
H_\ph \, + \, \bchi_\alpha \big( \qHH[\uw] - H_\ph \big) \bchi_\alpha 
\nonumber \\[1ex]
\ = \ & 
H_\ph \, \chi^2_\alpha \, + \, w_{0,0}(H_\ph) \, \bchi^2_\alpha 
\, + \, \bchi_\alpha \, \qWW[\uw] \, \bchi_\alpha
\nonumber \\[1ex]
\ = \ & 
\Delta_{\chi_\alpha, H_\ph}[w_{0,0}(H_\ph)] \, + \, 
\bchi_\alpha \, \qWW[\uw] \, \bchi_\alpha \, . 
\end{align}  
A norm-convergent Neumann series expansion
together with \eqref{NYC1} and \eqref{NC} imply that
$\qHH[\uw]_{\bchi_\alpha, H_\ph}$ is invertible and
\begin{align} \label{Neum}
(H & [\uw]_{\bchi_\alpha, H_\ph} )^{-1} \ = \ 
\\ \nonumber &
\sum_{n=0}^\infty (-1)^n \Delta_{\chi_\alpha, H_\ph}[w_{0,0}(H_\ph)]^{-1} 
\Big( \bchi_\alpha \, \qWW[\uw] \, \bchi_\alpha \,
\Delta_{\chi_\alpha, H_\ph}[w_{0,0}(H_\ph)]^{-1} \Big)^n \, .  
\end{align}  
Eqs.~\eqref{NYC1} and \eqref{Neum} imply that 
\begin{align}
\big\| ( \qHH[\uw]_{\bchi_\alpha, H_\ph} )^{-1} \big\|_{\mathcal{B}(\fH_{\bchi_\alpha})} 
\ \leq \
\frac{1}{  1 -   \frac{4 e^{\alpha} \xi \| \uw_\gI \|^{(\xi)} 
}{1 - 2 \|\uw - \ur\|_{(\partial_r)}} } \, 
\frac{4 e^{\alpha}}{1 - 2\|\uw - \ur\|_{(\partial_r)}} \, , 
\end{align}  
which yields \eqref{H-1}. 

Now we prove the last statement of the lemma, namely,
the invertibility of $\qHH[\uw(z)]$ under the condition that
$\rRe(z) \geq \xi \| \uw_\gI \|^{ (\xi)} 2 e^{\alpha}$.
We proceed as in \eqref{EM1} and obtain 
\begin{align}\label{NYC2}
\rRe\{ w_{0,0}(z,r) \} 
\ = \ &
\rRe\{z\} \, + \, r \, + \, \rRe\{ w_{0,0}(z,r) - w_{0,0}(z,0) - r \}
\nonumber \\[1ex]
\ \geq \ & 
\rRe\{z\} \, + \, r \big( 1 -  \|  \uw  - \ur  \|_{(\partial_r)} \big) 
\ \geq \ \rRe\{z\} \, , 
\end{align}  
uniformly in $0 \leq r \leq 1$. We conclude as in
\eqref{NYC1} that
\begin{align}\label{NYC4}
\big\| \qWW[\uw] \, w_{0,0}(H_\ph)^{-1} \big\|_\op 
\ \leq \ 
\frac{\xi \, \| \uw_\gI \|^{(\xi)} \, 2 e^{\alpha}}{ \rRe(z) } 
\ < \ 1 \, ,
\end{align}  
which establishes $0 \in \rho( \qHH[\uw] )$, since a Neumann series
expansion as in \eqref{Neum} proves the invertibility of 
$\qHH[\uw] = w_{0,0}(H_\ph) + \qWW[\uw]$.
\end{proof}
\end{lemma}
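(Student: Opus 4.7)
The plan is to prove the three invertibility statements in turn, each via a bound on a multiplication operator obtained by functional calculus plus a Neumann series to absorb the interaction. The whole argument rests on the single quantitative input $w_{0,0}(z,0)=z$ from \eqref{w00z}, combined with the $C^1$-norm control $\|\uw-\ur\|_{(\partial_r)}<1/2$, which together allow a lower bound on the free scalar multiplier; the perturbation terms are then absorbed using \eqref{opbound}.

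First, I would handle $\Delta_{\chi_\alpha,H_\ph}[w_{0,0}(H_\ph)]$. Writing $\delta_z(r):=r\chi_\alpha^2(r)+w_{0,0}(z,r)\bchi_\alpha^2(r)$, the restriction to $\fH_{\bchi_\alpha}$ is multiplication by $\delta_z$ on the spectral subspace where $r=H_\ph\in[\tfrac12 e^{-\alpha},1]$ (see \eqref{eq-functionchisemigroup}). I would decompose
\begin{align*}
\delta_z(r) \ = \ r \,+\, \bchi_\alpha^2(r)\,\bigl[ w_{0,0}(z,r)-r - w_{0,0}(z,0) + z\bigr],
\end{align*}
use $|z|\le r/2$ (which holds because $|z|\le \tfrac14 e^{-\alpha}\le r/2$), and bound the bracket by the mean-value theorem combined with the definition \eqref{eq:muNormtilde-01,6} of $\|\cdot\|_{(\partial_r)}$, giving $|\delta_z(r)|\ge \tfrac{r}{2}(1-2\|\uw-\ur\|_{(\partial_r)})$. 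Functional calculus then yields \eqref{H-0}.

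Second, for $\qHH[\uw]_{\bchi_\alpha,H_\ph}$, I would rewrite the definition \eqref{Jpto1} as
\begin{align*}
\qHH[\uw]_{\bchi_\alpha,H_\ph} \ = \
\Delta_{\chi_\alpha,H_\ph}[w_{0,0}(H_\ph)] \,+\, \bchi_\alpha\,\qWW[\uw]\,\bchi_\alpha,
\end{align*}
using $\chi_\alpha^2+\bchi_\alpha^2=1$ and $\qHH[\uw]=w_{0,0}(H_\ph)+\qWW[\uw]$. By \eqref{opbound} the perturbation has norm at most $\xi\|\uw_\gI\|^{(\xi)}$, and composing with the inverse bounded by Step~1 produces an operator of norm $<1$ exactly under the hypothesis $4\xi\|\uw_\gI\|^{(\xi)}e^\alpha<1-2\|\uw-\ur\|_{(\partial_r)}$. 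A Neumann series then gives invertibility and the geometric-sum estimate \eqref{H-1}.

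Third, for the invertibility of $\qHH[\uw(z)]$ itself under $\rRe(z)\ge 2\xi\|\uw_\gI\|^{(\xi)}e^\alpha$, the same scheme applies with $w_{0,0}(H_\ph)$ playing the role of the free operator (no smooth cutoffs here). I would check $\rRe\,w_{0,0}(z,r)\ge \rRe(z)$ uniformly in $r\in[0,1]$, using $w_{0,0}(z,0)=z$ and $\|\uw-\ur\|_{(\partial_r)}<1/2$, which makes $w_{0,0}(H_\ph)$ bounded invertible with $\|w_{0,0}(H_\ph)^{-1}\|\le 1/\rRe(z)$. The ratio in the Neumann expansion of $(w_{0,0}(H_\ph)+\qWW[\uw])^{-1}$ is at most $\xi\|\uw_\gI\|^{(\xi)}/\rRe(z)\le 1/(2e^\alpha)<1$, so $\qHH[\uw]$ is invertible. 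There is no genuine obstacle here; the only point requiring care is that the lower bound on $|\delta_z|$ uses $|z|\le r/2$, which is why the hypothesis on the scaling of $|z|$ with $e^{-\alpha}$ is essential—this is the step I would double-check first before writing the remaining routine estimates.
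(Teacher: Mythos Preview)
Your proposal is correct and follows essentially the same approach as the paper's own proof: the same scalar multiplier $\delta_z(r)=r\chi_\alpha^2(r)+w_{0,0}(z,r)\bchi_\alpha^2(r)$, the same decomposition isolating $w_{0,0}(z,r)-r-w_{0,0}(z,0)+z$ and use of $|z|\le r/2$ on $[\tfrac12 e^{-\alpha},1]$, the same rewriting $\qHH[\uw]_{\bchi_\alpha,H_\ph}=\Delta_{\chi_\alpha,H_\ph}[w_{0,0}(H_\ph)]+\bchi_\alpha\qWW[\uw]\bchi_\alpha$ followed by a Neumann series, and the same real-part lower bound $\rRe\,w_{0,0}(z,r)\ge\rRe(z)$ for the final invertibility claim. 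Your bound $\|w_{0,0}(H_\ph)^{-1}\|\le 1/\rRe(z)$ in the last step is in fact slightly sharper than the paper's stated estimate, but the argument is otherwise identical.
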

%%%%%%%%%%%%%%%%%%%%%%%%%%%%%%%%%%%%%%%%%%%%%%%%%%%%
%
We formulate Lemma~\ref{lemma: domFchiconditions} under
stronger and simpler hypotheses which yield simpler norm
bounds, too.
%
%%%%%%%%%%%%%%%%%%%%%%%%%%%%%%%%%%%%%%%%%%%%%%%%%%%%
\begin{cor} \label{cor:domFchiconditions}
Suppose that $|z | \leq \frac{1}{4} e^{-\alpha}$, that 
$\|\uw - \ur\|_{(\partial_r)} < \frac{1}{10}$, and that
$\| \uw_\gI \|^{(\xi)} \leq \frac{e^{-\alpha}}{100}$. 
Then, $\Delta_{\chi_{\alpha }, H_\ph}[w_{0,0}(H_\ph)]\big|_{\fH_{\bchi_\alpha}}$
and $H[\uw]_{\bchi_\alpha, H_\ph}\big|_{\fH_{\bchi_\alpha}}$ are invertible and
these inverses obey the norm bounds
\begin{align}\label{H-0a}
\Big\| \big( 
\Delta_{\chi_{\alpha + \beta}, H_\ph}[w_{0,0}(H_\ph)] \big|_{\fH_{\bchi_\alpha}} 
\big)^{-1} \Big\|_{\mathcal{B}(\fH_{\bchi_\alpha})}
\ \leq \ 
5 \, e^{\alpha}
\end{align}  
and
\begin{align}\label{H-1a}
\big\| ( H[\uw]_{\bchi_\alpha, H_\ph} )^{-1} \big\|_{\mathcal{B}(\fH_{\bchi_\alpha})} 
\ \leq \ 
10 \, e^{\alpha} \, . 
\end{align}  
Moreover, $H[\uw] \equiv H[\uw(z)]$ is invertible provided 
$\rRe(z) \geq \xi \| \uw_\gI \|^{ (\xi)} 2 e^{\alpha}$. 
\end{cor}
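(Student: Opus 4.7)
My plan is to obtain the corollary as a direct consequence of Lemma~\ref{lemma: domFchiconditions} by verifying that the strengthened hypotheses imply those of the lemma, and then by simplifying the two norm bounds \eqref{H-0} and \eqref{H-1} using the explicit numerical constants. No new operator-theoretic input is required; the work is purely bookkeeping of constants.

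First I would check that the hypotheses of Lemma~\ref{lemma: domFchiconditions} hold. The bound $\|\uw-\ur\|_{(\partial_r)} < \tfrac{1}{10}$ trivially gives $\|\uw-\ur\|_{(\partial_r)} < \tfrac{1}{2}$, so the first condition of the lemma is satisfied. For the second condition, I would use $\xi \in (0,1)$ together with $\|\uw_\gI\|^{(\xi)} \leq e^{-\alpha}/100$ to estimate
\begin{align}
4 \, \xi \, \|\uw_\gI\|^{(\xi)} \, e^{\alpha}
\ \leq \
\tfrac{4}{100} \ = \ \tfrac{1}{25} \, ,
\end{align}
while $1 - 2\|\uw-\ur\|_{(\partial_r)} > 1 - \tfrac{2}{10} = \tfrac{4}{5}$, so the strict inequality required in Lemma~\ref{lemma: domFchiconditions} holds with ample room. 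Consequently the lemma applies, and the two restricted operators are bounded invertible.

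Next I would substitute into the quantitative bounds. From \eqref{H-0} one gets
\begin{align}
\Big\| \big( \Delta_{\chi_\alpha, H_\ph}[w_{0,0}(H_\ph)] \big|_{\fH_{\bchi_\alpha}} \big)^{-1} \Big\|
\ \leq \
\frac{4 e^{\alpha}}{1 - 2/10} \ = \ 5 e^{\alpha} \, ,
\end{align}
which is exactly \eqref{H-0a}. From \eqref{H-1}, the denominator is bounded below by $1 - \tfrac{1}{5} - \tfrac{1}{25} = \tfrac{19}{25}$, so
\begin{align}
\big\| (\qHH[\uw]_{\bchi_\alpha, H_\ph})^{-1} \big\|
\ \leq \
\frac{4 \cdot 25}{19} \, e^{\alpha}
\ = \ \tfrac{100}{19} \, e^{\alpha}
\ < \ 10 \, e^{\alpha} \, ,
\end{align}
yielding \eqref{H-1a}. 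The final invertibility assertion for $\qHH[\uw]$ is literally the last conclusion of Lemma~\ref{lemma: domFchiconditions} and needs no further argument.

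The only potential obstacle is confirming the arithmetic of the constants, in particular that the buffer between the bound $\tfrac{1}{25}$ on $4\xi \|\uw_\gI\|^{(\xi)} e^\alpha$ and the quantity $\tfrac{4}{5}$ is wide enough to absorb it in the denominator of \eqref{H-1} and still produce a clean constant below $10$. Since both verifications are elementary, I expect no substantive difficulty.
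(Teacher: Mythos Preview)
Your proposal is correct and matches the paper's intent: the corollary is stated there without proof, as an immediate numerical specialization of Lemma~\ref{lemma: domFchiconditions}, and your verification of the hypotheses and substitution of the constants is exactly what is required. The arithmetic checks out (in particular $4e^\alpha/(4/5)=5e^\alpha$ and $4e^\alpha/(19/25)<10e^\alpha$), and the final invertibility statement is carried over verbatim from the lemma.
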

%%%%%%%%%%%%%%%%%%%%%%%%%%%%%%%%%%%%%%%%%%%%%%%%%%%%

%%%%%%%%%%%%%%%%%%%%%%%%%%%%%%%%%%%%%%%%%%%%%%%%%%%%
\begin{theorem}\label{thm-specrg}
Let $0< r_Z < \frac{1}{4}$, and assume that 
$\uw \in \cW_Z^\xi$ [see Eqs.~\eqref{wmn-05}-\eqref{wmn-06,2}] and that  
\begin{align} \label{cond-specrg-01a}
(10 e^\alpha)^2 \, \big( \xi \, \|\uw_\gI\|_Z^{(\xi)} \big)^2 & \, 
\big( 2 + \|\uw\|_Z^{(\xi) } \big) 
\ < \ \frac{1}{8} 
\quad \text{and} 
\\[1ex] \label{cond-specrg-01b}
( \xi \, \| \uw_\gI \|_Z^{(\xi)} )^2 
& \ < \ 
\frac{e^{-2\alpha}}{20} \, \big( \tfrac{1}{4}  - r_Z \big) \, .
\end{align}  
Moreover suppose that \eqref{H-1a} is satisfied. Then, the
following statements hold true:
\begin{itemize}
\item[(a)] The function 
$Q_\alpha: D(\tfrac{1}{4} e^{-\alpha}) \cap Q_\alpha^{-1}[D(r_Z)] 
\to D(r_Z)$ [see \eqref{Qgamma}] is biholomorphic. 

\item[(b)] The complex derivative of $Q_\alpha^{-1}$ fulfills
the norm estimate
\begin{align}
\big| \partial_\zeta Q^{-1}_\alpha[\zeta] \big| 
\ \leq \ 
\frac{e^{-\alpha}}{1 - (10 e^{\alpha})^2  
(\xi \, \|\uw_\gI\|_Z^{(\xi)})^2 (2 + \|\uw\|_Z^{(\xi)}) } 
\ \leq \ 2 \, . 
\end{align}  

\item[(c)] Additionally assuming that 
$10 e^{\alpha}  ( \xi \, \|\uw_\gI\|_Z^{(\xi)} )^2 < e^{-\alpha}r_Z$, 
it follows that 
\begin{align}\label{contain}
D\Big(e^{-\alpha} r_Z - 10 e^{\alpha} (\xi \| \uw_\gI \|_Z^{(\xi)})^2 \Big)
\ \subset \ &
D\big( \tfrac{1}{4} e^{-\alpha} \big) \, \cap \, 
Q_\alpha^{-1}[D(r_Z)] 
\nonumber \\
\ = \ &
\boE_\alpha(D_{r_Z}) \, , 
\end{align}  
where we recall Definition~\ref{E}.  
\end{itemize}
\begin{proof}[Proof of (a)]
For $\zeta \in D(r_Z)$ and $|z| \leq \tfrac{1}{4} e^{-\alpha}$, 
we introduce the function
\begin{align} \label{hzdef}
h(z) \ := \ 
z \, + \, e^{-\alpha} \, \zeta \, - \, e^{-\alpha} \, Q_\alpha(z) \, .
\end{align}  
Notice that $\chi_\alpha(0) = 1$ and $w_{0,0}(z,0) = z$, [see
\eqref{eq-functionchisemigroup}, and \eqref{w00z}] and $H_\ph \Om =0$.
Moreover, we recall Definition~\ref{def:rescaledfsmap} and
Eq.~\eqref{Sal} and note that $\Gamma_\alpha \Om = \Om$.  Using
Eqs.~\eqref{FchiT}, \eqref{Hchi} and that $W_{H_\ph} = H[\uw] - H_\ph$
[see \eqref{HTW}], we obtain:
\begin{align} \label{hz}
h(z) \ = \ & 
z + e^{-\alpha} \zeta - \big\la F_{\chi_\alpha, H_\ph}(H[\uw]) \big\ra_\Om
\nonumber \\[1ex]
\ = \ & 
z + e^{-\alpha}\zeta - \la H[\uw]_{\chi_\alpha, H_\ph} \ra_\Om 
\nonumber \\  &           
+ \big\la \chi_\alpha \big( H[\uw] - H_\ph \big) \, \bchi_\alpha 
\big( H[\uw]_{\bchi_\alpha, H_\ph} \big)^{-1} \bchi_\alpha 
\big( H[\uw] - H_\ph \big) \chi_\alpha \big\ra_{\Om}
\nonumber \\[1ex]
\ = \ & 
z + e^{-\alpha}\zeta - z  + 
\big\la \qWW[\uw] \, \bchi_\alpha ( H[\uw]_{\bchi_\alpha, H_\ph})^{-1} 
\bchi_\alpha \, \qWW[\uw] \, \big\ra_\Om
\nonumber \\[1ex]
\ = \ & 
e^{-\alpha}\zeta + \big\la \qWW[\uw] \, \bchi_\alpha (H[\uw]_{\bchi_\alpha, H_\ph})^{-1}
\bchi_\alpha \, \qWW[\uw] \, \big\ra_\Om \, ,
\end{align}  
where we recall that the vacuum expectation value 
$\la A \ra_{\Om}$ of an operator $A$ is defined as 
$\la A \ra_{\Om} := \la \Om | \, A \Om \ra$.
We observe that, thanks to \eqref{H-1a} and \eqref{cond-specrg-01b},
\begin{align} \label{cobtract-a}
\big| h(z)- e^{-\alpha }\zeta \big| 
\ \leq \ &
\| \qWW[\uw] \|_\op^2 \, 
\big\| ( H[\uw]_{\bchi_\alpha, H_\ph} )^{-1} \big\|_{\cB(\fH_{\bchi_\alpha})}^2  
\nonumber \\[1ex] 
\ \leq \ &
10 e^{\alpha} \, \big( \xi \, \| \uw_\gI \|_Z^{(\xi)}  \big)^2 
\ \leq \
\tfrac{1}{2} \, e^{-\alpha } ( \tfrac{1}{4} - r_Z) \, , 
\end{align} 
which implies that 
\begin{align} \label{cobtract-b}
h\big( \bD( \tfrac{1}{4} \, e^{-\alpha } ) \big)
\ \subseteq \ 
\bD\big( \tfrac{1}{2} \, ( \tfrac{1}{4} + r_Z) \, e^{-\alpha } \big) 
\ \subseteq \ 
\bD( \tfrac{1}{4} \, e^{-\alpha } ) \, .
\end{align} 
Next, we calculate
\begin{align}
|\partial_z h(& z)| 
\ \leq \ 
\big| \partial_z \big\la \qWW[\uw] \, \bchi_\alpha 
(H[\uw]_{\bchi_\alpha, H_\ph})^{-1} \bchi_\alpha \, \qWW[\uw] \big\ra_\Om \big|
\nonumber \\[1ex]
\ \leq \ & 
\big| \big\la W[\partial_z \uw] \, \bchi_\alpha (H[\uw]_{\bchi_\alpha, H_\ph})^{-1}
\bchi_\alpha \, \qWW[\uw] \big\ra_\Om \big| 
\\ \nonumber &
+ \big| \big\la \qWW[\uw] \, \bchi_\alpha (H[\uw]_{\bchi_\alpha, H_\ph})^{-1}
\bchi_\alpha \, W[\partial_z\uw] \big\ra_\Om \big| 
\\ \nonumber & 
+ \big| \big\la \qWW[\uw] \, \bchi_\alpha (H[\uw]_{\bchi_\alpha, H_\ph})^{-1}
\, H[\partial_z \uw]_{\bchi_\alpha, H_\ph} \, (H[\uw]_{\bchi_\alpha, H_\ph})^{-1} 
\bchi_\alpha \, \qWW[\uw] \big\ra_\Om \big| \, .
\end{align}  
Estimating the vacuum expectation values by the corresponding operator
norms, we get [see \eqref{opbound}]:
\begin{align} \label{partailh}
|\partial_z h(z)| 
\ \leq \ & 
2 \| \qWW[\uw] \|_\op \, \| W[\partial_z\uw] \|_\op \, 
\big\| ( H[\uw]_{\bchi_\alpha, H_\ph} )^{-1} \big\|_{\cB(\fH_{\bchi_\alpha})}
\nonumber \\ &
+ \| \qWW[\uw] \|_\op^2 \, 
\big\| ( H[\uw]_{\bchi_\alpha, H_\ph} )^{-1} \big\|_{\cB(\fH_{\bchi_\alpha})}^2 \, 
\| H[\partial_z \uw]_{\bchi_\alpha, H_\ph} \|_\op
\nonumber \\[1ex]
\ \leq \ & 
\big( 10 e^{\alpha} \big)^2 \, 
\big( \xi \, \| \uw_\gI \|_Z^{(\xi)} \big)^2 \, 
\big( 2 + \| \uw \|_Z^{(\xi) } \big)  
\ \leq \ 
\frac{1}{8} \, ,
\end{align} 
where we use \eqref{opbound}, \eqref{wmn-06,3}, and \eqref{H-1a} to
derive the second and \eqref{cond-specrg-01a} to derive the third
inequality. It follows for every $z_1, z_2 \in \CC$ with 
$|z_1|, |z_2| \leq \frac{1}{4} e^{-\alpha}$, that
\begin{align} \label{cobtract}
|h(z_1) - h(z_2)|  
\ \leq \ 
(10 e^\alpha)^2 \, \big( \xi \, \| \uw_\gI \|^{(\xi)} \big)^2 \, 
\big( 2 + \| \uw  \|_Z^{( \xi) } \big) \, |z_1-z_2| \, . 
\end{align}  
This and Eq.~\eqref{cobtract-b} imply that $h$ defines a contraction
on $\bD( \tfrac{1}{4} \, e^{-\alpha })$. By the contraction mapping
principle, $h$ possesses a unique fixed point 
$z_\zeta \in \bD( \tfrac{1}{4} \, e^{-\alpha })$, and by
\eqref{cobtract-b}, we can strengthen this estimate to 
$z_\zeta \in \bD( \tfrac{1}{2} [\tfrac{1}{4} - r_z] e^{-\alpha })$.
Thanks to the fixed point equation $z_\zeta = h(z_\zeta)$, we have
that $Q_\alpha(z_\zeta) = \zeta$. Hence, $Q_\alpha$ is surjective, and
the uniqueness of the fixed point implies the injectivity and hence
\textit{(a)}.\\[-1,5ex]

\noindent\textit{Proof of (b).}
Using $Q_\alpha(z) = \zeta + e^\alpha [z - h(z)]$, which
implies that $\partial_z Q_\alpha = e^\alpha [1 - \partial_z h(z)]$ 
in combination with \eqref{partailh}, we estimate the complex
derivative of $Q_\alpha^{-1}$ as follows,
\begin{align} \label{eq-tele.1}
\big| \partial_\zeta [Q_\alpha^{-1}] (\zeta) \big| 
\ = \ & 
\big| \partial_z Q_\alpha[ Q_\alpha^{-1} (\zeta) ] \big|^{-1}
\ = \
e^{-\alpha} \, \big| 1 \, - \, 
\partial_z h[ Q_\alpha^{-1} (\zeta) ] \big|^{-1}
\nonumber \\[1ex]
\ \leq \ &
\frac{e^{-\alpha}}{1 - 10 e^{-\alpha} 
(\xi \, \|\uw_\gI\|_Z^{(\xi)})^2 (2 + \|\uw\|_Z^{(\xi)}) } \, .
\end{align}  

\noindent\textit{Proof of (c).}
It follows from \eqref{hz} and \eqref{hzdef} that 
\begin{align} \label{qzmz-a}
|Q_\alpha(z)- e^{\alpha} z | 
\ = \ &
e^{\alpha}
\big| \big\la \qWW[\uw] \, \bchi_\alpha ( H[\uw]_{\bchi_\alpha, H_\ph} )^{-1}
\bchi_\alpha \, \qWW[\uw] \big\ra_\Om \big|  
\nonumber \\[1ex]
\ \leq \ & 
10 e^{2\alpha} \, ( \xi \, \|\uw_\gI\|^{(\xi)} )^2 \, ,
\end{align}  
where we use \eqref{H-1a} and \eqref{opbound}. With this we obtain
\begin{align}\label{qzmz-b}
|Q_\alpha(z)| \ \leq \ 
e^{\alpha } \, |z| \, + \, 
10 e^{2\alpha} \, ( \xi \, \|\uw_\gI\|^{(\xi)} )^2 \, .
\end{align}  
This shows that $Q_\alpha(z) \in D_{r_Z}$, whenever 
$|z| < e^{-\alpha} r_Z - 10 e^{\alpha} ( \xi \| \uw_\gI \|^{(\xi)} )^2$.
\end{proof}
\end{theorem}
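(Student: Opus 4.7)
The plan is to rewrite the equation $Q_\alpha(z)=\zeta$ as a fixed-point problem on $\bD(\tfrac{1}{4} e^{-\alpha})$ and invoke the Banach contraction mapping principle. For each $\zeta\in D(r_Z)$ I set $h_\zeta(z) := z + e^{-\alpha}\zeta - e^{-\alpha} Q_\alpha(z)$, so that fixed points of $h_\zeta$ correspond precisely to $z$-values with $Q_\alpha(z)=\zeta$.

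First I would compute $Q_\alpha(z)$ explicitly from Definition~\ref{def:rescaledfsmap} and \eqref{FchiT}. Since $\Gamma_\alpha \Om = \Om$, the dilation $S_\alpha$ in \eqref{Sal} simply inserts the prefactor $e^\alpha$ into the vacuum expectation value. Using $\chi_\alpha(0)=1$, $H_\ph \Om=0$ and $w_{0,0}(z,0)=z$ from \eqref{w00z}, the ``diagonal'' part of the smooth Feshbach--Schur map contributes exactly $z$ to $\la \cdot \ra_\Om$, and since the annihilation part of $\qWW[\uw]$ kills $\Om$, the quadratic term in \eqref{FchiT} produces $Q_\alpha(z) = e^\alpha z + e^\alpha \la \qWW[\uw]\,\bchi_\alpha (\qHH[\uw]_{\bchi_\alpha,\cdot})^{-1} \bchi_\alpha\, \qWW[\uw]\ra_\Om$. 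Consequently $h_\zeta(z) = e^{-\alpha}\zeta$ minus a $\zeta$-independent quadratic correction.

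Next I would verify that $h_\zeta$ is a strict contraction of $\bD(\tfrac{1}{4} e^{-\alpha})$ into itself. For self-mapping, I bound the quadratic correction using $\|\qWW[\uw]\|_\op \leq \xi \|\uw_\gI\|^{(\xi)}$ from \eqref{opbound} and $\|(\qHH[\uw]_{\bchi_\alpha,\cdot})^{-1}\| \leq 10\, e^\alpha$ from Corollary~\ref{cor:domFchiconditions}, obtaining $|h_\zeta(z) - e^{-\alpha}\zeta| \leq 10\, e^\alpha (\xi\|\uw_\gI\|_Z^{(\xi)})^2$, which by \eqref{cond-specrg-01b} is at most $\tfrac{1}{2} e^{-\alpha}(\tfrac{1}{4} - r_Z)$; together with $|e^{-\alpha}\zeta| \leq e^{-\alpha} r_Z$ this keeps the image inside $\bD(\tfrac{1}{4} e^{-\alpha})$. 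For the strict contraction, I differentiate the quadratic correction in $z$ via the product rule on its three $z$-dependent factors (the two $\qWW[\uw]$'s and the resolvent, using $\partial_z R = -R\, \partial_z A\, R$), bounding each term through $\|\qHH[\partial_z \uw]\|_\op$ and $\|\qHH[\uw]\|_\op$ thanks to \eqref{wmn-06,3}; assumption \eqref{cond-specrg-01a} then yields $|\partial_z h_\zeta(z)| \leq (10\, e^\alpha)^2 (\xi\|\uw_\gI\|_Z^{(\xi)})^2 (2 + \|\uw\|_Z^{(\xi)}) < \tfrac{1}{8}$.

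With the unique fixed point $z_\zeta$ in hand, part~(a) follows: existence gives surjectivity of $Q_\alpha$ onto $D(r_Z)$, uniqueness gives injectivity on the preimage, and biholomorphy follows from the inverse function theorem since $\partial_z Q_\alpha = e^\alpha(1 - \partial_z h_\zeta)$ is non-vanishing. Part~(b) is the chain-rule identity $\partial_\zeta Q_\alpha^{-1}(\zeta) = e^{-\alpha}(1 - \partial_z h_\zeta(z_\zeta))^{-1}$ combined with the bound just derived. For part~(c), the estimate $|Q_\alpha(z) - e^\alpha z| \leq 10\, e^{2\alpha}(\xi\|\uw_\gI\|^{(\xi)})^2$ shows that $|z| < e^{-\alpha} r_Z - 10\, e^\alpha (\xi\|\uw_\gI\|^{(\xi)})^2$ forces $Q_\alpha(z) \in D(r_Z)$ and hence $z \in \boE_\alpha(D(r_Z))$. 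The main obstacle I anticipate is the bookkeeping in the contraction step: controlling the $z$-derivative of the full quadratic expression requires simultaneously handling the derivative of the interaction kernel (hence the appearance of $\partial_z \uw$ in the norm $\|\cdot\|_Z^{(\xi)}$) and of the resolvent, and the latter introduces the extra factor $\|\qHH[\uw]\|_\op$ that accounts for the $(2 + \|\uw\|_Z^{(\xi)})$ in hypothesis \eqref{cond-specrg-01a}.
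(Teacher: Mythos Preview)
Your proposal is correct and follows essentially the same approach as the paper: the same fixed-point function $h_\zeta(z) = z + e^{-\alpha}\zeta - e^{-\alpha}Q_\alpha(z)$, the same computation of $Q_\alpha(z)$ via the vacuum expectation of the Feshbach--Schur map, the same self-mapping and contraction estimates based on \eqref{H-1a}, \eqref{opbound}, \eqref{cond-specrg-01a}--\eqref{cond-specrg-01b}, and the same chain-rule argument for (b) and triangle-inequality argument for (c). The only slip is a sign in your formula for $Q_\alpha(z)$ (the quadratic term enters with a minus sign from \eqref{FchiT}), but this is irrelevant for the absolute-value bounds.
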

%%%%%%%%%%%%%%%%%%%%%%%%%%%%%%%%%%%%%%%%%%%%%%%%%%%%

%%%%%%%%%%%%%%%%%%%%%%%%%%%%%%%%%%%%%%%%%%%%%%%%%%%%
\begin{bem}\label{bem}
Assume that $0< r_Z < \frac{1}{4}$ and $\uw \in \cW_Z^\xi$, 
Eqs.~\eqref{cond-specrg-01a}, \eqref{cond-specrg-01b}, 
\eqref{H-1a}, and 
\begin{align} \label{hypo}
10 e^{\alpha} \, ( \xi \, \|\uw_\gI\|^{(\xi)} )^2 \, + \, 
2 e^{\alpha} \, \xi \, \| \uw_\gI \|^{ (\xi)} 
\ < \ e^{-\alpha} \, r_Z \, .
\end{align}  
Furthermore set 
\begin{align} \label{open} 
\cA_{\uw, \alpha} 
\ := \ &
D\big( e^{-\alpha}r_Z - 10 e^{\alpha} (\xi \|\uw_\gI\|^{(\xi)} )^2 \big)
\: \cap \: 
\big\{ z \in \CC : \ \rRe(z) > 2 e^{\alpha} \xi \|\uw_\gI\|^{ (\xi)} \big\} 
\nonumber \\
\ \neq \ & \emptyset \, .
\end{align}  
Then $\cA_{\uw, \alpha} \subset \boE_\alpha(D_{r_Z})$ 
[see \eqref{contain}] and $H[\uw(z)]$ is bounded invertible 
for every $z \in \cA_{\uw, \alpha}$ [see the text below \eqref{H-1}]. 
The definition of $\cR_\alpha$,  
\begin{align}
\cR_\alpha\big( H[\uw] \big)(\zeta) 
\ = \ 
\hcR_\alpha\big( H[\uw(\boE_\alpha(\zeta))] \big),  
\end{align}  
implies that $\cR_\alpha\big( H[\uw] \big)(\zeta)$ is bounded invertible 
for $\zeta$ belonging to the open set 
$\boE_\alpha^{-1}( \cA_{\uw, \alpha} )$, see Proposition~\ref{prop-isospecSFB}.   
\end{bem}
%%%%%%%%%%%%%%%%%%%%%%%%%%%%%%%%%%%%%%%%%%%%%%%%%%%%

%%%%%%%%%%%%%%%%%%%%%%%%%%%%%%%%%%%%%%%%%%%%%%%%%%%%
\subsection{The Flow Property} \label{subsec-flowprop}
%%%%%%%%%%%%%%%%%%%%%%%%%%%%%%%%%%%%%%%%%%%%%%%%%%%%
%
%%%%%%%%%%%%%%%%%%%%%%%%%%%%%%%%%%%%%%%%%%%%%%%%%%%%
\begin{theorem}\label{thm-fullsemigroupprima}
Suppose that $\uw$ and $\alpha$ satisfy the hypotheses of
Lemma~\ref{lemma: domFchiconditions},
Corollary~\ref{cor:domFchiconditions}, and
Theorem~\ref{thm-specrg}. Suppose furthermore that there exist 
$\tuw \in \cW^\xi $, such that $H[\tuw] = \cR_\alpha(H[\uw]) $ and
$\tuw$ and $\beta$ satisfy the same hypothesis as $\uw$ and $\alpha$
together with \eqref{hypo}. Then
\begin{align}\label{uno}
\boE_{\alpha + \beta, \uw} 
\ = \ 
\boE_{\alpha, \uw} \circ \boE_{\beta, \tuw} \, ,      
\end{align}  
and
\begin{align} \label{dos}
\forall \, z\in D(r_Z): \quad
\cR_\beta \circ \cR_\alpha (H[\uw])(z) 
\ = \ 
\cR_{\alpha+\beta}(H[\uw])(z) \, .
\end{align}  
\begin{proof}
Remark~\ref{bem} implies that 
\begin{align}\label{lelele}
\cR_{\beta}\big( H[\tuw] \big) (z)
\ = \ 
\cR_{\beta}\big[ \cR_\alpha(\uw) \big](z) 
\ = \
\hcR_{\beta}\big[ \hcR_\alpha\big( 
\uw[ \boE_{\alpha, \uw} \circ \boE_{\beta, \tuw}(z) ] \big) \big]
\end{align}  
is bounded invertible, for $z$ in the open set 
$\boE_{\beta}^{-1}( \cA_{\tuw, \beta} )$. It follows from 
Theorem~\ref{thm-semigroup} that 
\begin{align} \label{eq-flowprop}
\hcR_{\beta}\big[ \hcR_\alpha\big( 
\uw[ \boE_{\alpha, \uw} \circ \boE_{\beta, \tuw}(z) ] \big) \big]
\ = \ 
\hcR_{\alpha+\beta}\big( 
\uw[ \boE_{\alpha, \uw} \circ \boE_{\beta, \tuw}(z)] \big) \, . 
\end{align}  
Using Definitions~\ref{def-Qgamma} and \ref{E} together with 
\eqref{eq-vacexpz} we get that 
\begin{align}
Q_{\alpha+\beta}[ \boE_{\alpha, \uw} \circ \boE_{\beta, \tuw}(z) ] 
\ =  \ z \, , 
\end{align} 
which implies that 
$\boE_{\alpha, \uw} \circ \boE_{\beta, \tuw}(z) = 
Q_{\alpha+\beta}^{-1}(z) = \boE_{\alpha+\beta, \uw}(z)$, see
Definition~\ref{E}. This leads us to
\begin{align}\label{europ}
\boE_{\alpha + \beta, \uw} (z) 
\ = \  
\boE_{\alpha, \uw} \circ \boE_{\beta, \tuw}(z) \, ,
\end{align}  
for every $z$ in $\boE_{\beta}^{-1}( \cA_{\tuw, \beta} )$.  Since the
functions in \eqref{europ} are analytic, the same holds true for every
$z \in D(r_Z)$, which yields \eqref{uno}. Now, Eq.~\eqref{uno}
together with \eqref{lelele} and \eqref{eq-flowprop} imply that
\begin{align}
\forall\, z \in \boE_{\beta}^{-1}(\cA_{\tuw, \beta}): \quad
\cR_\beta \circ \cR_\alpha\big( H[\uw] \big)(z) 
\ = \ 
\cR_{\alpha+\beta}\big( H[\uw] \big)(z) \, ,
\end{align}  
and, therefore, an analyticity argument leads us to \eqref{dos}.  
\end{proof}
\end{theorem}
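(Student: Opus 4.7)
The plan is to establish \eqref{uno} first on a suitable non-empty open subset of $D(r_Z)$, then extend it to all of $D(r_Z)$ by analytic continuation, and finally derive the flow identity \eqref{dos} as a direct consequence, using Theorem~\ref{thm-semigroup} (the flow property of the unrescaled map $\hcR$) as the main engine.

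More concretely, I would pick $z$ in the open set $\boE_{\beta,\tuw}^{-1}(\cA_{\tuw,\beta}) \subseteq D(r_Z)$, which is non-empty by Remark~\ref{bem} applied to $(\tuw,\beta)$. Setting $\zeta := \boE_{\beta,\tuw}(z)$ and $\eta := \boE_{\alpha,\uw}(\zeta)$, the hypothesis $H[\tuw] = \cR_\alpha(H[\uw])$ together with the definition of $\cR_\alpha$ gives
\begin{align*}
H[\tuw(\zeta)] \ = \ \cR_\alpha(H[\uw])(\zeta) \ = \ \hcR_\alpha\bigl(H[\uw(\eta)]\bigr).
\end{align*}
Applying $\hcR_\beta$ to both sides and invoking Theorem~\ref{thm-semigroup} — whose hypotheses are exactly the domain and invertibility conditions assumed for $(\uw,\alpha)$ and $(\tuw,\beta)$ via Lemma~\ref{lemma: domFchiconditions} and Corollary~\ref{cor:domFchiconditions} — yields $\hcR_\beta(H[\tuw(\zeta)]) = \hcR_{\alpha+\beta}(H[\uw(\eta)])$. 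Taking the vacuum expectation of both sides and using $\langle \cR_\beta(H[\tuw])(z)\rangle_\Omega = z$ from \eqref{eq-vacexpz-2} together with $Q_\gamma(\cdot) = \langle \hcR_\gamma(\cdot)\rangle_\Omega$, I obtain $Q_{\alpha+\beta}(\eta) = z$, and hence $\eta = \boE_{\alpha+\beta,\uw}(z)$ by the injectivity part of Theorem~\ref{thm-specrg}(a). This is \eqref{uno} pointwise on the chosen open set.

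Both sides of \eqref{uno} are holomorphic on all of $D(r_Z)$ by Theorem~\ref{thm-specrg}(a), so the identity theorem propagates the equality from the non-empty open set $\boE_{\beta,\tuw}^{-1}(\cA_{\tuw,\beta})$ to every $z \in D(r_Z)$. With \eqref{uno} in hand, \eqref{dos} follows by unwinding definitions: for every $z \in D(r_Z)$,
\begin{align*}
\cR_\beta \circ \cR_\alpha(H[\uw])(z)
\ = \ & \cR_\beta(H[\tuw])(z) \ = \ \hcR_\beta\bigl(H[\tuw(\boE_{\beta,\tuw}(z))]\bigr) \\
\ = \ & \hcR_\beta \circ \hcR_\alpha\bigl(H[\uw(\boE_{\alpha,\uw}\circ\boE_{\beta,\tuw}(z))]\bigr) \\
\ = \ & \hcR_{\alpha+\beta}\bigl(H[\uw(\boE_{\alpha+\beta,\uw}(z))]\bigr)
\ = \ \cR_{\alpha+\beta}(H[\uw])(z),
\end{align*}
where the second line uses $H[\tuw] = \cR_\alpha(H[\uw])$ together with \eqref{uno}, and the third line applies Theorem~\ref{thm-semigroup}.

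The main obstacle I anticipate is the bookkeeping needed to guarantee that every operator arising above lies in the domain of the corresponding renormalization map, so that Theorem~\ref{thm-semigroup} can be invoked at each step. Specifically, one must verify that after applying $\hcR_\alpha$ the result (parametrized via $\tuw$) still satisfies the invertibility bounds \eqref{H-0}–\eqref{H-1} at the scale $\beta$ and falls inside the biholomorphic range of $Q_\beta$; this is exactly what the separate compatibility hypotheses on $(\uw,\alpha)$ and $(\tuw,\beta)$ in the statement secure, and it is also why the proof cannot collapse these two sets of assumptions into a single one on $\uw$.
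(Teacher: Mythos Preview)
Your proposal is correct and follows essentially the same route as the paper: restrict to the non-empty open set $\boE_{\beta,\tuw}^{-1}(\cA_{\tuw,\beta})$, apply Theorem~\ref{thm-semigroup} to pass from $\hcR_\beta\circ\hcR_\alpha$ to $\hcR_{\alpha+\beta}$, take the vacuum expectation to obtain $Q_{\alpha+\beta}(\eta)=z$, invert via Theorem~\ref{thm-specrg}(a), and then extend by analyticity. One small slip: in your final displayed chain for \eqref{dos} you invoke Theorem~\ref{thm-semigroup} ``for every $z\in D(r_Z)$'', but that theorem requires $0\in\rho(H[\uw(\eta)])$, which is only guaranteed on the open set; the paper handles this exactly as you did for \eqref{uno}, namely by first establishing \eqref{dos} on $\boE_{\beta,\tuw}^{-1}(\cA_{\tuw,\beta})$ and then extending by analyticity.
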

%%%%%%%%%%%%%%%%%%%%%%%%%%%%%%%%%%%%%%%%%%%%%%%%%

%%%%%%%%%%%%%%%%%%%%%%%%%%%%%%%%%%%%%%%%%%%%%%%%%
\subsection*{Acknowledgement}
Research supported by CONACYT, FORDECYT–PRONACES 429825/2020 (proyecto
apoyado por el FORDECYT–PRONACES, PRONACES/429825), recently renamed
project CF-2019 / 429825. This work was supported by the project
PAPIIT–DGAPA–UNAM IN114925. M.~B.\ is a Fellow of the Sistema
Nacional de Investigadores (SNI).
%%%%%%%%%%%%%%%%%%%%%%%%%%%%%%%%%%%%%%%%%%%%%%%%%

%%%%%%%%%%%%%%%%%%%%%%%%%%%%%%%%%%%%%%%%%%%%%%%%%
% \bibliographystyle{abbrv}
% \bibliography{/home/vbach/ALLES/COMPUTER/BIB/volle}
% \end{document}
%%%%%%%%%%%%%%%%%%%%%%%%%%%%%%%%%%%%%%%%%%%%%%%%%

%%%%%%%%%%%%%%%%%%%%%%%%%%%%%%%%%%%%%%%%%%%%%%%%%
\end{document}